\documentclass[journal]{IEEEtran}
\usepackage{graphicx}
\usepackage{amsfonts} 
\usepackage{amsmath,paralist}
\usepackage{arydshln}
\usepackage{mleftright}
\usepackage{braket}
\usepackage{multirow}
\usepackage{blkarray}
\usepackage{enumitem}
\usepackage{color}
\usepackage{amsthm}
\usepackage{nccmath}
\usepackage{cite}
\usepackage{fltpoint}
\usepackage{enumitem}
\usepackage{soul}
\usepackage{caption}

\usepackage{cancel}

\usepackage{algcompatible}
\usepackage{algorithm}

\makeatletter
\newcommand{\algmargin}{\the\ALG@thistlm}
\algnewcommand{\PARSTATE}[1]
{\STATE\parbox[t]{\dimexpr\linewidth-\algmargin}
{#1\vspace{0.1cm}}}

\newcommand{\titlemath}[1]{\texorpdfstring{$#1$}{#1}}

\newcommand{\rotilde}{\rotatebox[origin=c]{270}{$\sim$}}

\usepackage{tikz}
\usetikzlibrary{decorations.pathreplacing}
\usepackage{physics}

\newtheorem{theorem}{Theorem}
\newtheorem{corollary}{Corollary}
\newtheorem{lemma}{Lemma}
\newtheorem{remark}{Remark}
\newtheorem{definition}{Definition}

\newcommand{\F}{\mathbb{F}}
\newcommand{\bl}[1]{{\color{blue}#1}}

\usepackage[hyphens]{url}
\usepackage[hidelinks,colorlinks=true,urlcolor= blue,linkcolor = blue,citecolor = red]{hyperref}

\begin{document}

\title{Theory of Communication Efficient \\Quantum Secret Sharing}
\author{
\IEEEauthorblockN{\ \\Kaushik Senthoor and Pradeep Kiran Sarvepalli\\}
\IEEEauthorblockA{Department of Electrical Engineering
\\Indian Institute of Technology Madras
\\Chennai 600 036, India}
}

\maketitle

\begin{abstract}
A $((k,n))$ quantum threshold secret sharing (QTS) scheme is a quantum cryptographic protocol for sharing a quantum secret among $n$ parties such that the secret can be recovered by any $k$ or more parties while $k-1$ or fewer parties have no information about the secret. Despite extensive research on these schemes, there has been very little study on optimizing the quantum communication cost during recovery. Recently, we initiated the study of communication efficient quantum threshold secret sharing (CE-QTS) schemes. These schemes reduce the communication complexity in QTS schemes by accessing $d>k$ parties for recovery; here $d$ is fixed ahead of encoding the secret. In contrast to the standard QTS schemes which require $k$ qudits for recovering each qudit in the secret, these schemes have a lower communication cost of $\frac{d}{d-k+1}$. In this paper, we further develop the theory of communication efficient quantum threshold schemes.
Here, we propose universal CE-QTS schemes which reduce the communication cost for all $d>k$ simultaneously.
We provide a framework based on ramp quantum secret sharing to construct CE-QTS and universal CE-QTS schemes. We give another construction for universal CE-QTS schemes based on Staircase codes. We derived a lower bound on communication complexity and show that our constructions are optimal. Finally, an information theoretic model is developed to analyse CE-QTS schemes and the lower bound on communication complexity is proved again using this model.
\end{abstract}

\begin{IEEEkeywords}
quantum secret sharing, communication complexity, quantum cryptography, threshold secret sharing schemes, Staircase codes.
\end{IEEEkeywords}

\section{Introduction}

%\subsection{Motivation.}

\IEEEPARstart{Q}{uantum} secret sharing schemes are  protocols that enable the secure distribution of  a secret among mutually collaborating parties so that only certain collections of parties can recover the
secret. 
\noindent
Quantum secret sharing schemes were first proposed by Hillery {\em et al.} for classical secrets \cite{hillery99}.
Subsequently, Cleve {\em et al.} proposed quantum secret sharing schemes for quantum secrets \cite{cleve99}.
Since these pioneering works, there has been extensive progress in this field, and it continues to be actively researched \cite{karlsson99,smith99,gottesman00,bandyopadhyay00,nascimento01,imai03,ps10,ben12,markham08,senthoor19,tyc02, qin20}. 
Quantum secret sharing has also been experimentally demonstrated by many groups \cite{tittel01,hao11,bogdanski08,bell14,schmid05,gaertner07,lance04, wei13, pinnell20}.
The progress has been rapid with demonstrations over distances as large as 50 km\cite{wei13}.
Furthermore, non-binary protocols over 11-dimensional qudits have also been demonstrated \cite{pinnell20}. 

Quantum secret sharing can be done under various settings: with classical data as the secret or an arbitrary quantum state as the secret, with parties having classical and quantum data (hybrid) or only quantum data, with or without pre-existing quantum entanglement shared among the parties, to name a few. 
Here, we consider the setting where the secret is an arbitrary quantum state, with all the parties having only quantum data and no pre-existing quantum entanglement.
In this paper, we are interested in optimizing the resources needed for quantum secret sharing. 
Specifically, we study the communication efficient threshold quantum secret sharing (CE-QTS) schemes and propose the improved model of universal CE-QTS schemes. 

The most popular quantum secret sharing scheme is the quantum threshold secret sharing scheme (QTS). 
In this scheme, out of the total $n$ parties, a minimum of $k$ parties are required to recover the secret. Also, here we look at only perfect QTS schemes, where any set of less than $k$ parties should not have any information on the secret. 
It is often denoted as a $((k,n))$ scheme. 
The state given to each party is called the share of the party.
After the secret has been shared, the parties who plan to recover the secret combine their
shares and reconstruct the secret. 
Alternatively, the parties involved in the recovery could communicate all or part of their share to a third party designated as the combiner.
This is the secret recovery model we focus in this paper.
The amount of quantum communication to the combiner for recovering the secret is called the communication complexity.
For sharing a secret of size $m$ qudits under this setting, a standard $((k,n))$ scheme (for example, \cite{cleve99}) requires $m n$ qudits to be shared for share distribution ($m$ qudits for each party) and 
at least $mk$ qudits for recovery. 
A slightly different model where one of the collaborating parties itself can act as the combiner and the remaining parties communicate their shares (in part or in full) to this party to recover the secret is also possible.
The definition of communication cost will be slightly different in this model.
However, constructions and bounds discussed in this paper can be adapted to such a model as well.

\subsection{Previous work}
The analogous problem of reducing communication complexity has been studied for classical secret sharing schemes \cite{wang08,bitar16,bitar18,huang16,huang17,penas18} but not as much in the quantum setting. Ref.~\cite{nascimento01} and \cite{ben12} aim to reduce the quantum communication during secret distribution to the parties but do not look at reducing the quantum communication cost during secret recovery. Only recently, \cite{senthoor19} showed that the quantum communication cost during secret recovery can 
be reduced by using a subset of $d$ parties whose cardinality is more than the threshold $k$ required to recover the secret. This scheme is called $((k,n,d))$ communication efficient quantum secret sharing (CE-QTS) scheme.
These gains can be significant and for a $((k,n=2k-1,d))$ threshold scheme, it was shown that the 
gains in communication complexity of recovery per secret qudit can be as large as $O(k)$. 
For sharing a secret of $m$ qudits, this scheme requires $mn$ qudits to be shared for secret distribution, $mk$ qudits for secret recovery when accessing $k$ parties and $dm/(d-k+1)$ qudits when accessing $d$ parties.
However, the improvement in communication cost only works for a fixed value of $d$ in the range of $k<d\leq n$. 
The value of $d$ is decided prior to encoding of the secret and cannot be changed.

\subsection{ Contributions}
In this paper, we develop the theory of communication efficient quantum secret sharing schemes. 
Specifically, we address the problem of designing quantum threshold schemes that are universal in the sense that any subset of parties of an arbitrary size greater than $k$ would provide further gains in communication cost during recovery. 
This is the first such class of universal communication efficient quantum threshold secret sharing schemes where the number of parties contacted for secret recovery can be varied from 
$k$ to $n$.

First, we give a framework for constructing CE-QTS schemes from a combination of ramp QSS schemes and threshold schemes.
We also propose a construction of CE-QTS schemes for both fixed $d$ and universal $d$ with this framework using the ramp secret sharing schemes proposed in \cite{ogawa05}.
This framework can also be used to derive other constructions for CE-QTS schemes by using different ramp QSS schemes.

Second, we propose a class of universal CE-QTS schemes based on the Staircase codes.
These schemes are inspired by the classical communication efficient secret sharing schemes of \cite{bitar16,bitar18}.
The constructions for these classical schemes are also related to codes for distributed storage aimed at reducing communication cost\cite{rashmi11}.

The constructions for universal CE-QTS schemes proposed in this paper, when an arbitrary $d\geq k$ number of parties are contacted, achieve the same communication complexity as that of fixed $d$.
So there is no penalty in communication complexity with the increased flexibility to change $d$. The universal CE-QTS constructions provide the same storage cost and communication cost (normalized to secret size) as the CE-QTS constructions. But the universal CE-QTS constructions need to have larger secret sizes to provide communication efficiency for various values of $d$. For a short summary of our constructions, refer Table \ref{tab:contributions}. 

Third, we derive lower bounds on the communication complexity of CE-QTS schemes (both fixed $d$ and universal).
We also propose an information theoretic model of CE-QTS schemes and prove that our constructions are optimal with respect to both share size and communication cost. 
The information theoretic model is used to give an alternative proof for the bound on communication cost.

Some preliminary results of this paper are discussed in the upcoming conference publication \cite{senthoor20}.
\begin{table*}[t]
\begin{center}
\begin{tabular}{|l|c|c|c|l|}
\hline & Number of parties & Secret size, $m$ & Communication & Dimension of
\\ & accessed by combiner, $d$ & & cost, CC$_n(d)/m$ & qudits, $q$ (prime)
\\\hline QTS\cite{gottesman00} & $d=k$, fixed & 1 & $k$ & $\geq 2k-1$
\\\hline CE-QTS (Staircase codes) \cite{senthoor19} & $k\leq d\leq n$, fixed & $d-k+1$ & $\frac{d}{d-k+1}$ & $>2k-1$
\\\hline CE-QTS (Concatenation, Corollary \ref{co:ramp-fixed-ceqts-constn}) & $k\leq d\leq n$, fixed & $d-k+1$ & $\frac{d}{d-k+1}$ & $>d+k-1$
\\\hline Universal CE-QTS (Staircase codes, Theorem \ref{th:staircase-constn-univ-ceqts}) & $k\leq d\leq n$, variable & lcm$\{1,2,\hdots,k\}$ & $\frac{d}{d-k+1}$ & $\geq 2(2k-1)$
\\\hline Universal CE-QTS (Concatenation, Corollary \ref{co:ramp-univ-ceqts-constn}) & $k\leq d\leq n$, variable & lcm$\{1,2,\hdots,n-k+1\}$ & $\frac{d}{d-k+1}$ & $>n+k-1$
\\\hline
\end{tabular}
\end{center}
\captionsetup{justification=justified}
\caption{Parameters of various $((k,n))$ QTS constructions. Here $2\leq k\leq n\leq 2k-1$.
For all these constructions, the individual share size is $m$ and CC$_n(k)/m=k$.
\label{tab:contributions}}
\end{table*}

\subsection{Organization}
We begin with a brief review of quantum secret sharing schemes in Section~\ref{sec:bg}.
Then we give a concrete illustration of the universal communication efficient quantum secret sharing schemes in Section~\ref{s:example}.
In Section \ref{s:framework}, we propose the Concatenation framework for constructing % $((k,n,d))$ 
CE-QTS schemes from ramp and threshold QSS schemes.
We also extend this framework to construct universal CE-QTS schemes.
In Section~\ref{s:iv}, we give a construction of universal CE-QTS schemes based on Staircase codes. 
We derive lower bounds on the communication complexity of CE-QTS schemes in Section \ref{s:v}.
In Section \ref{s:vi}, we propose an information theoretic model for studying CE-QTS schemes.
Finally, we conclude with a brief sketch of further directions of research.

\section{Background} \label{sec:bg}
\subsection{Notation}
Let $q$ be a prime and $\mathbb{F}_q$ denote a finite field with $q$ elements. 
We take the standard basis of $\mathbb{C}^q$ to be  $\{\ket{x}\mid x\in \F_q \}$.
We denote $\ket{x_1x_2\cdots x_\ell}$  by $\ket{\underline{x}}$ where $\underline{x}$ is the vector with the entries $(x_1,x_2,\hdots,x_\ell)$.
The standard basis for $\mathbb{C}^{q^n}$ is taken to be $\{\ket{\underline{x}}\mid \underline{x}\in \F_q^n \}$.
For any invertible matrix $K\in\F_q^{\ell\times\ell}$, we define the unitary operation $U_K$ 
\begin{eqnarray*}
U_K\ket{\underline{x}}  = \ket{K\underline{x}} =\ket{\underline{y}},%\label{eq:UK-defn} %eq:mtx_unitary
\end{eqnarray*}
where $\underline{y}= (y_1,\ldots, y_n)$ and $y_i = \sum_{j}K_{ij}x_j$.
We define the two qudit unitary operator $L_\alpha$ as 
\begin{eqnarray*}
L_\alpha\ket{i}_c\ket{j}_t  = \ket{i}_c\ket{j+\alpha i}_t, %\label{eq:Lt-defn}
\end{eqnarray*}
where  $i, j\in\F_q$ and $\alpha\in\F_q$ is a constant. The subscript $c$ and $t$ indicate that they are control and target qudits respectively. 
This operator generalizes the CNOT gate. 

We  use the notation $[n]:=\{1,2,\ldots, n \}$ and $[i,j]:=\{i, i+1,\ldots, j \}$.
Let $V$ be a $m\times n$ matrix  and  $A \subseteq [m]$, $B\subseteq [n]$.
We denote by $V_A$, the submatrix of 
$V$ formed by taking the rows indexed by entries in $A$.
Similarly, we can form a submatrix of $V$ by taking the columns of $V$. This is indicated as $V^B$. 
We can also form a submatrix $V_A^B$ of $V$ which takes rows indexed by $A$ and columns indexed by $B$.
For a matrix $V\in \F_q^{m\times n}$, the notation $\ket{V}$ indicates the state $\ket{v_{11}v_{21}\hdots v_{m1}}$$\ket{v_{12}v_{22}\hdots v_{m2}}$$\hdots$$\ket{v_{1n}v_{2n}\hdots v_{mn}}$ where $v_{ij}$ is the element of $V$ in $i$th row and $j$th column.
Let $A\in \F_q^{m\times n} $ matrix and $K$ is an invertible $m\times m $ matrix, then we can transform the state 
$\ket{A}$ to $\ket{KA}$ by the unitary operation $U_K^{\otimes n}$.
We refer to this operation as applying $K$ on $\ket{A}$ to obtain $\ket{KA}$.

\subsection{Quantum secret sharing (QSS)}
A quantum secret sharing scheme is a protocol to encode the secret in arbitrary quantum state and share it among $n$ parties such that certain subsets of parties, called authorized sets, can recover the secret (recoverability) and certain subsets of parties, called unauthorized sets, do not have any information about the secret (secrecy). The access structure $\Gamma$ of a QSS scheme is defined as
\begin{eqnarray*}
\Gamma=\{X\subseteq[n]:X\text{ is an authorized set}\}.
\end{eqnarray*}

A QSS scheme is called perfect quantum secret sharing scheme if any subset of the $n$ parties is either an authorized set or an unauthorized set and non-perfect otherwise. For non-perfect schemes, some subsets of the $n$ parties are allowed to have partial information about the secret. These sets are called intermediate sets.

A concrete realization of a quantum secret sharing scheme is specified by giving an encoding for the basis states of the secret. 
An encoding has to satisfy the properties of recoverability and secrecy to realize a QSS scheme.
\begin{definition}
A quantum secret sharing scheme for an access structure $\Gamma$ is the encoding and distribution of the secret in an arbitrary quantum state among $n$ parties such that
\begin{itemize}
\item (Recoverability) any authorized set $A\in\Gamma$ can recover the secret \textit{i.e.} there exists some recovery operation which can decode the secret from the shares in $A$,
\item (Secrecy) any unauthorized set $B\notin\Gamma$ has no information about the secret.
\end{itemize}
\label{de:qss}
\end{definition}
In a pure state QSS scheme, the encoding is such that the combined state of all shares is a pure state whenever the secret is in pure state. 
Otherwise, the scheme is called mixed state scheme.
\begin{lemma}[Mixed state schemes from pure state schemes]
\text{\cite[Theorem 3]{gottesman00}}
\label{lm:mixed-to-pure}
Any mixed state QSS scheme can be described as a pure state QSS scheme with one share discarded.
\end{lemma}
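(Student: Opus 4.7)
The plan is to realize the mixed state scheme as the Stinespring dilation of a quantum channel and identify the dilating system as an extra share.

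First, I would view the encoding operation of the mixed state QSS scheme as a quantum channel $\mathcal{E}$ from the secret system $S$ to the joint share system $A_1\otimes\cdots\otimes A_n$. By Stinespring's dilation theorem, there exists an auxiliary system $E$ and an isometry $V:S\to A_1\otimes\cdots\otimes A_n\otimes E$ such that $\mathcal{E}(\rho)=\operatorname{Tr}_E[V\rho V^\dagger]$ for every input secret state $\rho$.

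Second, I would define a new QSS scheme on $n+1$ shares in which parties $1$ through $n$ hold $A_1,\ldots,A_n$ as before and an additional (discardable) share holds the register $E$. The encoding of a pure secret $\ket{s}$ is $V\ket{s}$, which is by construction a pure state on the joint system of all $n+1$ shares. Hence the new scheme is a pure state scheme in the sense of the definition preceding the lemma. Tracing out the extra share reproduces $\mathcal{E}(\ket{s}\bra{s})$ exactly, so the reduced state on shares $A_1,\ldots,A_n$ coincides with that of the original mixed state scheme.

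Third, I would show that the access structure is preserved under this correspondence, so that the constructed object is indeed a QSS scheme realizing the original. For any authorized set $A\in\Gamma$ of the original scheme, the reduced state on shares indexed by $A$ is identical in both schemes, because the complementary shares (including the extra $E$-share in the new scheme) are traced out. The recovery operation provided by the original scheme therefore still decodes the secret from $A$, so $A$ is authorized in the new scheme as well. Likewise, any unauthorized $B\notin\Gamma$ has the same reduced density operator in both schemes and hence remains unauthorized, giving secrecy. Consequently, discarding the $E$-share from the pure state scheme recovers the original mixed state scheme and its access structure.

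The main obstacle I anticipate is the Stinespring dilation step: one must check that the encoding of the mixed state scheme, which a priori is specified only on pure secret states via a density operator assignment $\ket{s}\bra{s}\mapsto\rho_s$, does extend to a completely positive trace-preserving map so that Stinespring applies, and that $V$ can be taken as a genuine isometry on the full secret Hilbert space (by extending linearly from a fixed basis). Once this is in place, the access structure correspondence follows routinely from the fact that tracing out the purifying subsystem reproduces the original reduced states on every subset of $[n]$.
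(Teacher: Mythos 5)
Your proposal is correct and follows essentially the same route as the proof the paper is citing (Gottesman's Theorem 3): purify the encoding via a Stinespring isometry and hand the purifying register to a fictitious extra party, so that tracing it out recovers the original mixed-state scheme with all reduced states on subsets of $[n]$ unchanged. The only point you leave implicit is the status of sets that \emph{contain} the new share, which must also be classified for the $(n+1)$-party object to be a QSS scheme; this is routine in the pure-state setting (the complement of an unauthorized set is authorized, cf. Lemma~\ref{lm:pu-auth}) and does not affect the argument.
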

The no-cloning theorem implies that the complement of an authorized set is unauthorized set. 
In pure state schemes the converse also holds as given in the following result. 
\begin{lemma}[Authorized sets in pure state schemes]
\cite[Corollary 2]{gottesman00}
\label{lm:pu-auth}
In a pure state quantum secret sharing scheme, complement of any unauthorized set is an authorized set.
\end{lemma}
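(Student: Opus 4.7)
The plan is to exploit the fact that in a pure state scheme the global state on the $n$ shares is pure, and then convert the secrecy condition for $B$ into a recoverability condition for $\overline{B}$ via a purification/decoupling argument. Concretely, introduce an auxiliary reference system $R$ of the same dimension as the secret and consider the maximally entangled state $\ket{\Phi}_{RS}$ between $R$ and the secret register $S$. Applying the encoding on $S$ yields a pure state $\ket{\Psi}_{R[n]}$ of $R$ together with the $n$ shares, and every possible encoded secret state can be extracted from $\ket{\Psi}$ by measuring $R$ appropriately.

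First, I would translate the statement ``$B$ is unauthorized'' into a decoupling condition. Secrecy says that the reduced density matrix $\rho_B(\phi) = \mathrm{Tr}_{\overline{B}}(\ket{\psi_\phi}\!\bra{\psi_\phi})$ is the same for every pure secret $\ket{\phi}$, hence also for every mixture. Fed through the maximally entangled purification, this equality is equivalent to
\begin{equation*}
\rho_{RB} \;=\; \rho_R \otimes \rho_B,
\end{equation*}
i.e.\ the reference $R$ is completely decoupled from $B$. Because $\ket{\Psi}_{R[n]}$ is a pure state (this is the only place purity of the scheme is used), the full purification of $\rho_R$ must reside on $\overline{B}$ alone.

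Second, I would use Uhlmann's theorem to extract the secret from $\overline{B}$. The marginal $\rho_{R\overline{B}}$ is a purification of $\rho_R$; the state $\ket{\Phi}_{RS}$ together with the canonical embedding of $S$ into a subsystem of $\overline{B}$ (possibly enlarged by an ancilla) is another purification. Hence there exists an isometry $V$ acting only on $\overline{B}$ that maps $\rho_{R\overline{B}}$ to the joint state $\ket{\Phi}_{RS}\otimes\sigma$ for some fixed state $\sigma$ on the residual subsystem. Applying $V$ and discarding the ancilla gives a recovery operation that reproduces the secret on a subsystem of $\overline{B}$, so $\overline{B}\in\Gamma$.

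The main obstacle is justifying the passage from ``$B$ has no information about the secret'' to ``$R$ is decoupled from $B$.'' This is the standard information-disturbance equivalence, but making it precise requires specifying that secrecy holds for arbitrary secret states, including coherent superpositions, and not merely for a fixed basis; once that is in place, Uhlmann's theorem delivers the recovery map essentially for free and the conclusion is immediate.
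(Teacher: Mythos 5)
The paper does not prove this lemma at all: it is imported verbatim as \cite[Corollary 2]{gottesman00}, so there is no in-paper argument to compare against. Your purification proof is correct and is essentially the standard information--disturbance argument underlying Gottesman's result: secrecy for $B$ against arbitrary (superposed) secrets is equivalent to the decoupling $\rho_{RB}=\rho_R\otimes\rho_B$ for a reference $R$ maximally entangled with the secret; purity of the global state then forces the purification of $\rho_R$ to sit inside $\overline{B}$, and Uhlmann's theorem yields an isometry on $\overline{B}$ restoring the maximal entanglement with $R$, which by linearity recovers an arbitrary secret. The one point you rightly flag --- that ``no information'' must be interpreted as the reduced state on $B$ being constant over all secret states, not just basis states, so that the induced channel to $B$ is a constant channel --- is exactly the hypothesis built into the secrecy definition used here, so the argument closes. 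Gottesman phrases the same step through quantum error-correction/erasure-correction conditions rather than explicitly through Uhlmann, but the content is the same; your version is, if anything, the cleaner modern presentation.
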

We use the following notation for parameters of QSS schemes: 
$q$ is the fixed dimension of all the qudits in the scheme, $m$ gives the size of the secret in qudits and $w_i$ gives the size of the $i$th share in qudits.

\subsection{Quantum threshold secret sharing (QTS)}
An important class of perfect quantum secret sharing schemes are the quantum threshold secret sharing schemes.
In threshold schemes, a set of parties is either authorized or unauthorized based on the number of parties in the set.
\begin{definition}[Quantum threshold scheme]
A $((k,n))$ quantum threshold secret sharing scheme for $1<k\leq n\leq 2k-1$ is a QSS scheme with $n$ parties where any $k$ or more parties can recover the secret, but $k-1$ or fewer parties have no information on the secret.
\end{definition}
If $n>2k$, then there exist two non-overlapping authorized sets which can give two copies of the secret thus violating no-cloning theorem.

Cleve {\em et al.} \cite{cleve99} have given a construction for $((k,n))$ QTS schemes as follows.
Consider the case of $n=2k-1$. Take $m=1$ and a prime $q\geq 2k-1$. 
The encoding for a basis state of the secret $s\in \mathbb{F}_q$ is given by the 
following superposition.
\begin{gather}
\ket{s}\ \mapsto\sum_{\underline{r}\in\mathbb{F}_q^{k-1}}\ket{v_1(\underline{r},s)}\ket{v_2(\underline{r},s)}\hdots\ket{v_n(\underline{r},s)}
\label{eq:qts-enc}
\end{gather}
Here $\underline{r}=(r_1,r_2,\hdots,r_{k-1})\in\mathbb{F}_q^{k-1}$ and $v_i(\underline{r},s)\in\mathbb{F}_q$ is the evaluation of the polynomial
\begin{eqnarray*}
v_i(\underline{r},s)=r_1+r_2x_i+\hdots+r_{k-1}x_i^{k-2}+sx_i^{k-1}.
\end{eqnarray*}
where $x_1,x_2,\hdots,x_n$ are distinct constants from $\mathbb{F}_q$.
Each of the $n$ parties is given one qudit from the encoded state.

For example, the encoding for a $((k=2,n=3))$ QTS scheme will be as follows where each qudit has dimension three.
\begin{eqnarray*}
\ket{s}\ \mapsto\sum_{r\in\mathbb{F}_3}&\ket{r}\ket{r+s}\ket{r+2s}
\end{eqnarray*}
To obtain a $((k,n))$ QTS scheme for $n<2k-1$, simply discard $2k-1-n$ shares after encoding the secret in the above scheme. 
\begin{lemma}
\text{\cite{cleve99}}
The encoding in \eqref{eq:qts-enc} provides a $q$-ary $((k,n))$ quantum threshold secret sharing scheme for $n\leq 2k-1$ with the following parameters.
\begin{gather*}
q\geq 2k-1\text{ (prime)}\\
m=1\\
w_1=w_2=\cdots=w_n=1
\end{gather*}
\label{lm:cleve-qts}
\end{lemma}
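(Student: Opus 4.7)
The plan is to establish the two defining properties --- recoverability by any $k$-subset and secrecy against any $(k-1)$-subset --- for the encoding in \eqref{eq:qts-enc}. I handle $n=2k-1$ first, and then reduce the case $n<2k-1$ to it by simply discarding $2k-1-n$ of the shares and invoking Lemma~\ref{lm:mixed-to-pure} to view the result as a mixed state scheme.

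For recoverability, fix $A\subseteq[n]$ with $|A|=k$. The map $(r_1,\ldots,r_{k-1},s)\mapsto (v_i(\underline{r},s))_{i\in A}$ is $\mathbb{F}_q$-linear and its matrix is the $k\times k$ Vandermonde $V_A$ formed from the points $\{x_i:i\in A\}$, which is invertible since the $x_i$ are distinct. Applying the unitary $U_{V_A^{-1}}$ from the notation section to the $k$ shares held by $A$ rotates the encoded superposition into $\sum_{\underline{r}}\ket{r_1}\cdots\ket{r_{k-1}}\ket{s}\otimes\ket{\psi_{A^c}(\underline{r},s)}$, after which the $k$-th qudit of $A$ carries $\ket{s}$ and the secret is extracted. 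This is just coherent Lagrange interpolation and shows any $k$-subset is authorized.

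For secrecy, fix $B\subseteq[n]$ with $|B|=k-1$ and trace out $B^c$ (of size $k$) from the encoded pure state. Since the share qudits are computational basis states, $\braket{v_{B^c}(\underline{r}',s)|v_{B^c}(\underline{r},s)}=1$ iff the two degree-$(k-1)$ polynomials agree at the $k$ points of $B^c$, which forces $\underline{r}'=\underline{r}$. Hence
\[
\rho_B(s)\ \propto\ \sum_{\underline{r}\in\mathbb{F}_q^{k-1}}\ket{v_B(\underline{r},s)}\bra{v_B(\underline{r},s)}.
\]
For fixed $s$, the map $\underline{r}\mapsto (v_i(\underline{r},s)-s x_i^{k-1})_{i\in B}$ has matrix the $(k-1)\times(k-1)$ Vandermonde in $\{x_i:i\in B\}$, hence is a bijection of $\mathbb{F}_q^{k-1}$; the additive shift $s x_i^{k-1}$ is a mere permutation of basis labels. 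Therefore $\rho_B(s)$ is, up to normalization, the maximally mixed state on $(\mathbb{C}^q)^{\otimes (k-1)}$, independent of $s$, giving secrecy.

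For $n<2k-1$, discard $2k-1-n$ shares of the pure scheme above; by Lemma~\ref{lm:mixed-to-pure} this is a legitimate (mixed state) QSS scheme. Any $k$ of the remaining $n$ parties form a $k$-subset of the original $2k-1$ parties and so can still recover; any $(k-1)$-subset already had an $s$-independent reduced state in the pure scheme, and further partial-tracing preserves this, so secrecy persists. The values $m=1$ and $w_1=\cdots=w_n=1$ are immediate from the encoding. The only non-routine step is the secrecy computation; once one observes that the $k$-point complement pins down the polynomial uniquely (killing all off-diagonal cross terms) and the $(k-1)$-point Vandermonde is invertible (uniformizing the diagonal over $\mathbb{F}_q^{k-1}$), the rest is bookkeeping.
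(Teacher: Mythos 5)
The paper offers no proof of this lemma (it is imported from \cite{cleve99}), so your argument has to stand on its own. Your secrecy part does: the $k$ evaluation points of $B^c$ pin down the polynomial (a degree-$\le(k-2)$ difference cannot vanish at $k$ points), which kills the off-diagonal terms, and invertibility of the $(k-1)\times(k-1)$ Vandermonde on the points of $B$ makes $\rho_B$ maximally mixed. The only cosmetic omission is that for a superposed secret you must also kill the cross terms between $s\neq s'$; the same root count (a nonzero degree-$(k-1)$ polynomial cannot vanish at $k$ points) does this, so $\rho_B$ is the maximally mixed state for every input state.

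The recoverability argument, however, has a genuine gap. After $U_{V_A^{-1}}$ the state is $\ket{s}\otimes\sum_{\underline{r}}\ket{\underline{r}}_A\ket{v_{A^c}(\underline{r},s)}_{A^c}$, and the residual factor still depends on $s$. For a superposed secret $\sum_s\alpha_s\ket{s}$ the qudit you declare to "carry $\ket{s}$" therefore remains entangled with the unaccessed shares, and its reduced state is $\sum_s|\alpha_s|^2\ketbra{s}{s}$ --- only the dephased, classical secret has been extracted. Already for $k=2,n=3$ the post-interpolation state $\sum_r\ket{r}_1\ket{s}_2\ket{r+2s}_3$ is not a product of qudit $2$ with qudits $1,3$ once one superposes over $s$. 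Coherent Lagrange interpolation is not by itself recovery; you need the disentangling step that this paper repeatedly emphasizes as the point where quantum recovery departs from classical: since the parties in $A$ now hold $(\underline{r},s)$, they can reversibly overwrite their $\underline{r}$-registers with $v_{A^c}(\underline{r},s)$ (an invertible affine map for fixed $s$), turning the residual into $\sum_{\underline{t}}\ket{\underline{t}}_A\ket{\underline{t}}_{A^c}$, which is independent of $s$, so $\ket{s}$ genuinely factors out. Alternatively you could avoid the explicit recovery altogether: your secrecy computation shows every $(k-1)$-set of the pure $n=2k-1$ scheme is unauthorized, so Lemma~\ref{lm:pu-auth} makes every $k$-set authorized, and the $n<2k-1$ case then follows by discarding shares exactly as you describe.
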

This scheme can be used to encode a secret of $m>1$ qudits by individually encoding each qudit in the secret.

\subsection{Storage and communication complexity}
The storage cost of a secret sharing scheme is directly related to the sizes of the shares. 
In this context the following result has been shown about the size of a share. 
\begin{lemma}[Share size, \cite{gottesman00}] \label{lm:qts-opt}
The size of each share in a threshold QSS scheme should be at least as large as the size of the secret.
\end{lemma}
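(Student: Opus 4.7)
The plan is to show that for any party $i\in[n]$, the share size $w_i$ is at least $m$ by exhibiting a minimal authorized set containing $i$ and exploiting the sharp gap between the recoverability guarantee on that set and the secrecy guarantee on the set obtained by removing $i$. Since the scheme is a $((k,n))$ threshold scheme, any $k$ parties form an authorized set and any $k-1$ parties form an unauthorized set, so I can always pick $k-1$ other parties to form an authorized set $A\ni i$ such that $A\setminus\{i\}$ has exactly $k-1$ parties and is unauthorized. This is the structural hook the argument needs.

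To make the information-theoretic comparison precise in the quantum setting, I would purify the secret by entangling it maximally with a reference system $R$ of dimension $q^m$, so that $H(R)=m\log q$, and then apply the encoding to obtain a joint state on $R$ together with all $n$ shares. Perfect recoverability of $A$ gives $I(R;A)=2H(R)=2m\log q$ (the reference becomes essentially maximally correlated with the recovered secret), while the secrecy of $A\setminus\{i\}$ gives $I(R;A\setminus\{i\})=0$. The chain rule for quantum mutual information then yields
\begin{equation*}
I(R;T_i\mid A\setminus\{i\})=I(R;A)-I(R;A\setminus\{i\})=2m\log q,
\end{equation*}
and the standard bound $I(R;T_i\mid A\setminus\{i\})\leq 2H(T_i)\leq 2w_i\log q$ delivers $w_i\geq m$, as required.

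A minor technical wrinkle is that the definition of QSS above allows mixed state schemes, so I should first invoke Lemma \ref{lm:mixed-to-pure} to view any threshold scheme as a pure state scheme with an additional discarded share; this pure state view is what makes the purified calculation above clean, since the joint state on $R$ and the shares is then pure and the entropies are well behaved. The main obstacle, if any, is justifying the two endpoint equalities $I(R;A)=2m\log q$ and $I(R;A\setminus\{i\})=0$ from the definitional notions of recoverability and secrecy; for the first I would argue that the existence of a recovery map on $A$ producing the secret implies the reduced state on $R A$ has $H(R|A)=0$, and for the second, that ``no information'' means the reduced state on $R (A\setminus\{i\})$ factorizes as $\rho_R\otimes\rho_{A\setminus\{i\}}$. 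Once these translations are in place, the chain rule plus the dimension bound finishes the argument uniformly for every $i\in[n]$.
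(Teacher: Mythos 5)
The paper itself does not prove this lemma: it is imported verbatim from \cite{gottesman00}, and the closest in-paper analogue is the corollary to Lemma~\ref{th:entropy-unauth} in Section~\ref{s:vi}, which gets $\mathsf{S}(S_j)\geq\mathsf{S}(\mathcal{S})$ by applying Imai et al.'s entropy inequality \cite{imai03} with $A=\{j\}$ and a $(k-1)$-subset $B$. Your argument is a self-contained version of essentially that information-theoretic route, and it differs from Gottesman's original, more operational proof, which uses the secret-replacement property (Lemma~\ref{lm:sec-rep} here) to argue that the single party completing a minimal authorized set can steer the reconstructed secret to any state and hence must hold a system at least as large as the secret. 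Your skeleton is sound: the chain rule $I(R;A)=I(R;A\setminus\{i\})+I(R;T_i\mid A\setminus\{i\})$, the endpoints $I(R;A)=2m\log q$ and $I(R;A\setminus\{i\})=0$, and the bounds $I(R;T_i\mid A\setminus\{i\})\leq 2H(T_i)\leq 2w_i\log q$ (the first via $H(T_i|C)\leq H(T_i)$ and Araki--Lieb) do combine to give $w_i\geq m$. The appeal to Lemma~\ref{lm:mixed-to-pure} is actually unnecessary: both endpoint identities hold for mixed-state schemes as stated.

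One step needs repair. Your proposed justification of the first endpoint, ``the reduced state on $RA$ has $H(R|A)=0$,'' is not correct and, taken literally, yields only $I(R;A)=H(R)-H(R|A)=H(R)=m\log q$, hence the weaker conclusion $w_i\geq m/2$. The factor of two is essential and comes from entanglement with the reference: recoverability means a channel acting on $A$ alone produces a system $S'$ with $RS'$ pure and maximally entangled, so data processing gives $I(R;A)\geq I(R;S')=2H(R)$, while $I(R;A)\leq 2\min\{H(R),H(A)\}\leq 2H(R)$ holds unconditionally; equivalently $H(R|A)=-H(R)$, not $0$. With that correction the argument is complete and matches the optimal bound.
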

Clearly, the QTS scheme in Lemma \ref{lm:cleve-qts} has optimal storage cost.
Apart from storage cost which depends on how the secret is encoded and distributed among the parties, it is also important to see how much quantum communication is needed during the secret recovery. 
There are two prominent approaches to reconstructing the secret. 
In the first approach, the parties from an authorized set could collaborate among themselves by means of nonlocal operations to recover the secret.
In the second approach, they can communicate all or part of their shares to a third party called the combiner.
In this paper, we focus on the latter method of secret reconstruction.

\begin{definition}[Communication cost for an authorized set]
The communication cost for an authorized set in a QSS scheme is the number of qudits sent to the combiner by the parties in that set for recovering the secret.
\end{definition}
For the same encoding of the secret, it is possible to have different recovery operations for a given authorized set, thus giving multiple values for the communication cost.
However the above definition for communication cost is defined for a particular recovery operation defined by the QSS scheme for an authorized set.
\begin{definition}[Communication cost for $d$ in QTS]
The communication cost for threshold $d\geq k$ in a $((k,n))$ quantum threshold secret sharing scheme is the maximum communication cost over all the authorized sets of size $d$. This will be denoted as $\text{CC}_n(d)$.
\end{definition}
Thus, for the QTS scheme defined in Lemma \ref{lm:cleve-qts}, the communication cost for secret recovery is CC$_n(k)=k$.

\subsection{Fixed \titlemath{d} communication efficient QTS (CE-QTS)}
Assume that the combiner in a QTS scheme has access to more than $k$ parties in the scheme. Then, the $((k,n))$ QTS scheme will still have the same communication cost of $k$ qudits.
However, by allowing each party in a $((k,n))$ QTS scheme to send only a part of its share to the combiner, it is possible to reduce this communication cost further.
\begin{definition}[CE-QTS]
A $((k,n))$ threshold secret sharing scheme is said to be communication efficient, if for some $d$ such that $k<d\leq n$,
\begin{equation}
\text{CC}_n(d)<\text{CC}_n(k)
\label{eq:ce-ineq}
\end{equation}
Such schemes are denoted as $((k, n, d))$ CE-QTS schemes.
\label{de:ce-qts}
\end{definition}
Here, $d$ is a fixed integer satisfying $k<d\leq n$.
{The strict inequality \eqref{eq:ce-ineq} in this definition is necessary because any $((k,n))$ scheme can allow recovery from $d$ parties by communicating some $k$ shares from these $d$ parties thus achieving CC$_n(d)=\text{CC}_n(k)$.}

This definition of $((k,n,d))$ CE-QTS schemes requires $d>k$. 
However, as we will see later on the bound on communication cost in Section \ref{s:v},
when only $d=k$ parties are accessed, the standard QTS schemes of \cite{cleve99} are optimal with respect to our bound on the communication complexity.
In addition, our constructions for CE-QTS in Sections \ref{s:framework} and \ref{s:iv} reduce to the standard QTS schemes.

A construction for $((k,n,d))$ CE-QTS schemes based on Staircase codes is given in \cite{senthoor19}.
For $n=2k-1$, this CE-QTS scheme is constructed as follows. The encoding for a basis state of the secret $\underline{s}=(s_1,s_2,\hdots,s_m)\in\mathbb{F}_q$ is given by the following superposition
\begin{eqnarray}
\ket{s_1 s_2\hdots s_m}\ \mapsto\sum_{\underline{r}\in\mathbb{F}_q^{m(k-1)}}
\bigotimes_{i=1}^{2k-1}\ket{c_{i1} c_{i2}\ldots c_{im}}
\label{eq:fixed-d-ce-qts-enc}
\end{eqnarray}
where $\underline{r}=(r_1,r_2,\hdots,r_{m(k-1)})\in\mathbb{F}_q^{m(k-1)}$ and $c_{ij}$ is the $(i,j)$th entry of the matrix
\begin{equation}
C=VY.\nonumber
\end{equation}
Here, $V$ is a Vandermonde matrix defined as
\begin{eqnarray}
V=\left[\begin{array}{cccc} 1 & x_1 & \ldots & x_1^{d-1}\\
1 & x_2 & \ldots & x_2^{d-1}\\
\vdots & \vdots& \ddots & \vdots\\
1 & x_n & \ldots & x_n^{d-1}
\end{array} \right].\nonumber
\end{eqnarray}
where $x_1, x_2,\hdots, x_n$ are distinct non-zero constants from $\F_q$. The matrix $Y$ is given by
{\small
\begin{eqnarray}
Y=\left[
    \begin{array}{c:c}
        \begin{matrix} s_1 \\ s_2 \\ \vdots \end{matrix} & \text{\huge 0}_{(m-1)\times(m-1)} \\
        \cdashline{2-2}[4pt/4pt]
        s_m & \begin{matrix} \hspace{-0.25in}r_{k-m+1} & r_{k-m+2} & \ \hdots & \ \ \ \ \ r_{k-1} \end{matrix} \\
        \cdashline{1-2}[4pt/4pt]
        \begin{matrix} r_1\\ r_2\\ \vdots\\ r_{k-1} \end{matrix} & \begin{matrix} \ r_k            & r_{2(k-1)+1} &  \hdots & r_{(m-1)(k-1)+1} \\
                                                                                                                      \ r_{k+1}      & r_{2(k-1)+2} &  \hdots & r_{(m-1)(k-1)+2} \\
                                                                                                                      \ \vdots        & \vdots          &  \ddots & \vdots \\
                                                                                                                      \ r_{2(k-1)} &  r_{3(k-1)}   &  \hdots & r_{m(k-1)}\end{matrix}
\end{array}
\right].
%\label{eq:msgMatrix}
\nonumber
\end{eqnarray}
}
After encoding, the first set of $m$ qudits are given to the first party, the second set of $m$ qudits given to the second party and so on till the $n$th party.
When the combiner accesses $k$ parties, each of these $k$ parties sends all its $m=d-k+1$ qudits. When the combiner accesses $d$ parties, each of these $d$ parties sends only its first qudit.

\begin{lemma}
\text{\cite{senthoor19}}
The encoding in \eqref{eq:fixed-d-ce-qts-enc} provides a $q$-ary $((k,n,d))$ communication efficient quantum threshold secret sharing scheme with the following parameters
\begin{gather*}
q>2k-1\text{ (prime)}
\\m=d-k+1
\\w_1=w_2=\hdots=w_n=d-k+1
\\\text{CC}_n(k)=k(d-k+1)
\\\text{CC}_n(d)=d.
\end{gather*}
\label{lm:senthoor-ce-qts}
\end{lemma}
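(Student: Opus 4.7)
The plan is to verify each parameter directly from the construction and then establish recovery from (i) any $d$-subset with each party sending one qudit and (ii) any $k$-subset with each party sending its full share, finally deducing secrecy against any $(k-1)$-subset from no-cloning. I would carry out the argument for $n=2k-1$; the case $n<2k-1$ follows by discarding $2k-1-n$ shares as in Lemma~\ref{lm:cleve-qts}, which preserves both the communication costs and the secrecy property. The parameters $q>2k-1$, $m=d-k+1$, and $w_i=m$ read off from the construction: $q>2k-1\geq n$ supplies $n$ distinct nonzero $x_i\in\F_q$, $Y$ has $m$ columns, and $C=VY$ is $n\times m$ with row $i$ going to party $i$. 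The key algebraic facts about $V\in\F_q^{n\times d}$ are MDS-type: any $d$ rows form an invertible $d\times d$ Vandermonde matrix, and any $k$ rows restricted to $k$ consecutive columns $\{j,\ldots,j+k-1\}$ differ from a $k\times k$ Vandermonde matrix by a nonzero diagonal scaling $\operatorname{diag}(x_i^{j-1})$ and hence are invertible.

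For $d$-party recovery (establishing $\text{CC}_n(d)=d$), take any $A\subseteq[n]$ with $|A|=d$ and let each party in $A$ send only its first qudit $c_{i1}$. Stacked, these $d$ qudits encode $V_A Y^1$, where $Y^1=(s_1,\ldots,s_m,r_1,\ldots,r_{k-1})^T$ and $V_A$ is an invertible $d\times d$ Vandermonde submatrix. Applying $U_{V_A^{-1}}$ on the combiner's $d$ qudits rewrites their joint state as $\ket{s_1\cdots s_m\,r_1\cdots r_{k-1}}$, so the first $m$ carry the secret while the remaining $k-1$ hold the randoms. Since the overall superposition over $\underline r$ still extends uniformly over the untouched shares outside $A$, a direct tracking of the basis-string sum shows that the secret register factors out from everything else, yielding a clean pure recovery.

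For $k$-party recovery (establishing $\text{CC}_n(k)=km$), take any $K\subseteq[n]$ with $|K|=k$ each sending all $m$ qudits; the combiner holds $V_K Y\in\F_q^{k\times m}$. The staircase structure of $Y$ is central: columns $j\in\{2,\ldots,m\}$ vanish in rows $1$ through $m-1$, so $V_K Y^j$ equals $V_K^{[m,d]}$ times the nonzero entries of $Y^j$, and $V_K^{[m,d]}$ is $k\times k$ invertible by the MDS property. Applying $U_{(V_K^{[m,d]})^{-1}}$ extracts these entries into computational-basis qudits, recovering in particular the shared randoms $r_{k-m+1},\ldots,r_{k-1}$ lying in row $m$. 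A cascade of $L_\alpha$ gates then uses these randoms to cancel their contribution to the first-column qudits, reducing those to $V_K\cdot(s_1,\ldots,s_m,r_1,\ldots,r_{k-m},0,\ldots,0)^T$; a final $U_{(V_K^{[1,k]})^{-1}}$ extracts $\ket{s_1\cdots s_m}$ on the designated $m$ qudits.

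Secrecy against any $B\subseteq[n]$ with $|B|=k-1$ is then immediate from no-cloning: for $n=2k-1$ we have $|B^c|=k$, so $B^c$ is authorized by the $k$-party recovery just established, and hence $B$ is unauthorized. The main obstacle throughout is the bookkeeping in step (ii): one must verify that the $L_\alpha$ cancellation cascade works precisely because the entries of $Y$ shared between row $m$ (columns $2,\ldots,m$) and column $1$ (rows $m+1,\ldots,d$) are literally the \emph{same} random variables $r_{k-m+1},\ldots,r_{k-1}$, so that the randoms extracted via the $V_K^{[m,d]}$-inversion can be coherently subtracted from the first-column qudits without introducing any residual dependence on the secret into the ancillary registers.
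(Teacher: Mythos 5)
Your parameter count, the basis-state recovery, and the no-cloning secrecy argument all line up with the intended construction (the paper itself proves this lemma by citation to \cite{senthoor19}, illustrated by the worked $((3,5,5))$ example in Appendix~\ref{ap:ceqts-full-eg}). However, there is a genuine gap in both of your recovery arguments: you stop once the basis string $\ket{s_1\cdots s_m}$ sits in a computational-basis register, but at that point the secret is still entangled with the $n-d$ (respectively $n-k$) \emph{unaccessed} shares, whose first-column qudits $c_{j1}=\sum_l s_l x_j^{l-1}+\sum_t r_t x_j^{m+t-1}$ depend on $\underline{s}$. Concretely, for $d<n$ the state after $U_{V_A^{-1}}$ is $\sum_{\underline s}\alpha_{\underline s}\ket{\underline s}\otimes\ket{\psi_{\underline s}}$ with $\ket{\psi_{\underline s}}=\sum_{\underline r}\ket{r_1\cdots r_{k-1}}\bigotimes_{j\notin A}\ket{c_{j1}(\underline s,\underline r)}\otimes\cdots$; since the register $\ket{r_1\cdots r_{k-1}}$ pins down $\underline r$, agreement of $c_{j1}(\underline s,\underline r)$ and $c_{j1}(\underline s',\underline r)$ on the $n-d$ outside evaluation points forces $\underline s=\underline s'$ whenever $n-d$ exceeds $m-1$ (e.g.\ already for $k=4$, $d=5$, $n=7$ the $\ket{\psi_{\underline s}}$ are pairwise orthogonal). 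So the claim that ``a direct tracking of the basis-string sum shows that the secret register factors out'' is false as stated; only the special case $d=n$ factors immediately, and the $k$-party case with $n=2k-1>k$ never does.

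What is missing is the second stage of reconstruction, which is the genuinely quantum part of the proof: the combiner must re-encode its recovered randomness registers (controlled on the secret register where necessary) into copies of the evaluations $c_{j1}(\underline s,\underline r)$ held by the unaccessed parties, so that each pair $\sum_{r}\ket{c_{j1}}\ket{c_{j1}}$ collapses by a change of variables to a uniform superposition $\sum_{t}\ket{t}\ket{t}$ independent of $\underline s$. This is exactly what the operators $U_{K_7},U_{K_8},U_{K_9}$ and the substitutions around \eqref{eq:first-var-replaced} accomplish in Appendix~\ref{ap:ceqts-full-eg}, and what \eqref{eq:disentangled_secret_before}--\eqref{eq:disentangled_secret} do for the universal scheme. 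Your closing remark about the $L_\alpha$ cascade addresses a different (and correct) point --- that the subtracted randoms are literally the same variables --- but it does not supply this disentanglement, without which only classical basis-state secrets, not arbitrary superpositions, are recovered.
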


To obtain a $((k,n,d))$ CE-QTS scheme for $n<2k-1$, simply discard $2k-1-n$ shares after encoding the secret in the above scheme. By Lemma \ref{lm:qts-opt}, this scheme has an optimal storage cost. It is also proved in \cite{senthoor19} that this scheme gives an optimal communication cost when the combiner accesses $d$ parties, for the specific case of $n=2k-1$. In this paper, we prove that optimality of this scheme holds for $n<2k-1$ as well.

For example, for $k=3, d=5$, this construction gives a $((3,5,5))$ CE-QTS scheme with the parameters
\begin{subequations}
\label{eq:eg-senthoor-ce-qts}
\begin{gather}
q=7\\
m=3\\
w_1=w_2=\hdots=w_5=3\\
\text{CC}_n(3)=9,\ \text{CC}_n(5)=5.
\end{gather}
\end{subequations}
The matrices $V$ and $Y$ in this scheme are given by
\begin{equation*}
V=
\begin{bmatrix}
1&1&1&1&1\\1&2&4&1&2\\1&3&2&6&4\\1&4&2&1&4\\1&5&4&6&2
\end{bmatrix}
\text{and\ }
Y=
\left[
\begin{tabular}{ccc}
$s_1$&0&0\\$s_2$&0&0\\$s_3$&$r_1$&$r_2$\\$r_1$&$r_3$&$r_5$\\$r_2$&$r_4$&$r_6$
\end{tabular}
\right].
\end{equation*}
The encoding for the scheme is given by the following mapping
\begin{eqnarray}
\label{eq:enc_qudits_3_5}
\ket{\underline{s}}\mapsto\sum_{\underline{r}\in\F_7^6}\ket{c_{11}c_{12}c_{13}}&&\!\!\ket{c_{21}c_{22}c_{23}}\ket{c_{31}c_{32}c_{33}}
\\[-0.5cm]&&\ \ \ \ \ket{c_{41}c_{42}c_{43}}\ket{c_{51}c_{52}c_{53}}\nonumber
\end{eqnarray}
where $\underline{s}=(s_1,s_2,s_3)$ indicates a basis state of the quantum secret, $\underline{r}=(r_1,r_2,\hdots,r_6)$
and 
$c_{ij} $ is the $(i,j)$th entry of the matrix
\begin{equation}
C=VY.\nonumber
\end{equation}
The encoded state in \eqref{eq:enc_qudits_3_5} can also be written as,
\begin{align*}
\sum_{\underline{r}\in\mathbb{F}_7^6}
\begin{array}{l}
\ket{v_1(\underline{s},r_1,r_2)}\ket{v_1(0,0,r_1,r_3,r_4)}
\ket{v_1(0,0,r_2,r_5,r_6)}
\\\ket{v_2(\underline{s},r_1,r_2)}\ket{v_2(0,0,r_1,r_3,r_4)}
\ket{v_2(0,0,r_2,r_5,r_6)}
\\\ket{v_3(\underline{s},r_1,r_2)}\ket{v_3(0,0,r_1,r_3,r_4)}
\ket{v_3(0,0,r_2,r_5,r_6)}
\\\ket{v_4(\underline{s},r_1,r_2)}\ket{v_4(0,0,r_1,r_3,r_4)}
\ket{v_4(0,0,r_2,r_5,r_6)}
\\\ket{v_5(\underline{s},r_1,r_2)}\ket{v_5(0,0,r_1,r_3,r_4)}
\ket{v_5(0,0,r_2,r_5,r_6)}.
\end{array}
\end{align*}
$v_i()$ indicates the polynomial evaluation given by
\begin{eqnarray*}
v_i(f_1,f_2,f_3,f_4,f_5)&=&f_1+f_2.x_i+f_3.x_i^2+f_4.x_i^3+f_5.x_i^4
\end{eqnarray*}
where the expression $v_i(\underline{s},r_1,r_2)$ denotes $v_i(s_1,s_2,s_3,r_1,r_2)$. Here we have taken $x_i=i$ for $1\leq i\leq 5$.
From this encoded state, the first party gets the three qudits from the first row, the second user gets the three qudits from the second row and so on till the fifth party.

When combiner requests $k=3$ parties, each party sends its complete share. 
When $d=5$, the combiner downloads the first qudit of each share from all the five parties. The secret recovery for this scheme is explained in detail in Appendix \ref{ap:ceqts-full-eg}. 

\subsection{Ramp quantum secret sharing (RQSS)}
The QTS scheme defined earlier is a perfect QSS scheme \text{i.e.} any set of parties is either authorized or unauthorized. But it is also possible to design a non-perfect threshold scheme such that a set of parties may be neither authorized nor unauthorized. A generalization of the threshold schemes leads to the ramp quantum secret sharing.

\begin{definition}[Ramp secret sharing schemes]
A $((t,n;z))$ ramp quantum secret sharing scheme for $1\leq z<t\leq n\leq t+z$ is a QSS scheme with $n$ parties where any $t$ or more parties can recover the secret, but $z$ or fewer parties have no information on the secret.
\end{definition}
{Note that the notation for RQSS schemes should not be confused with that of CE-QTS schemes.}

When $z=t-1$, then the ramp scheme is identical to a $((t,n))$ perfect threshold scheme.
For $z<t-1$, there are intermediate sets (of size $z+1$ to $t-1$) which are not able to reconstruct the secret but can have partial information about the secret.

Ogawa {\em et al.} \cite{ogawa05} provided a construction for $((t,n;z))$ ramp QSS schemes for $n\leq t+z$ as follows. Consider the case of $n=t+z$. Take $m=t-z$ and a prime $q>t+z$. The encoding for the basis state of the secret $\underline{s}=(s_1,s_2,\hdots,s_m)\in\mathbb{F}_q^m$ is given by the superposition
\begin{eqnarray}
\ket{s_1 s_2\hdots s_m}\mapsto\sum_{\underline{r}}\ket{u_1(\underline{s},\underline{r}),u_2(\underline{s},\underline{r}),\hdots,u_n(\underline{s},\underline{r})}.\ \ \ 
\label{eq:rqss-enc}
\end{eqnarray}
Here $\underline{r}=(r_1,r_2,\hdots,r_z)\in\mathbb{F}_q^z$ and $u_i(\underline{s},\underline{r})$ is the polynomial evaluation
\begin{eqnarray}
u_i(\underline{s},\underline{r})=s_1&+&s_2 x_i+\hdots+s_m x_i^{m-1}\nonumber
\\&&+r_1 x_i^m+r_2 x_i^{m+1}+\hdots+r_z x_i^{t-1}\ \ \ \nonumber
\end{eqnarray}
where $x_1,x_2,\hdots,x_n$ are distinct non-zero constants from $\mathbb{F}_q$.
\begin{remark}
A $((t,n;z))$ ramp QSS scheme can be obtained from a $((t,n+\ell;z))$ scheme by simply dropping some $\ell$ shares.
\label{re:ramp-by-dropping}
\end{remark}
Thus, this construction gives $((t,n;z))$ ramp schemes for any $n\leq t+z$.
For example, an encoding for a $((t=3,n=4;z=1))$ ramp QSS scheme will be as follows where each qudit has dimension 5.
\begin{eqnarray}
\ket{s_1 s_2}\ \mapsto\sum_{r_1\in\mathbb{F}_5}&\ket{s_1+s_2+r_1}\ket{s_1+2s_2+4r_1}\ \ \ \ &\nonumber
\\[-0.4cm]&\ \ \ \ \ket{s_1+3s_2+4r_1}\ket{s_1+4s_2+r_1}&\nonumber
\end{eqnarray}
Each party is given one of the qudits from the encoded state.
\begin{lemma}
\text{\cite{ogawa05}}
The encoding in \eqref{eq:rqss-enc} provides a $q$-ary $((t,n;z))$ ramp quantum secret sharing scheme for $z<t, n\leq t+z$ with the following parameters
\label{lm:ogawa-ramp}
\begin{gather*}
q>t+z\text{ (prime)}\\
m=t-z\\
w_1=w_2=\hdots=w_n=1.
\end{gather*}
\end{lemma}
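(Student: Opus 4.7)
The plan is to verify the three content-bearing claims in the lemma: (i) the stated parameters hold by construction, (ii) any $t$ parties can recover the secret (recoverability), and (iii) any $z$ parties have a reduced state independent of the secret (secrecy). The parameter claims $m=t-z$ and $w_i=1$ are immediate from \eqref{eq:rqss-enc}, so the content is in (ii) and (iii). By Remark~\ref{re:ramp-by-dropping} it suffices to treat the boundary case $n=t+z$; the shorter schemes are then obtained by discarding $t+z-n$ parties.

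For recoverability, fix any authorized set $A\subseteq[n]$ of size $t$. The map $(\underline{s},\underline{r})\mapsto (u_i(\underline{s},\underline{r}))_{i\in A}$ is the $\F_q$-linear map given by the Vandermonde matrix $V_A=[x_i^{j-1}]_{i\in A,\, 0\le j-1\le t-1}$, with the $x_i$ distinct and nonzero, so it is invertible. I will describe an explicit recovery procedure: the combiner collects the $t$ share qudits from $A$ and applies the unitary $U_{V_A^{-1}}$ (in the notation of the preamble) to transform $\ket{(u_i)_{i\in A}}$ into $\ket{s_1,\ldots,s_m,r_1,\ldots,r_z}$ coherently. Summing over $\underline{r}$, the state factors as $\ket{s_1\cdots s_m}\otimes \sum_{\underline{r}}\ket{\underline{r}}$, and the randomness register can be discarded (or decoupled) to return the secret.

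For secrecy, fix an unauthorized set $B\subseteq[n]$ of size $z$ with complement $B^c$ of size $t$. The encoded state is $\ket{\Psi_{\underline{s}}}=q^{-z/2}\sum_{\underline{r}}\ket{(u_i(\underline{s},\underline{r}))_{i\in[n]}}$ (up to normalization), and the reduced density operator on $B$ is
\begin{eqnarray*}
\rho_B(\underline{s})=\frac{1}{q^z}\sum_{\underline{r},\underline{r}'}\Big(\prod_{i\in B^c}\delta_{u_i(\underline{s},\underline{r}),u_i(\underline{s},\underline{r}')}\Big)\ket{(u_i(\underline{s},\underline{r}))_{i\in B}}\bra{(u_i(\underline{s},\underline{r}'))_{i\in B}}.
\end{eqnarray*}
The product of Kronecker deltas forces the degree $t-1$ polynomial with coefficient vector $(\underline{s},\underline{r})$ to agree with the one with coefficients $(\underline{s},\underline{r}')$ at $|B^c|=t$ distinct points, hence to coincide; since $\underline{s}$ is common, this yields $\underline{r}=\underline{r}'$. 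So only diagonal terms survive. For fixed $\underline{s}$, the remaining sum is over $\underline{r}\mapsto (u_i(\underline{s},\underline{r}))_{i\in B}$, which is an affine map with linear part $[x_i^{m+j-1}]_{i\in B,\,1\le j\le z}=\mathrm{diag}(x_i^m)\cdot V_B'$; this is invertible since $V_B'$ is a $z\times z$ Vandermonde in the distinct nonzero $x_i$. Thus $\underline{r}$ bijects onto $\F_q^z$, giving $\rho_B(\underline{s})=q^{-z}\sum_{\underline{y}\in\F_q^z}\ket{\underline{y}}\bra{\underline{y}}=I/q^z$, independent of $\underline{s}$, which proves the secrecy condition.

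The only delicate step is the collapse argument for the off-diagonal terms of $\rho_B$; everything else reduces to standard invertibility of Vandermonde matrices over $\F_q$. I will make that step rigorous by invoking the fact that a polynomial of degree at most $t-1$ is determined by its values at any $t$ distinct points, and writing the $\delta$-product explicitly as the indicator of polynomial equality. After that, the uniformity of the diagonal distribution, and hence the reduction to the maximally mixed state, follows from the surjectivity of the linear part onto $\F_q^z$.
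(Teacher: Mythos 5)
The paper does not prove this lemma at all --- it is quoted as background from Ogawa \emph{et al.}\ \cite{ogawa05} --- so your proposal can only be judged on its own terms. Your secrecy argument is sound: the Kronecker-delta collapse forced by agreement of two degree-$(t-1)$ polynomials at the $t$ points indexed by $B^c$, followed by bijectivity of $\underline{r}\mapsto(u_i(\underline{s},\underline{r}))_{i\in B}$, correctly gives $\rho_B=I/q^z$. You should add that the same delta argument annihilates the cross terms $\tr_{B^c}\ket{\Psi_{\underline{s}}}\bra{\Psi_{\underline{s}'}}$ for $\underline{s}\neq\underline{s}'$, so the conclusion extends from basis states to arbitrary superpositions of the secret, which is what secrecy requires.

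The recoverability argument has a genuine gap. After the combiner applies $U_{V_A^{-1}}$ to the $t$ qudits of $A$, the global state is $\sum_{\underline{r}}\ket{\underline{s}}\ket{\underline{r}}\ket{(u_i(\underline{s},\underline{r}))_{i\notin A}}$: the $n-t$ unaccessed shares remain entangled with both $\underline{s}$ and $\underline{r}$, so the claimed factorization $\ket{\underline{s}}\otimes\sum_{\underline{r}}\ket{\underline{r}}$ is false whenever $n>t$, and in particular in the boundary case $n=t+z$ to which you reduce. Concretely, for basis secrets $\underline{s}\neq\underline{s}'$ the residual states $\sum_{\underline{r}}\ket{\underline{r}}\ket{(u_i(\underline{s},\underline{r}))_{i\notin A}}$ are orthogonal unless $s_1+s_2x+\cdots+s_mx^{m-1}$ and its primed counterpart agree at all $n-t$ unaccessed points, so an arbitrary superposition would be decohered and only the classical content of the secret recovered. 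The missing step --- which this paper performs explicitly in its worked examples via the operators $K_3,K_4$ and the variable changes $r_j\to r_j'$ in Section~\ref{s:example} and the appendices --- is to use the recovered registers to coherently map $\ket{\underline{r}}\mapsto\ket{(u_i(\underline{s},\underline{r}))_{i\notin A}}$ (an invertible affine map, controlled on $\ket{\underline{s}}$), after which the local and environmental registers form the $\underline{s}$-independent state $\sum_{\underline{y}\in\F_q^{n-t}}\ket{\underline{y}}\ket{\underline{y}}$ and the secret genuinely decouples. Alternatively, since $|A^c|=n-t\leq z$, your own secrecy computation shows $\rho_{A^c}$ is independent of the secret, and purity of the global state then yields recoverability by $A$ through the standard decoupling/no-cloning argument, with no explicit circuit needed.
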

This scheme can be used to encode a secret of $m=\ell(t-z)$ qudits by individually encoding every set of $t-z$ qudits in the secret.
For $t=k, z=k-1$, this scheme is very similar to the $((k,n))$ QTS scheme in Lemma \ref{lm:cleve-qts}.
\begin{lemma}
\label{lm:rqss-opt}
\text{\cite[Corollary 2]{ogawa05}}
The share size averaged over all parties in a $((t,n;z))$ ramp QSS scheme should be at least as large as $\frac{1}{t-z}$ times the size of the secret.
\end{lemma}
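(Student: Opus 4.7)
The plan is to prove the bound by an entropy counting argument of the kind used for classical ramp schemes, lifted to the quantum setting. Take the secret register to be in the maximally mixed state on $m$ qudits and purify it by an external reference $R$ of dimension $q^m$, so that after encoding the joint state of $R$ and the shares $W_1,\ldots,W_n$ is pure. I will use two consequences of the $((t,n;z))$ property: by secrecy, for every set $B$ of at most $z$ parties, $\rho_{RB}=\rho_R\otimes\rho_B$, hence the quantum mutual information $I(R:B)=0$; by recoverability, for every set $A$ of at least $t$ parties, there is an isometry acting on $A$ that produces a system maximally entangled with $R$, which forces $I(R:A)=2H(R)=2m\log q$.

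Next, I would fix an arbitrary ordering $\sigma$ of the $n$ parties, set $B_i=\{\sigma(1),\ldots,\sigma(i)\}$, and apply the chain rule for quantum mutual information between the endpoints $i=z$ and $i=t$:
\begin{equation*}
2m\log q \;=\; I(R:B_t)-I(R:B_z)\;=\;\sum_{i=z+1}^{t} I(R:W_{\sigma(i)}\mid B_{i-1}).
\end{equation*}
Each conditional term is at most $2H(W_{\sigma(i)})\le 2w_{\sigma(i)}\log q$ by the standard upper bound on (conditional) quantum mutual information. Dividing through by $2\log q$ gives $\sum_{i=z+1}^{t} w_{\sigma(i)} \ge m$, i.e.\ any $t-z$ consecutive shares in the ordering have total size at least $m$.

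To pass from this per-ordering inequality to the averaged statement, I would sum over all $n!$ orderings $\sigma$. Each individual share $W_j$ falls into the window $\{z+1,\ldots,t\}$ in exactly $(t-z)(n-1)!$ of those orderings, so the aggregate inequality reads $(t-z)(n-1)!\sum_{j=1}^{n}w_j \ge n!\,m$, which simplifies to $\tfrac{1}{n}\sum_{j=1}^{n}w_j \ge \tfrac{m}{t-z}$, as claimed. An equivalent finish is to observe that by permutation symmetry the bound $\sum_{i\in T}w_i\ge m$ holds for every $(t-z)$-subset $T$ and then sum over all such subsets.

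The main obstacle I anticipate is the quantum step, namely verifying cleanly that recoverability yields $I(R:A)=2m\log q$ and that secrecy yields $I(R:B)=0$; both rely on choosing the maximally mixed secret plus purification so that the Schumacher-style recovery becomes an entropy identity, rather than on any specific algebraic feature of the polynomial encoding in Lemma~\ref{lm:ogawa-ramp}. Once those two endpoint identities are in place, the chain-rule telescoping and the combinatorial averaging over orderings are routine.
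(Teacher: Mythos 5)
The paper does not actually prove this lemma: it imports it verbatim from Ogawa \emph{et al.} as \cite[Corollary 2]{ogawa05}, so there is no in-paper proof to measure your argument against. That said, your derivation is correct and self-contained. The two endpoint identities are exactly the recoverability and secrecy conditions of the information-theoretic model the paper later formalizes in Section~\ref{s:vi}: with the maximally mixed secret purified by $\mathcal{R}$ one has $I(\mathcal{R}:\mathcal{S})=2m\log q$, recoverability plus data processing forces $I(\mathcal{R}:S_A)=2m\log q$ for $|A|\geq t$, and secrecy gives $I(\mathcal{R}:S_B)=0$ for $|B|\leq z$. The telescoping step is the chain rule for quantum mutual information, and the bound $I(\mathcal{R}:W\mid C)\leq 2\mathsf{S}(W)\leq 2w\log q$ on each increment follows from subadditivity and the Araki--Lieb inequality, which the paper already records as \eqref{eq:sub-addi} and \eqref{eq:araki-lieb}; notably you never need strong subadditivity since every term is only bounded from above. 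The combinatorics is also sound: since any $(t-z)$-subset $T$ can be realized as the window $\{\sigma(z+1),\ldots,\sigma(t)\}$ of some ordering, you in fact prove $\sum_{i\in T}w_i\geq m$ for every such $T$, and either the double count over orderings or over subsets yields $\frac{1}{n}\sum_{j}w_j\geq\frac{m}{t-z}$. Structurally this parallels the paper's own proof of the communication-cost lower bound in Section~\ref{s:vi} (endpoint mutual-information identities, entropy inequalities applied per subset, then averaging over all subsets), the only difference being that the paper applies Araki--Lieb and subadditivity directly to each subset rather than routing through a chain-rule telescope; the two routes buy the same thing and your version generalizes just as cleanly.
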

Note that the bound on storage cost in ramp QSS is in terms of average share size rather than individual share size. Clearly, the RQSS scheme from Lemma \ref{lm:ogawa-ramp} achieves this bound.

\subsection{Quantum information theory}
We briefly recall some of the terms of quantum information theory and introduce the notation used in the paper.
For further reading, we refer the reader to \cite{nielsen00}.

The von Neumann entropy of a quantum system $A$ with density matrix $\rho_A$ is given by 
\begin{equation}
\mathsf{S}(A)=-\tr(\rho_A\ \text{log}\ \rho_A)=-\sum_{i=1}^{M_A}\lambda_i\ \text{log}\ \lambda_i.\nonumber
\end{equation}
Here $\{\lambda_i\}$ are the eigenvalues of $\rho_A$ acting on a Hilbert space $\mathcal{H}_A$ of dimension $M_A$. The maximum value for $\mathsf{S}(A)$ is given by
\begin{equation}
\mathsf{S}(A)\leq\log M_A.
\label{eq:max-entropy}
\end{equation}

Consider the bipartite quantum system $AB$ whose density matrix $\rho_{AB}$ over the Hilbert space $\mathcal{H}_A\otimes\mathcal{H}_B$. 
Joint quantum entropy of $AB$ is defined as
\begin{equation}
\mathsf{S}(AB)=-\tr(\rho_{AB}\ \text{log}\ \rho_{AB}).\nonumber
\end{equation}
It satisfies two important properties.
\begin{eqnarray}
\mathsf{S}(AB)\leq \mathsf{S}(A)+\mathsf{S}(B)
\label{eq:sub-addi}
\\\mathsf{S}(AB)\geq|\mathsf{S}(A)-\mathsf{S}(B)|
\label{eq:araki-lieb}
\end{eqnarray}
The property \eqref{eq:sub-addi} is called subadditivity and \eqref{eq:araki-lieb} is called the Araki-Lieb inequality.

Mutual information between two quantum systems $A$ and $B$ is defined as
\begin{equation}
I(A:B)=\mathsf{S}(A)+\mathsf{S}(B)-\mathsf{S}(AB).\nonumber
\end{equation}

Consider a quantum system $\mathcal{Q}$ defined over the Hilbert space $\mathcal{H}_\mathcal{Q}$ of dimension $N$. Then the density matrix corresponding to $\mathcal{Q}$ can be defined as
\begin{equation}
\rho_\mathcal{Q}=\sum_{i=0}^{N-1}\lambda_i\ketbra{\phi_i}{\phi_i}
\nonumber
\end{equation}
where $\{\lambda_i\}$ gives the probability distribution in a measurement over some basis of orthonormal states $\{\ket{\phi_0},\ket{\phi_1},\hdots,\ket{\phi_{N-1}}\}$. Let $\mathcal{R}$ be the reference system such that the combined system $\mathcal{R}Q$ is in the pure state
\begin{equation}
\ket{\Phi_{\mathcal{R}Q}}=\sum_{i=0}^{N-1}\sqrt{\lambda_i}\ket{\phi_i}_\mathcal{R}\ket{\phi_i}_Q.
\nonumber
\end{equation}
Let $\mathcal{W}$ be a quantum operation which takes the state $Q$ to the new state $Q'$. Then the necessary and sufficient condition for the existence of a quantum operation $\mathcal{Y}$ which can recover the state $Q$ from $Q'$ is given by the condition
\begin{equation}
I(\mathcal{R}:Q')=I(\mathcal{R}:Q).
\label{eq:qdpi}
\end{equation}
This result is due to the quantum data processing inequality given in the following lemma.
\begin{lemma}[Quantum data processing inequality \cite{schumacher96}]
Consider an arbitrary quantum state $Q$ with a reference system $\mathcal{R}$ such that $Q\mathcal{R}$ is in pure state. If $\mathcal{W}$ is a quantum operation which takes state $Q$ to $Q'$, then
\begin{eqnarray}
\mathsf{S}(Q)\geq\mathsf{S}(Q')-\mathsf{S}(\mathcal{R}Q')
\nonumber
\end{eqnarray}
with equality achieved if and only if the original state $Q$ can be completely recovered from $Q'$.
\label{lm:qdpi}
\end{lemma}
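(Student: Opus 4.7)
The plan is to use a Stinespring dilation of $\mathcal{W}$ and then reduce the claim to the subadditivity inequality \eqref{eq:sub-addi} on a carefully chosen purification. Concretely, I would first introduce an auxiliary environment system $E$ in a fixed pure state $\ket{0}_E$ and a unitary $U$ acting on $QE$ such that $\mathcal{W}(\rho_Q)=\tr_E\bigl(U(\rho_Q\otimes\ket{0}\!\bra{0}_E)U^\dagger\bigr)$. Since $Q\mathcal{R}$ is pure by hypothesis and $E$ is pure and independent, the initial joint state $\mathcal{R}QE$ is pure; because $U$ acts as the identity on $\mathcal{R}$, the post-operation state $\mathcal{R}Q'E$ is also pure.

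Next I would exploit the standard fact that in a pure tripartite state, the entropies of complementary subsystems coincide. Applied to $\mathcal{R}Q'E$ this gives
\begin{eqnarray*}
\mathsf{S}(\mathcal{R}Q')&=&\mathsf{S}(E),\\
\mathsf{S}(\mathcal{R}E)&=&\mathsf{S}(Q'),\\
\mathsf{S}(Q'E)&=&\mathsf{S}(\mathcal{R}).
\end{eqnarray*}
Since $\mathcal{R}Q$ was pure to begin with, $\mathsf{S}(\mathcal{R})=\mathsf{S}(Q)$. Now I would invoke subadditivity \eqref{eq:sub-addi} on the bipartition $\mathcal{R},E$ of the post-operation state:
\begin{eqnarray*}
\mathsf{S}(\mathcal{R}E)\leq \mathsf{S}(\mathcal{R})+\mathsf{S}(E).
\end{eqnarray*}
Substituting the three identities above transforms this into $\mathsf{S}(Q')\leq\mathsf{S}(Q)+\mathsf{S}(\mathcal{R}Q')$, which is exactly the claimed inequality $\mathsf{S}(Q)\geq\mathsf{S}(Q')-\mathsf{S}(\mathcal{R}Q')$.

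For the equality clause, the forward direction is easy: if $\mathcal{W}$ can be undone by some recovery operation $\mathcal{V}$ on $Q'$, then $\mathcal{V}\circ\mathcal{W}$ acts as the identity on the relevant state of $Q$, so applying the inequality both to $\mathcal{W}$ and to $\mathcal{V}\circ\mathcal{W}$ pinches it to an equality. The harder direction is to show that equality forces recoverability, and this is where I expect the main obstacle. Equality in subadditivity means $\rho_{\mathcal{R}E}=\rho_\mathcal{R}\otimes\rho_E$, so the reference is decoupled from the environment. The argument I would use is to Schmidt-decompose the pure state $\ket{\mathcal{R}Q'E}$ with respect to the cut $\mathcal{R}\mid Q'E$, use the product form of $\rho_{\mathcal{R}E}$ to show that $Q'$ factorizes (up to a local isometry) into a piece carrying the full purification of $\mathcal{R}$ and a piece correlated only with $E$, and then define $\mathcal{V}$ to perform the inverse of that isometry followed by discarding the second piece. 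This realizes a recovery of $Q$ from $Q'$, closing the equivalence; the delicate point is to carry out the isometric decomposition cleanly on the support of the input state, which is where I would spend most of the proof-writing effort (and which, in a final write-up, can alternatively be discharged by appealing to the Petz recovery map for the equality case of strong subadditivity specialized to this setting).
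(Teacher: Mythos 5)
The paper does not actually prove this lemma; it is stated as background and attributed to \cite{schumacher96}, so there is no in-paper argument to compare against. Your proposal reconstructs the standard proof from that reference, and the inequality part is correct and complete: the Stinespring dilation makes $\mathcal{R}Q'E$ pure, the complementary-subsystem identities $\mathsf{S}(\mathcal{R}Q')=\mathsf{S}(E)$, $\mathsf{S}(\mathcal{R}E)=\mathsf{S}(Q')$, $\mathsf{S}(\mathcal{R})=\mathsf{S}(Q)$ are all valid, and subadditivity on $\mathcal{R}E$ (equivalently, Araki--Lieb on $Q'E$) yields exactly $\mathsf{S}(Q)\geq\mathsf{S}(Q')-\mathsf{S}(\mathcal{R}Q')$. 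Two remarks on the equality clause. First, the ``recoverable $\Rightarrow$ equality'' direction is not quite discharged by applying the stated one-stage inequality to $\mathcal{W}$ and to $\mathcal{V}\circ\mathcal{W}$ separately: that only gives $\mathsf{S}(Q)\geq\mathsf{S}(Q)$ for the composite. To pinch the intermediate quantity $\mathsf{S}(Q')-\mathsf{S}(\mathcal{R}Q')$ between $\mathsf{S}(Q)$ and $\mathsf{S}(Q'')-\mathsf{S}(\mathcal{R}Q'')$ you need the second stage of the data processing inequality, i.e.\ monotonicity of the coherent information under further processing; this follows from a second dilation and strong subadditivity (or a repeat of your subadditivity argument on $\mathcal{R}$, $E$, $E'$), but it is an additional ingredient you should state. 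Second, your ``equality $\Rightarrow$ recoverable'' direction correctly identifies the decoupling condition $\rho_{\mathcal{R}E}=\rho_{\mathcal{R}}\otimes\rho_E$ and the factorization of the purifying system $Q'$ as the crux; that is indeed the Schumacher--Nielsen argument, and the isometric splitting of a purification of a product state is a standard fact, so the sketch is sound even if the details remain to be written out.
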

\section{Universal CE-QTS: A First Look}\label{s:example}
In this section, we take the first steps for a formal treatment of universal communication efficient quantum threshold schemes. 
After defining them, we illustrate the gains in communication complexity for a suitably designed quantum threshold scheme. 
Later sections in this paper provide constructions for such universal communication efficient quantum secret sharing schemes.
\begin{definition}[Universal CE-QTS]
A $((k,n))$ threshold secret sharing scheme is said to be universal communication efficient, if for any  $d_i$ and $d_\ell$ such that $k\leq d_i<d_\ell\leq n$, $\text{CC}(d_\ell)<\text{CC}(d_i)$. Such schemes are denoted as $((k,n,*))$ universal CE-QTS schemes
\end{definition}
In other words, in universal CE-QTS schemes, $\text{CC}_n(n)<\text{CC}_n(n-1)<\hdots<\text{CC}_n(k+1)<\text{CC}_n(k)$.
{Similar to Definition \ref{de:ce-qts}, this definition also requires strict reduction in communication cost CC$_n(d)$ for increasing values of $d$.}
\subsection{An example for universal CE-QTS}
Consider the example of $((k=3,n=5,*))$ universal CE-QTS scheme with the following parameters.
\begin{subequations}
\begin{gather}
q=11
\\m=3
\\w_1=w_2=\hdots=w_5=3
\\\text{CC}_5(3)=9,\ \text{CC}_5(4)=8,\ \text{CC}_5(5)=5.
\end{gather}
\end{subequations}
The encoding for the scheme is given by the following mapping
\begin{eqnarray}
\label{eq:enc_qudits_3_5_s}
\ket{\underline{s}}\mapsto\sum_{\underline{r}\in\F_{11}^6}\ket{c_{11}c_{12}c_{13}}&&\!\!\ket{c_{21}c_{22}c_{23}}\ket{c_{31}c_{32}c_{33}}
\\[-0.5cm]&&\ \ \ \ \ket{c_{41}c_{42}c_{43}}\ket{c_{51}c_{52}c_{53}}\nonumber
\end{eqnarray}
where $\underline{s}=(s_1,s_2,s_3)\in\mathbb{F}_{11}^3$ indicates a basis state of the quantum secret, $\underline{r}=(r_1,r_2,\hdots,r_6)\in\mathbb{F}_{11}^6$
and 
$c_{ij} $ is the $(i,j)$th entry of the matrix
\begin{equation*}
C=VY.
\end{equation*}
Here the matrices $V$ and $Y$ are defined as follows. 
\begin{eqnarray*}
V=
\begin{bmatrix}
9&3&4&6&1\\2&9&3&4&6\\8&2&9&3&4\\7&8&2&9&3\\5&7&8&2&9
\end{bmatrix}
\text{\ \ and\ \ \ }
Y=
\left[
\begin{tabular}{ccc}
$s_1$&0&0\\$s_2$&$r_1$&0\\$s_3$&$r_2$&$r_3$\\$r_1$&$r_3$&$r_5$\\$r_2$&$r_4$&$r_6$
\end{tabular}
\right].
\end{eqnarray*}
The encoded state in \eqref{eq:enc_qudits_3_5_s} can also be written as,
\begin{align*}
\sum_{\underline{r}\in\mathbb{F}_{11}^6}
\begin{array}{l}
\ket{v_1(\underline{s},r_1,r_2)}\ket{v_1(0,r_1,r_2,r_3,r_4)}
\ket{v_1(0,0,r_3,r_5,r_6)}
\\\ket{v_2(\underline{s},r_1,r_2)}\ket{v_2(0,r_1,r_2,r_3,r_4)}
\ket{v_2(0,0,r_3,r_5,r_6)}
\\\ket{v_3(\underline{s},r_1,r_2)}\ket{v_3(0,r_1,r_2,r_3,r_4)}
\ket{v_3(0,0,r_3,r_5,r_6)}
\\\ket{v_4(\underline{s},r_1,r_2)}\ket{v_4(0,r_1,r_2,r_3,r_4)}
\ket{v_4(0,0,r_3,r_5,r_6)}
\\\ket{v_5(\underline{s},r_1,r_2)}\ket{v_5(0,r_1,r_2,r_3,r_4)}
\ket{v_5(0,0,r_3,r_5,r_6)}.
\end{array}
\end{align*}
Here $v_i()$ indicates the expression
\begin{eqnarray*}
v_i(f_1,f_2,f_3,f_4,f_5)=v_{i1}f_1+v_{i2}f_2+v_{i3}f_3+v_{i4}f_4+v_{i5}f_5
\end{eqnarray*}
where $v_{ij}=[V]_{ij}$ and the expression $v_i(\underline{s},r_1,r_2)$ denotes $v_i(s_1,s_2,s_3,r_1,r_2)$.
The matrix $V$ is a Cauchy matrix.
From this encoded state, the first party gets the three qudits from the first row, the second user gets the three qudits from the second row and so on till the fifth party.

When combiner requests $d=5$ parties, they send the first qudit from each of their shares. 
When $d=4$, the combiner downloads the first two qudits of each share of the four parties contacted. 
When $d=3$, the combiner downloads all three qudits of the share of the three parties contacted.
(For clarity, the qudits accessible to the combiner have been highlighted in blue in the description below.)

Consider the case when $d=5$ \textit{i.e.} the first qudits from all five parties are accessed.
\begin{align*}
\sum_{\underline{r}\in\mathbb{F}_{11}^6}
\begin{array}{l}
\bl{\ket{v_1(\underline{s},r_1,r_2)}}\ket{v_1(0,r_1,r_2,r_3,r_4)}
\ket{v_1(0,0,r_3,r_5,r_6)}
\\\bl{\ket{v_2(\underline{s},r_1,r_2)}}\ket{v_2(0,r_1,r_2,r_3,r_4)}
\ket{v_2(0,0,r_3,r_5,r_6)}
\\\bl{\ket{v_3(\underline{s},r_1,r_2)}}\ket{v_3(0,r_1,r_2,r_3,r_4)}
\ket{v_3(0,0,r_3,r_5,r_6)}
\\\bl{\ket{v_4(\underline{s},r_1,r_2)}}\ket{v_4(0,r_1,r_2,r_3,r_4)}
\ket{v_4(0,0,r_3,r_5,r_6)}
\\\bl{\ket{v_5(\underline{s},r_1,r_2)}}\ket{v_5(0,r_1,r_2,r_3,r_4)}
\ket{v_5(0,0,r_3,r_5,r_6)}
\end{array}
\end{align*}
Applying the operation $U_{V^{-1}}$ on these five qudits, we obtain
\begin{align*}
\bl{\ket{\underline{s}}}\sum_{\underline{r}\in\mathbb{F}_{11}^6}
\begin{array}{l}
\ket{v_1(0,r_1,r_2,r_3,r_4)}
\ket{v_1(0,0,r_3,r_5,r_6)}
\\\ket{v_2(0,r_1,r_2,r_3,r_4)}
\ket{v_2(0,0,r_3,r_5,r_6)}
\\\ket{v_3(0,r_1,r_2,r_3,r_4)}
\ket{v_3(0,0,r_3,r_5,r_6)}
\\\bl{\ket{r_1}}\ket{v_4(0,r_1,r_2,r_3,r_4)}
\ket{v_4(0,0,r_3,r_5,r_6)}
\\\bl{\ket{r_2}}\ket{v_5(0,r_1,r_2,r_3,r_4)}
\ket{v_5(0,0,r_3,r_5,r_6)}
\end{array}
\end{align*}
Here, the three qudits containing the basis state of the secret are not entangled with any of the other qudits. Thus, any arbitrary superposition of the basis states can be recovered with the above step.

Consider the case when $d=4$. Assume that the first four parties are accessed. The first two qudits from the four parties are sent to the combiner.
\begin{align*}
\sum_{\underline{r}\in\mathbb{F}_{11}^6}
\begin{array}{l}
\bl{\ket{v_1(\underline{s},r_1,r_2)}\ket{v_1(0,r_1,r_2,r_3,r_4)}}
\ket{v_1(0,0,r_3,r_5,r_6)}
\\\bl{\ket{v_2(\underline{s},r_1,r_2)}\ket{v_2(0,r_1,r_2,r_3,r_4)}}
\ket{v_2(0,0,r_3,r_5,r_6)}
\\\bl{\ket{v_3(\underline{s},r_1,r_2)}\ket{v_3(0,r_1,r_2,r_3,r_4)}}
\ket{v_3(0,0,r_3,r_5,r_6)}
\\\bl{\ket{v_4(\underline{s},r_1,r_2)}\ket{v_4(0,r_1,r_2,r_3,r_4)}}
\ket{v_4(0,0,r_3,r_5,r_6)}
\\\ket{v_5(\underline{s},r_1,r_2)}\ket{v_5(0,r_1,r_2,r_3,r_4)}
\ket{v_5(0,0,r_3,r_5,r_6)}
\end{array}
\end{align*}
Applying the operation $U_{K_1}$ on the set of four second qudits, where $K_1$ is the inverse of $V_{[4]}^{[2,5]}$, we obtain
\begin{eqnarray}
\sum_{\underline{r}\in\mathbb{F}_{11}^6}
\begin{array}{l}
\bl{\ket{v_1(\underline{s},r_1,r_2)}\ket{r_1}}
\ket{v_1(0,0,r_3,r_5,r_6)}
\\\bl{\ket{v_2(\underline{s},r_1,r_2)}\ket{r_2}}
\ket{v_2(0,0,r_3,r_5,r_6)}
\\\bl{\ket{v_3(\underline{s},r_1,r_2)}\ket{r_3}}
\ket{v_3(0,0,r_3,r_5,r_6)}
\\\bl{\ket{v_4(\underline{s},r_1,r_2)}\ket{r_4}}
\ket{v_4(0,0,r_3,r_5,r_6)}
\\\ket{v_5(\underline{s},r_1,r_2)}\ket{v_5(0,r_1,r_2,r_3,r_4)}
\ket{v_5(0,0,r_3,r_5,r_6)}.
\end{array}
%\label{eq:d4-recovery-1}
\nonumber
\end{eqnarray}
Then, on applying the operators $L_{10}\ket{r_2}\ket{v_1(\underline{s},r_1,r_2)}$, $L_5\ket{r_2}\ket{v_2(\underline{s},r_1,r_2)}$, $L_7\ket{r_2}\ket{v_3(\underline{s},r_1,r_2)}$ and
$L_8\ket{r_2}$ $\ket{v_4(\underline{s},r_1,r_2)}$,
we obtain
\begin{eqnarray*}
\sum_{\underline{r}\in\mathbb{F}_{11}^6}
\begin{array}{l}
\bl{\ket{v_1(\underline{s},r_1,0)}\ket{r_1}}
\ket{v_1(0,0,r_3,r_5,r_6)}
\\\bl{\ket{v_2(\underline{s},r_1,0)}\ket{r_2}}
\ket{v_2(0,0,r_3,r_5,r_6)}
\\\bl{\ket{v_3(\underline{s},r_1,0)}\ket{r_3}}
\ket{v_3(0,0,r_3,r_5,r_6)}
\\\bl{\ket{v_4(\underline{s},r_1,0)}\ket{r_4}}
\ket{v_4(0,0,r_3,r_5,r_6)}
\\\ket{v_5(\underline{s},r_1,r_2)}\ket{v_5(0,r_1,r_2,r_3,r_4)}
\ket{v_5(0,0,r_3,r_5,r_6)}.
\end{array}
\end{eqnarray*}
Applying the operation $U_{K_2}$ on the set of four first qudits, where $K_2$ is the inverse of $V_{[4]}^{[4]}$, we obtain the following state.
\begin{eqnarray*}
\bl{\ket{\underline{s}}}
\sum_{\underline{r}\in\mathbb{F}_{11}^6}
\hspace{-0.1cm}
\begin{array}{l}
\bl{\ket{r_1}}
\ket{v_1(0,0,r_3,r_5,r_6)}
\\\bl{\ket{r_2}}
\ket{v_2(0,0,r_3,r_5,r_6)}
\\\bl{\ket{r_3}}
\ket{v_3(0,0,r_3,r_5,r_6)}
\\\bl{\ket{r_1}\ket{r_4}}
\ket{v_4(0,0,r_3,r_5,r_6)}
\\\ket{v_5(\underline{s},r_1,r_2)}\ket{v_5(0,r_1,r_2,r_3,r_4)}
\ket{v_5(0,0,r_3,r_5,r_6)}
\end{array}\nonumber
\end{eqnarray*}
\vspace{-0.3cm}
\begin{equation}
\label{eq:entangled_secret}
\vspace{-0.1cm}
\end{equation}
We disentangle the basis state $\ket{\underline{s}}$ from the rest of qudits by applying the operator $U_{K_3}$ on $\ket{r_1}\ket{r_2}\ket{r_3}\ket{r_4}$ to get $\ket{r_1}\ket{r_2}\ket{r_3}\ket{v_5(0,r_1,r_2,r_3,r_4)}$ and then applying $U_{K_4}$ on $\ket{s_1}\ket{s_2}\ket{s_3}\ket{r_1}\ket{r_2}$ to get $\ket{s_1}\ket{s_2}\ket{s_3}\ket{r_1}\ket{v_5(\underline{s},r_1,r_2)}$.
\begin{equation}
K_3=\left[
\begin{tabular}{c}
1 0 0 0\\
0 1 0 0\\
0 0 1 0\\\hline
$V_{\{5\}}^{[2,5]}$
\end{tabular}
\right]
\text{\ and\ \ }
K_4=\left[
\begin{tabular}{c}
1 0 0 0 0\\
0 1 0 0 0\\
0 0 1 0 0\\
0 0 0 1 0\\\hline
$V_{\{5\}}$
\end{tabular}
\right]\nonumber
\end{equation}
Now, we obtain
\begin{eqnarray*}
\hspace{-0.1cm}
\bl{\ket{\underline{s}}}
\sum_{\underline{r}\in\mathbb{F}_{11}^6}
\hspace{-0.1cm}
\begin{array}{l}
\bl{\ket{r_1}}
\ket{v_1(0,0,r_3,r_5,r_6)}
\\\bl{\ket{v_5(\underline{s},r_1,r_2)}}
\ket{v_2(0,0,r_3,r_5,r_6)}
\\\bl{\ket{r_3}}
\ket{v_3(0,0,r_3,r_5,r_6)}
\\\bl{\ket{r_1}\ket{v_5(0,r_1,r_2,r_3,r_4)}}
\ket{v_4(0,0,r_3,r_5,r_6)}
\\\ket{v_5(\underline{s},r_1,r_2)}\ket{v_5(0,r_1,r_2,r_3,r_4)}
\ket{v_5(0,0,r_3,r_5,r_6)}
\end{array}
\end{eqnarray*}
\vspace{-0.8\baselineskip}
\begin{eqnarray}
&&\hspace{-0.6cm}=\bl{\ket{\underline{s}}}
\sum_{\substack{(r_1,r_2,r_3,r_4',\\r_5,r_6)\in\mathbb{F}_{11}^6}}
\begin{array}{l}
\bl{\ket{r_1}}
\ket{v_1(0,0,r_3,r_5,r_6)}
\\\bl{\ket{v_5(\underline{s},r_1,r_2)}}
\ket{v_2(0,0,r_3,r_5,r_6)}
\\\bl{\ket{r_3}}
\ket{v_3(0,0,r_3,r_5,r_6)}
\\\bl{\ket{r_1}\ket{r_4'}}
\ket{v_4(0,0,r_3,r_5,r_6)}
\\\ket{v_5(\underline{s},r_1,r_2)}\ket{r_4'}
\ket{v_5(0,0,r_3,r_5,r_6)}
\end{array}
\label{eq:disentangled_secret_before}
\\&&\hspace{-0.6cm}=\bl{\ket{\underline{s}}}
\sum_{\substack{(r_1,r_2',r_3,r_4',\\r_5,r_6)\in\mathbb{F}_{11}^6}}
\begin{array}{l}
\bl{\ket{r_1}}
\ket{v_1(0,0,r_3,r_5,r_6)}
\\\bl{\ket{r_2'}}
\ket{v_2(0,0,r_3,r_5,r_6)}
\\\bl{\ket{r_3}}
\ket{v_3(0,0,r_3,r_5,r_6)}
\\\bl{\ket{r_1}\ket{r_4'}}
\ket{v_4(0,0,r_3,r_5,r_6)}
\\\ket{r_2'}\ket{r_4'}
\ket{v_5(0,0,r_3,r_5,r_6)}.
\end{array}
\label{eq:disentangled_secret}
\end{eqnarray}
The variable change in \eqref{eq:disentangled_secret_before} is possible because the qudits
$\sum_{r_4\in\mathbb{F}_{11}}\bl{\ket{v_5(0,r_1,r_2,r_3,r_4)}}\ket{v_5(0,r_1,r_2,r_3,r_4)}$
give the uniform superposition
$\sum_{r_4'\in\mathbb{F}_{11}}\bl{\ket{r_4'}}\ket{r_4'}$
independent of $r_1,r_2,r_3,r_5,r_6$.
The variable change from $r_2$ to $r_2'$ can also be obtained similarly.

Now, the secret is disentangled with the rest of the qudits. Thus, any arbitrary superposition of the basis states can be recovered with above steps for $d=4$.

In the case when $d=3$, each of the three contacted parties sends all three qudits in its share.
\begin{align*}
\sum_{\underline{r}\in\mathbb{F}_{11}^6}
\begin{array}{l}
\bl{\ket{v_1(\underline{s},r_1,r_2)}\ket{v_1(0,r_1,r_2,r_3,r_4)}
\ket{v_1(0,0,r_3,r_5,r_6)}}
\\\bl{\ket{v_2(\underline{s},r_1,r_2)}\ket{v_2(0,r_1,r_2,r_3,r_4)}
\ket{v_2(0,0,r_3,r_5,r_6)}}
\\\bl{\ket{v_3(\underline{s},r_1,r_2)}\ket{v_3(0,r_1,r_2,r_3,r_4)}
\ket{v_3(0,0,r_3,r_5,r_6)}}
\\\ket{v_4(\underline{s},r_1,r_2)}\ket{v_4(0,r_1,r_2,r_3,r_4)}
\ket{v_4(0,0,r_3,r_5,r_6)}
\\\ket{v_5(\underline{s},r_1,r_2)}\ket{v_5(0,r_1,r_2,r_3,r_4)}
\ket{v_5(0,0,r_3,r_5,r_6)}
\end{array}
\end{align*}
The secret recovery for $d=3$ also uses operations similar to those in the case of $d=4$. For sake of completeness, the secret recovery for $d=3$ in this scheme has been explained in Appendix \ref{ap:univ-ceqts-d-3}.

In all the three cases, the first step was to recover the basis state $\ket{\underline{s}}=\ket{s_1s_2s_3}$. The recovery is complete at this point if the secret is any one of the basis states (identical to a classical secret). But the quantum secret can be in an arbitrary superposition of basis states. To recover this quantum secret, the three qudits containing information on the secret needs to be disentangled from the rest of the qudits. For example, the first three qudits in \eqref{eq:entangled_secret}, though they have information on the basis states, are still entangled with the other qudits while these qudits are disentangled from the other qudits in \eqref{eq:disentangled_secret}.

\subsection{Comparison with fixed \titlemath{d} CE-QTS}
In contrast with the above scheme, for the standard $((3,5))$ QSS scheme due to Cleve {\em et al.} 3 qudits need to be communicated for recovery of 1 qudit of secret whenever the combiner accesses three or more parties. The $((3,5,5))$ CE-QTS scheme from \cite{senthoor19} described in \eqref{eq:eg-senthoor-ce-qts} gives a better communication cost of 5/3 qudits per 1 qudit of secret when the combiner accesses 5 parties. But this scheme does not provide the flexibility of also contacting four parties communication efficiently. The scheme provided above can solve that problem. It provides communication efficiency at both $d=5$ and $d=4$.

At $d=4$, the above scheme gives communication cost of 8 qudits to recover secret of 3 qudits \textit{i.e.} 8/3 qudits per one qudit of secret. However this is not the optimal communication cost for $d=4$. Because, for $d=4$, the communication cost in a $((3,5,4))$ fixed $d$ CE-QTS scheme from \cite{senthoor19} gives 2 qudits per one qudit of secret. The constructions proposed in the coming sections can give a $((3,5,*))$ universal CE-QTS scheme with the same communication efficiency as the fixed $d$ CE-QTS schemes of \cite{senthoor19} at both $d=4$ and $d=5$.

%\section{Constructions based on cascaded ramp QSS}
\section{Concatenation Framework for Constructing Communication Efficient QTS Schemes}\label{s:framework}
In this section, we develop  a framework for constructing communication efficient quantum secret sharing schemes. 
We propose a general framework which can be used to derive many classes of CE-QTS schemes. 
Ramp secret sharing schemes and threshold schemes are the central ingredients of the proposed constructions.
First, we give a systematic method to construct CE-QTS schemes where the combiner can contact $d$ parties, and reconstruct the secret. 
Here  $d$ is determined prior to secret distribution.
Then, we provide a systematic method to construct CE-QTS schemes where the combiner can contact any $d$ parties to reconstruct the secret. 
Here, $d$ can be determined after secret distribution arbitrarily by the combiner.

\subsection{Fixed \titlemath{d} CE-QTS from ramp QSS}
\label{ss:iii_a}
Suppose we have a $((k,n,d))$ CE-QTS scheme.
Consider any authorized set of $d>k$ parties.
Since this is an authorized set, we can reconstruct the secret. 
In a communication efficient scheme, these $d$ parties do not communicate their entire shares
to the combiner. 
They only communicate a portion of their share. 
For gaining the intuition, let us assume that the portion communicated by a party when a set of $d$ parties are contacted by the combiner is independent of the choice of the remaining $d-1$ parties. 
Since this is a $((k,n,d))$ scheme, any $k-1$ or fewer {\em portions} \textit{i.e.} partial shares cannot reveal any information about the secret. 
However, $k$ or more portions may reveal partial information about the secret, while 
$d$ out of all the $n$ portions can completely recover the secret. 
Therefore, the set of portions communicated by all the $n$ parties to the combiner can be modelled as a $((d,n;k-1))$ ramp QSS scheme.

Now let us see if we can build a $((k,n))$ QTS scheme out of this $((d,n;k-1))$ ramp QSS scheme.
If $k$ of these $n$ parties attempt to reconstruct the secret with just their shares from the ramp scheme, then their $k$ shares may not be enough for the reconstruction of the secret.
The combiner will need shares from $d-k$ more parties of the ramp scheme for the additional information required to recover the secret for sure.
So we extend the ramp scheme to a $((d,n+d-k;k-1))$ scheme by allowing for $d-k$ more new shares in the previous ramp scheme.
These additional $d-k$ shares of the ramp scheme are distributed to the $n$ parties after encoding by a
$((k,n))$ threshold scheme so that even if only $k$ parties are contacted by the combiner these $d-k$ extra shares necessary for secret recovery can be recovered.
The full scheme is illustrated in Fig.~\ref{fig:fixed_d_rqss}
and formally proved in Theorem~\ref{th:ramp-fixed-ceqts}.
\begin{figure}[ht]
\begin{center}
\hspace{-0.5cm}
\begin{tikzpicture}[scale=0.7, every node/.style={scale=0.78}]
%Ramp QSS box
\draw (-0.6,-0.05) -- (1.5,-0.05);
\draw (1.5,-0.05) -- (1.5,5.5);
\draw (1.5,5.5) -- (-0.6,5.5);
\draw (-0.6,5.5) -- (-0.6,-0.05);
\node at (0.45,4) {\small $((t',$$n';$$z'))$};
\node at (0.45,3.5) {\small ramp QSS};
\node at (0.45,2.5) {\small $n'$$=$$n$$+$$d$$-$$k$};
\node at (0.45,2) {\small $t'$$=$$d$};
\node at (0.45,1.5) {\small $z'$$=$$k$$-$$1$};
%input qudits
\draw[->] (-1,2.75) -- (-0.6,2.75);
\node at (-1.35,2.75) {$\ket{\phi}$};
%output qudits
\draw[->] (1.5,5.25) -- (6.15,5.25);
\draw[->] (1.5,4.75) -- (6.15,4.75);
\draw[->] (1.5,4.25) -- (6.15,4.25);
\node at (1.8,3.75) {$\vdots$};
\draw[->] (1.5,3) -- (6.15,3);
\draw [decorate,decoration={brace,amplitude=6}](2,5.55) -- (2,2.7);
\node at (2,5.75) {\small $n$};
\draw[->] (1.5,1.8) -- (3.25,1.8);
\draw[->] (1.5,1.3) -- (3.25,1.3);
\node at (1.8,0.925) {$\vdots$};
\draw[->] (1.5,0.3) -- (3.25,0.3);
\draw [decorate,decoration={brace,amplitude=6}](2,2.1) -- (2,0);
\node at (2.1,2.3) {\small $d-k$};
%layer separator 1
\node at (1,6.4) {\small Layer 1 encoding};
\draw [dashed] (2.75,-0.7) -- (2.75,6.7);
\node at (4.15,6.4) {\small Layer 2 encoding};
%Perfect QSS box
\begin{scope}[xshift=-0.5cm]
\draw (3.75,-0.2) -- (5.1,-0.2);
\draw (5.1,-0.2) -- (5.1,2.3);
\draw (5.1,2.3) -- (3.75,2.3);
\draw (3.75,2.3) -- (3.75,-0.2);
\node at (4.425,1.3) {\small $((k$,$n))$};
\node at (4.425,0.9) {\small QTS};
\draw (5.1,2.05) -- (9.75,2.05);
\node at (5.35,1.675) {$\vdots$};
\draw (5.1,1.05) -- (10.5,1.05);
\draw (5.1,0.55) -- (11,0.55);
\draw (5.1,0.05) -- (11.5,0.05);
\draw [decorate,decoration={brace,amplitude=6}](5.5,2.35) -- (5.5,-0.25);
\node at (5.6,-0.5) {\small $n$};
\end{scope}
\begin{scope}[xshift=3.5cm]
\draw (5.75,2.05) -- (5.75,2.9);
\draw (6.5,1.05) -- (6.5,4.15);
\draw (7,0.55) -- (7,4.65);
\draw (7.5,0.05) -- (7.5,5.15);
\draw[->] (5.75,2.9) -- (5.5,2.9);
\draw[->] (6.5,4.15) -- (5.5,4.15);
\draw[->] (7,4.65) -- (5.5,4.65);
\draw[->] (7.5,5.15) -- (5.5,5.15);
\end{scope}
%layer separator 2
\draw [dashed] (5.6,-0.7) -- (5.6,6.7);
% ceqts grid
\begin{scope}[xshift=1cm,yshift=0.25cm]
\draw (5.15,5.25) -- (5.15,2.5);
\draw (5.75,5.25) -- (5.75,2.5);
\draw (8,5.25) -- (8,2.5);
\draw (8,5.25) -- (5.15,5.25);
\node at (5.45,5) {$A_1$};
\node at (6.875,5) {$B_1$};
\node at (8.3,5.15) {$S_1$};
\draw (5.15,4.75) -- (8,4.75);
\node at (5.45,4.5) {$A_2$};
\node at (6.875,4.5) {$B_2$};
\node at (8.3,4.65) {$S_2$};
\draw (5.15,4.25) -- (8,4.25);
\node at (5.45,4) {$A_3$};
\node at (6.875,4) {$B_3$};
\node at (8.3,4.15) {$S_3$};
\draw (5.15,3.75) -- (8,3.75);
\node at (5.45,3.5) {$\vdots$};
\node at (6.875,3.5) {$\vdots$};
\node at (8.3,3.5) {$\vdots$};
\draw (5.15,3) -- (8,3);
\node at (5.45,2.75) {$A_n$};
\node at (6.875,2.75) {$B_n$};
\node at (8.3,2.9) {$S_n$};
\draw (5.15,2.5) -- (8,2.5);
\end{scope}
\end{tikzpicture}
\captionsetup{justification=justified}
\caption{Concatenation framework for constructing $((k,n,d))$ CE-QTS scheme with $((d,n+d-k;k-1))$ ramp QSS scheme.}
\label{fig:fixed_d_rqss}
\end{center}
\end{figure}
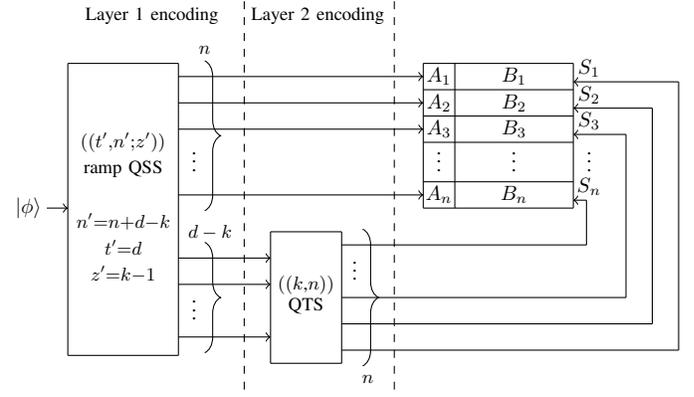
\begin{algorithm}[ht]
\caption{Encoding for a $((k,n,d))$ CE-QTS scheme using a $((d,n+d-k;k-1))$ ramp QSS and a $((k,n))$ QTS.}
\begin{algorithmic}[1]
\REQUIRE {Secret $\ket{\phi}$}
\ENSURE {Shares of the $n$ parties, $S_j$ for $1\leq j\leq n$}
\STATE Encode the secret $\ket{\phi}$ using the $((d,n+d-k;k-1))$ ramp QSS scheme.
Denote the $j$th share generated by the ramp scheme as $A_j$ for $1\leq j\leq n+d-k$ such that the last $d-k$ shares have the largest share sizes.
\STATE Encode the quantum state in $(A_{n+1},A_{n+2},\hdots,A_{n+d-k})$ using a $((k, n))$ quantum threshold scheme.
Denote the $j$th share of this QTS scheme as $B_j$ for $1\leq j\leq n$.
\STATE Distribute $S_j=(A_j,B_j)$ to the $j$th party for $1\leq j\leq n$.
\end{algorithmic}
\label{alg:fixed_d_rqss_enc}
\end{algorithm}
\begin{algorithm}[ht]
\caption{Secret recovery for the $((k,n,d))$ CE-QTS scheme from the encoding in Algorithm \ref{alg:fixed_d_rqss_enc}}
\begin{algorithmic}[1]
\REQUIRE {Shares of $k$ parties or layer 1 from any $d$ parties}
\ENSURE {Secret $\ket{\phi}$}
\IF{combiner has access to only $k$ shares}
 \PARSTATE{Download full shares from the $k$ parties.}
 \PARSTATE{Use layer 2 from the $k$ parties to recover the input to the $((k,n))$ QTS scheme \textit{i.e.} $(A_{n+1},A_{n+2},\hdots,A_{n+d-k})$.}
 \PARSTATE{Use $(A_{n+1},A_{n+2},\hdots,A_{n+d-k})$ and layer 1 from the $k$ parties to get $d$ shares of the ramp QSS scheme and recover the secret $\ket{\phi}$.}
\ELSIF{combiner has access to $d$ shares}
 \PARSTATE{Download layer 1 from the $d$ parties.}
 \PARSTATE{Use layer 1 from the $d$ parties to get $d$ shares of the ramp QSS scheme and recover the secret $\ket{\phi}$.}
\ENDIF
\end{algorithmic}
\label{alg:fixed_d_rqss_rec}
\end{algorithm}
\begin{theorem}[Concatenation framework for fixed $d$ CE-QTS]
\label{th:ramp-fixed-ceqts}
A $((k,n,d))$ CE-QTS scheme exists, if a $((d,n+d-k;k-1))$ ramp QSS scheme and a $((k,n))$ QTS scheme exist. 
The encoding for this scheme is given in Algorithm~\ref{alg:fixed_d_rqss_enc} and the recovery in 
Algorithm~\ref{alg:fixed_d_rqss_rec}.
\end{theorem}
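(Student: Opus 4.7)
The plan is to verify the three defining conditions of a $((k,n,d))$ CE-QTS scheme—recoverability from $d$ contacted parties using only partial shares, recoverability from $k$ contacted parties using full shares, and secrecy against any $k-1$ parties—for the two-layer encoding of Algorithm~\ref{alg:fixed_d_rqss_enc}. Recoverability from $d$ parties is the easiest: the layer-1 shares $A_{j_1},\ldots,A_{j_d}$ collected by the combiner are $d$ shares of the $((d,n+d-k;k-1))$ ramp scheme, which by assumption admits reconstruction from any $d$ shares. For recoverability from $k$ parties, I would first apply the decoding of the $((k,n))$ QTS to the $k$ collected layer-2 shares $B_{j_1},\ldots,B_{j_k}$ (an authorized set of the QTS) to coherently return the quantum registers $(A_{n+1},\ldots,A_{n+d-k})$. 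Combined with the $k$ layer-1 shares from the same parties, the combiner now holds $k+(d-k)=d$ shares of the ramp scheme, and secret recovery proceeds as in the first case.

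The main step is the secrecy argument. Fix any $T\subseteq[n]$ with $|T|\le k-1$, and let $\rho^{\text{ramp}}(\phi)$ be the state on $A_{[n+d-k]}$ produced by the ramp encoding of the secret $\ket{\phi}$. The layer-2 encoding is a CPTP map $\mathcal{E}_{QTS}: A_{[n+1,n+d-k]} \to B_{[n]}$ that acts on registers disjoint from $A_T$. By secrecy of the $((k,n))$ QTS scheme applied to $k-1$ shares, the reduced channel $\mathcal{E}_T := \text{Tr}_{B_{[n]\setminus T}} \circ \mathcal{E}_{QTS}$ is a replacement channel—its output $\rho^0_{B_T}$ is independent of its input. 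Consequently, applying $\text{id}_{A_{[n]}} \otimes \mathcal{E}_T$ to $\rho^{\text{ramp}}(\phi)$ and then tracing out $A_{[n]\setminus T}$ gives $\rho_{A_T,B_T}(\phi) = \rho^{\text{ramp}}_{A_T}(\phi) \otimes \rho^0_{B_T}$. Ramp secrecy, again with $|T|\le k-1$, forces $\rho^{\text{ramp}}_{A_T}(\phi) = \rho^0_{A_T}$, so the combined state is independent of $\phi$ and secrecy holds. The communication-efficiency inequality $\text{CC}_n(d)<\text{CC}_n(k)$ then follows by counting: $\text{CC}_n(d)$ equals the total size of the $d$ layer-1 shares, while $\text{CC}_n(k)$ includes both layer-1 and the (larger) layer-2 shares. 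The ordering in Algorithm~\ref{alg:fixed_d_rqss_enc} placing the largest ramp shares in the block $[n+1,n+d-k]$ (so they are absorbed into the QTS layer rather than sent during $d$-party recovery), together with the bound on QTS share size from Lemma~\ref{lm:qts-opt}, guarantees the strict inequality for $d>k$.

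The hard part is the secrecy step. In the classical setting the analogous argument is trivial, but here the layer-2 registers $B_T$ are obtained by a unitary encoding of other registers $(A_{n+1},\ldots,A_{n+d-k})$ which are themselves entangled with the layer-1 shares $A_T$, so it is not obvious a priori that $(A_T,B_T)$ does not leak secret information through this entanglement. The key conceptual input is recognizing quantum secrecy as the statement that the restricted encoding-then-trace channel is a replacement channel; once this is in hand, the disjointness of the registers acted upon by $\mathcal{E}_{QTS}$ and the registers in $A_T$ gives the tensor-product factorization that reduces the problem to the two separate secrecy statements for the ramp and QTS schemes.
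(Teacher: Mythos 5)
Your proof is correct, and while the recoverability and communication-efficiency arguments essentially coincide with the paper's (same two-case decoding, and the same counting argument combining Lemma~\ref{lm:qts-opt} with the convention that the $d-k$ largest ramp shares are the ones re-encoded by the QTS layer), your secrecy argument takes a genuinely different route. The paper purifies both constituent schemes via Lemma~\ref{lm:mixed-to-pure}, introduces environment shares $E_1,E_2$, shows that the complement $S_{[n]\setminus L}\cup\{E_1,E_2\}$ is authorized using Lemma~\ref{lm:pu-auth}, and concludes by no-cloning that $S_L$ is ignorant of the secret; everything is reduced to recoverability statements. You instead argue directly at the channel level: QTS secrecy on $k-1$ shares means the map $\mathcal{E}_T=\mathrm{Tr}_{B_{[n]\setminus T}}\circ\mathcal{E}_{QTS}$ sends every density operator to a fixed state, which by linearity extends to $\mathcal{E}_T(X)=\mathrm{Tr}(X)\,\rho^0_{B_T}$ on all operators, so applying it to the registers $(A_{n+1},\ldots,A_{n+d-k})$ of the entangled ramp-encoded state decouples and yields $\rho_{A_TB_T}(\phi)=\rho^{\text{ramp}}_{A_T}(\phi)\otimes\rho^0_{B_T}$; ramp secrecy then kills the remaining $\phi$-dependence. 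This is exactly the point you flag as the crux --- that secrecy on density-matrix inputs upgrades, via linearity, to a replacement-channel statement valid on halves of entangled states --- and it is handled correctly. Your approach buys a self-contained, constructive identification of the reduced state on the $k-1$ parties without invoking purifications or the pure-state complement lemma; the paper's approach buys a shorter argument that leans entirely on standard QSS lemmas and no-cloning and avoids any explicit channel manipulation. Either is acceptable; only the minor bookkeeping that strictness of $\text{CC}_n(d)<\text{CC}_n(k)$ uses $k>1$ together with $d>k$ (so that the re-encoded block is nonempty) deserves one explicit line in a final write-up.
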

\begin{proof}
The proof is by giving an explicit construction of a $((k, n, d ))$ CE-QTS scheme
from the given ramp QSS and $((k, n))$ threshold schemes. The encoding for the $((k,n,d))$ CE-QTS scheme is as given in Algorithm~\ref{alg:fixed_d_rqss_enc}.
Each share $S_j$ consists of two portions $(A_j, B_j)$.
We say that $A_j$ forms the first layer of the share $S_j$ and $B_j$ the second layer.
Here, for any $L\subseteq[n]$, $S_L$ denotes $\{S_j\}_{j\in L}$ and $|S_j|$ gives the number of qudits in the share $S_j$.
Similar notations are used for $\{A_j\}$ and $\{B_j\}$ as well.
\begin{compactenum}[(i)]
\item \textit{Recoverability}: 
The secret recovery for the $((k,n,d))$ CE-QTS scheme is as given in 
Algorithm~\ref{alg:fixed_d_rqss_rec}. 
While the combiner accesses any set of $d$ parties, it just needs layer 1 of these parties to recover the secret from the underlying ramp scheme. But while accessing only $k$ parties, the combiner needs $d-k$ more shares of the ramp scheme to recover the secret. 
These $d-k$ extra shares are recovered from the $((k, n ))$ scheme with qudits from second layer.
\item \textit{Secrecy}: Consider any set $L\subseteq [n]$ such that $|L|=k-1$. 
By Lemma \ref{lm:mixed-to-pure}, let $E_1$ be the purifying state for the ramp QSS scheme such that the shares $A_1,A_2,\hdots,A_{n+d-k},E_1$ give a pure state scheme encoding $\ket{\phi}$.
Similarly, let $E_2$ be the purifying state for the perfect QSS scheme such that the shares $B_1,B_2,\hdots,B_n,E_2$ give a pure state scheme encoding $(A_{n+1},A_{n+2},\hdots,A_{n+d-k})$.
Overall, $S_{[n]}\cup\{E_1,E_2\}$ gives a pure state scheme encoding $\ket{\phi}$.
If it can be proved that $S_{[n]\backslash L}\cup\{E_1,E_2\}$ can recover the secret, then by no-cloning theorem, $S_L$ has no information on the secret which proves the secrecy property of the CE-QTS scheme of Algorithm~\ref{alg:fixed_d_rqss_enc}.

Assume that Alice has the shares $S_{[n]\backslash L}\cup\{E_1,E_2\}$.
Clearly, $B_L$ is an unauthorized set in the QTS scheme.
By Lemma \ref{lm:pu-auth}, $B_{[n]\backslash L}\cup\{E_2\}$ is an authorized set for recovering $(A_{n+1},A_{n+2},\hdots,A_{n+d-k})$.
Thus, Alice recovers $(A_{n+1},A_{n+2},\hdots,A_{n+d-k})$ from the QTS scheme.
Now, Alice has the shares $A_{[n+d-k]\backslash L}\cup\{E_2\}$.
$A_L$ is an unauthorized set in the ramp QSS scheme.
By Lemma \ref{lm:pu-auth}, $A_{[n-d+k]\backslash L}\cup\{E_1\}$ is an authorized set in the ramp QSS scheme. Hence, Alice recovers the secret $\ket{\phi}$ from the ramp QSS scheme.
\item \textit{Communication efficiency}: Consider the set of $d$ parties given by $D\subseteq[n]$ which has maximum communication cost among all sets of $d$ parties. By definition, the communication cost of this set of $d$ parties equals CC$_n(d)$. Pick a $K\subset D$ such that $|K|=k$.
\begin{eqnarray}
\text{CC}_n(k)&=&\sum_{j\in K}|S_j|=\sum_{j\in K}(|A_j|+|B_j|)\nonumber
\\&\geq&\sum_{j\in K}|A_j|+\sum_{j\in K}\sum_{\ell=n+1}^{n+d-k}|A_\ell|
\label{eq:layer2-size}
\\&=&\sum_{j\in K}|A_j|+k\sum_{\ell=n+1}^{n+d-k}|A_\ell|
\nonumber
\\&>&\sum_{j\in K}|A_j|+\sum_{\ell=n+1}^{n+d-k}|A_\ell|
\label{eq:k-exceeds-1}
\\&\geq&\sum_{j\in K}|A_j|+\sum_{j\in D\backslash K}|A_j|
\label{eq:largestd_k}
\\&=&\sum_{j\in D}|A_j|=\text{CC}_n(d)\nonumber
\end{eqnarray}
where $J=D\backslash K$.
The bound on \eqref{eq:layer2-size} is due to Lemma~\ref{lm:qts-opt} which implies that each share $B_i$ of the QTS scheme is at least as large as the input state given by $(A_{n+1},A_{n+2},\hdots,A_{n+d-k})$.
The strict inequality in \eqref{eq:k-exceeds-1} is because $k>1$.
The inequality \eqref{eq:largestd_k} is due to the fact that the shares $A_{n+1},A_{n+2},\hdots,A_{n+d-k}$ have the largest sizes among all the $n+d-k$ shares of the ramp scheme. 
Finally, we arrive at the conclusion that the communication complexity decreases with the size of the authorized set. 
\end{compactenum}
This concludes the proof of the theorem.
\end{proof}

Theorem~\ref{th:ramp-fixed-ceqts} can be used with various ramp QSS and threshold schemes. 
Note that Theorem~\ref{th:ramp-fixed-ceqts} does not require the alphabet $q$ to be a prime.
The communication complexity of the resulting schemes clearly depends on the underlying ramp QSS scheme and QTS scheme. 
Here, we propose a construction for CE-QTS scheme using the ramp QSS scheme proposed by Ogawa {\em et al.}\cite{ogawa05} and the QTS scheme from Cleve {\em et al.}\cite{cleve99}.

\begin{corollary}[Concatenated construction for fixed $d$ CE-QTS]
\label{co:ramp-fixed-ceqts-constn}
A $q$-ary $((k,n,d))$ communication efficient QTS scheme can be constructed using the encoding in Algorithm \ref{alg:fixed_d_rqss_enc} with the following 
parameters.
\begin{gather*}
q>d+k-1\text{ (prime)}\\
m=d-k+1\\
w_1=w_2=\hdots=w_n=d-k+1\\
\text{CC}_n(k)=k(d-k+1)\\
\text{CC}_n(d)=d.
\end{gather*}
\end{corollary}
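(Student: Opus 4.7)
My plan is to derive Corollary 1 as a direct instantiation of the Concatenation framework of Theorem 1, using Ogawa et al.'s ramp QSS scheme (Lemma 6) for the outer layer and Cleve et al.'s QTS scheme (Lemma 4) for the inner layer, then simply read off the parameters. There is no heavy calculation; the work lies in verifying that the parameter constraints of the two underlying schemes are jointly satisfiable and that the resulting share sizes and communication costs match the stated values.

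First, I would check existence of the ingredients. Theorem 1 requires a $((d, n+d-k; k-1))$ ramp QSS scheme and a $((k,n))$ QTS scheme. Setting $t=d$ and $z=k-1$ in Lemma 6, the constraints $z < t$ and $n' \le t+z$ become $k-1 < d$ and $n + d - k \le d + k - 1$, i.e.\ $d \ge k$ and $n \le 2k-1$, both of which hold by hypothesis. Ogawa's construction then delivers a $q$-ary ramp scheme with secret size $m = t - z = d - k + 1$ and share sizes $w_j = 1$, provided $q > t + z = d + k - 1$ is prime. For the inner QTS, Lemma 4 gives a $((k,n))$ scheme for any prime $q \ge 2k-1$, which is implied by $q > d + k - 1$ since $d \ge k$. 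Thus a single prime $q > d + k - 1$ suffices for both layers, matching the alphabet claim of the corollary.

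Next I would compute the share sizes and communication costs. In Algorithm 1, the outer ramp scheme produces $n+d-k$ shares of one qudit each, so $|A_j| = 1$ for all $j$ (in particular the last $d-k$ shares are trivially of maximal size, satisfying the ordering requirement of Theorem 1). The state $(A_{n+1},\dots,A_{n+d-k})$ consists of $d-k$ qudits, which we encode by $d-k$ parallel applications of Cleve's QTS (as noted after Lemma 4), yielding $|B_j| = d-k$. Hence $w_j = |A_j| + |B_j| = 1 + (d-k) = d-k+1$, and the secret size inherited from the ramp layer is $m = d - k + 1$. For the communication costs, when the combiner accesses $k$ parties each sends the full share of $d-k+1$ qudits, giving $\mathrm{CC}_n(k) = k(d-k+1)$; when accessing $d$ parties, each sends only the single-qudit layer~1 portion $A_j$, giving $\mathrm{CC}_n(d) = d$.

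Finally, the recoverability, secrecy, and communication efficiency properties follow automatically from Theorem 1 applied to these two constituent schemes, so no separate verification is needed. The one subtle point I would flag is checking that the ``largest shares'' hypothesis in Theorem 1 is honored when all $A_j$ have equal size: the theorem only requires that the last $d-k$ designated shares be at least as large as the others, which is trivially met here, so the bound $\mathrm{CC}_n(k) > \mathrm{CC}_n(d)$ proved in Theorem 1 carries over verbatim. This is essentially the only place that needs a sentence of justification; everything else is a parameter substitution.
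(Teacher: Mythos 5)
Your proposal is correct and follows essentially the same route as the paper's own proof: instantiate the Concatenation framework of Theorem~\ref{th:ramp-fixed-ceqts} with the Ogawa ramp scheme (Lemma~\ref{lm:ogawa-ramp}) as layer~1 and the Cleve QTS (Lemma~\ref{lm:cleve-qts}) as layer~2, then read off $m$, $w_j$, $\text{CC}_n(k)$ and $\text{CC}_n(d)$ from the one-qudit ramp shares and the $(d-k)$-qudit QTS shares. If anything, you are slightly more careful than the paper: you explicitly verify the constraints $z<t$ and $n+d-k\leq t+z$, state the correct field-size requirement $q>t+z=d+k-1$ (the paper's proof text contains what appears to be a typo, $q>d-k+1$), and check that the ``largest shares'' hypothesis of Algorithm~\ref{alg:fixed_d_rqss_enc} is trivially met when all ramp shares have equal size.
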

\begin{proof}
Consider the Concatenation framework from Theorem~\ref{th:ramp-fixed-ceqts}.
Use the ramp scheme from \cite{ogawa05} given in Lemma \ref{lm:ogawa-ramp} and the QTS scheme from \cite{cleve99} given in Lemma \ref{lm:cleve-qts} for the underlying schemes.

By Lemma \ref{lm:ogawa-ramp}, the dimension of the qudits has to be a prime $q$ such that 
$q>d-k+1$.
This also satisfies the constraint on the dimension for the QTS scheme from Lemma \ref{lm:cleve-qts}.
The size of the secret in the ramp scheme is $m=d-k+1$ qudits.

Each share of the ramp QSS is of size one qudit. Thus the first layer of each share in the CE-QTS has one qudit.
The input state for the $((k,n))$ QTS will have $d-k$ qudits.
By Lemma \ref{lm:cleve-qts}, the size of each share of the QTS scheme is also $d-k$. 
Hence, the second layer of each share in CE-QTS has $d-k$ qudits.
In total, each share in the CE-QTS scheme has
%\begin{equation}
$w_j=d-k+1$
%\end{equation}
qudits for $1\leq j\leq n$.

When the combiner attempts to recover from just $k$ parties, each of them transmits the entire share of $d-k+1$ qudits.
Thus
%\begin{equation}
$\text{CC}_n(k)=k(d-k+1)$.
%\end{equation}
When the combiner contacts any $d$ parties, each of them sends a qudit from the first layer, giving
%\begin{equation}
$\text{CC}_n(d)=d$.
%\end{equation}
\end{proof}
In the CE-QTS scheme as described in Corollary \ref{co:ramp-fixed-ceqts-constn}, note that the dimension of each of the $d-k+1$ qudits in the secret has to be more than $d+k-1$.
Compare this with the CE-QTS scheme from \cite{senthoor19} which can give a smaller dimension of $q>2k-1$.
(Refer Table \ref{tab:contributions}.)
However, using other ramp schemes in this framework could lead to CE-QTS schemes with qudits of dimension less than or equal to $d-k+1$.

\subsection{Universal CE-QTS from ramp QSS}\label{ss:iii_b}
Consider an $((n,n;k-1))$ ramp QSS scheme (marked black in Fig. \ref{fig:var_rqss}). Now, if a combiner has access to only $n-1$ out of the $n$ parties, the combiner will not be able to recover the secret unless he receives one more share from this scheme. If these $n-1$ parties can send the combiner some more qudits containing information about an extra share, then the combiner can recover the secret with this extra share.

This flexibility can be achieved by instead taking an $((n+1,n;k-1))$ ramp QSS scheme where the first $n$ shares are given to $n$ parties and the $(n+1)$th share is encoded and distributed among the $n$ parties through an $((n-1,n;k-1))$ scheme (which is indicated with blue in Fig. \ref{fig:var_rqss}). Then, whenever the combiner has access to only $n-1$ parties, he will first decode the $((n-1,n;k-1))$ scheme to recover the extra share and then use the $n-1$ shares from the $((n,n+1;k-1))$ ramp scheme along with this extra share to recover the secret.
\begin{figure}[ht]
\begin{center}
\hspace{-0.5cm}
\begin{tikzpicture}[scale=0.7, every node/.style={scale=0.78}]
%Ramp QSS box
\draw (-0.6,-0.05) -- (1.5,-0.05);
\draw (1.5,-0.05) -- (1.5,5.5);
\draw (1.5,5.5) -- (-0.6,5.5);
\draw (-0.6,5.5) -- (-0.6,-0.05);
\node at (0.45,4) {\small $((t,$$n';$$z))$};
\node at (0.45,3.5) {\small ramp QSS};
\node at (0.45,2.5) {\small $n'$$=$$n$\bl{$+$$1$}};
\node at (0.45,2) {\small $t$$=$$n$};
\node at (0.45,1.5) {\small $z$$=$$k$$-$$1$};
%input qudits
\draw[->] (-1,2.75) -- (-0.6,2.75);
\node at (-1.35,2.75) {$\ket{\phi}$};
%output qudits
\draw[->] (1.5,5.25) -- (6.15,5.25);
\draw[->] (1.5,4.55) -- (6.15,4.55);
\draw[->] (1.5,3.85) -- (6.15,3.85);
\node at (1.75,3.125) {$\vdots$};
\draw[->] (1.5,2.4) -- (6.15,2.4);
\draw [decorate,decoration={brace,amplitude=6}](1.85,5.55) -- (1.85,2.1);
\node at (1.85,5.75) {\small $n$};
\bl{\draw[->] (1.5,0.3) -- (2.85,0.3);}
%layer separator 1
\node at (0.7,6.4) {\small Layer 1 encoding};
\draw [dashed] (2.45,-1.4) -- (2.45,6.7);
\node at (4.15,6.4) {\small Layer 2 encoding};
%Ramp QSS 2 box
\bl{
\begin{scope}[xshift=-0.5cm,yshift=-0.7cm]
\draw (3.35,-0.5) -- (5.25,-0.5);
\draw (5.25,-0.5) -- (5.25,2.5);
\draw (5.25,2.5) -- (3.35,2.5);
\draw (3.35,2.5) -- (3.35,-0.5);
\node at (4.3,2) {\small $((t$,$n'$;$z))$};
\node at (4.3,1.5) {\small ramp QSS};
\node at (4.3,1) {\small $n'$$=$$n$};
\node at (4.3,0.5) {\small $t$$=$$n$$-$$1$};
\node at (4.3,0) {\small $z$$=$$k$$-$$1$};
\draw (5.25,2.05) -- (9.75,2.05);
\node at (5.5,1.675) {$\vdots$};
\draw (5.25,1.05) -- (10.5,1.05);
\draw (5.25,0.55) -- (11,0.55);
\draw (5.25,0.05) -- (11.5,0.05);
\draw [decorate,decoration={brace,amplitude=6}](5.65,2.35) -- (5.65,-0.25);
\node at (5.75,-0.5) {\small $n$};
\end{scope}
\begin{scope}[xshift=3.5cm]
\draw (5.75,1.35) -- (5.75,2.4);
\draw (6.5,0.35) -- (6.5,3.85);
\draw (7,-0.15) -- (7,4.55);
\draw (7.5,-0.65) -- (7.5,5.25);
\draw[->] (5.75,2.4) -- (5.5,2.4);
\draw[->] (6.5,3.85) -- (5.5,3.85);
\draw[->] (7,4.55) -- (5.5,4.55);
\draw[->] (7.5,5.25) -- (5.5,5.25);
\end{scope}
}%bl
%layer separator 2
\draw [dashed] (5.7,-1.4) -- (5.7,6.7);
% Variable RQSS grid
\begin{scope}[xshift=1cm,yshift=0.35cm]
\draw (5.15,5.25) -- (5.15,1.7);
\draw (6,5.25) -- (6,1.7);
\draw (5.15,5.25) -- (6,5.25);
\node at (5.6,4.9) {$S_1^{(1)}$};
\node at (8.3,5.15) {$S_1$};
\draw (5.15,4.55) -- (6,4.55);
\node at (5.6,4.2) {$S_2^{(1)}$};
\node at (8.3,4.45) {$S_2$};
\draw (5.15,3.85) -- (6,3.85);
\node at (5.6,3.5) {$S_3^{(1)}$};
\node at (8.3,3.75) {$S_3$};
\draw (5.15,3.15) -- (6,3.15);
\node at (5.6,2.9) {$\vdots$};
\node at (8.3,3.15) {$\vdots$};
\draw (5.15,2.4) -- (6,2.4);
\node at (5.6,2.05) {$S_n^{(1)}$};
\node at (8.3,2.3) {$S_n$};
\draw (5.15,1.7) -- (6,1.7);
\bl{
\draw (8,5.25) -- (8,1.7);
\draw (6,5.25) -- (8,5.25);
\node at (7,4.9) {$S_1^{(2)}$};
\draw (6,4.55) -- (8,4.55);
\node at (7,4.2) {$S_2^{(2)}$};
\draw (6,3.85) -- (8,3.85);
\node at (7,3.5) {$S_3^{(2)}$};
\draw (6,3.15) -- (8,3.15);
\node at (7,2.9) {$\vdots$};
\draw (6,2.4) -- (8,2.4);
\node at (7.1,2.05) {$S_n^{(2)}$};
\draw (5.15,1.7) -- (8,1.7);
}
\end{scope}
\end{tikzpicture}
\captionsetup{justification=justified}
\caption{Concatenation of two ramp quantum secret sharing schemes to  construct a $((t,n;k-1))$ ramp QSS scheme with flexible $t\in\{n-1,n\}$.}
\label{fig:var_rqss}
\end{center}
\end{figure}
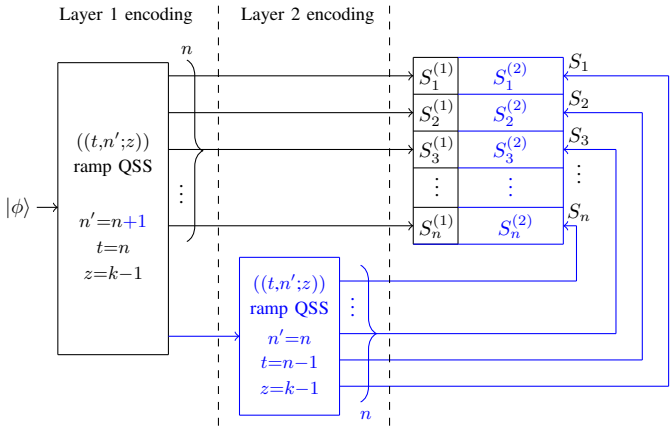

Thus, by concatenating an $((n-1,n;k-1))$ ramp scheme which encodes the extra share from an $((n,n+1;k-1))$ ramp scheme, a $((t,n;k-1))$ ramp QSS with a flexible threshold $t\in\{n-1,n\}$ can be designed. Similarly, a $((k,n,*))$ universal CE-QTS scheme is a QTS scheme in which the secret recovery can happen efficiently for all thresholds $d\in\{k,k+1,\hdots,n\}$. The main idea in our following framework for constructing universal CE-QTS schemes is that this generalization of $d$ can be achieved by concatenating $n-k+1$ ramp schemes with increasing threshold $t$ successively.

\begin{figure}[ht]
\begin{center}
\hspace{-0.2cm}
\begin{tikzpicture}[scale=0.52, every node/.style={scale=0.7}]
\newcommand\BoxLayer[5]
{
\tikzset{shift={(#2,#3)}}
%layer box
\draw (0,-0.2) -- (2,-0.2);
\draw (2,-0.2) -- (2,3.2);
\draw (2,3.2) -- (0,3.2);
\draw (0,3.2) -- (0,-0.2);
\node at (1,2.6) {RQSS$_{#1}$};
%\node at (1,2) {\footnotesize $n_#1$=};
\node at (1,1.6) {\small $n_#1${\scriptsize=}#5};
\node at (1,1) {\small $t_#1${\scriptsize=}#4};
\node at (1,0.4) {\small $z_#1${\scriptsize=}$k$-1};
%layer output to encoding
\draw (2,3) -- (2.55,3);
\draw (2.25-0.15,3-0.15) -- (2.25+0.15,3+0.15);
\node at (2.25,3.25) {\small $n$};
\tikzset{shift={(-#2,-#3)}}
}
%layer 1 box
\draw[->] (-0.25,1.1) -- (0,1.1);
\node at (-0.6,1.1) {$\ket{\phi}$};
\BoxLayer{1}{0}{0}{$n$}{$n$+$h$-1}
\draw (2,2.5) -- (2.75,2.5); %to layer 2
\draw[->] (2,2) -- (2.4,2); %to layer 3
\node at (3.45,1.95) {\small To RQSS$_3$};
\draw[->] (2,1.5) -- (2.4,1.5); %to layer 4
\node at (3.45,1.45) {\small To RQSS$_4$};
\node at (2.2,1.25) {\small $\vdots$};
\draw (2,0.75) -- (6.55,0.75); %to layer i
\node at (2.2,0.45) {\small $\vdots$};
\draw (2,0) -- (11.9,0); %to layer k
%layer 2 box
\BoxLayer{2}{3}{2.5}{$n$-1}{$n$+$h$-2}
\draw (2.75,2.5) -- (2.75,3.75); %from layer 1
\draw[->] (2.75,3.75) -- (3,3.75);
\draw[->] (5,5) -- (5.4,5); %to layer 3
\node at (6.45,4.95) {\small To RQSS$_3$};
\draw[->] (5,4.5) -- (5.4,4.5); %to layer 4
\node at (6.45,4.45) {\small To RQSS$_4$};
\node at (5.2,4) {\small $\vdots$};
\draw (5,3.25) -- (6.55,3.25); %to layer i
\node at (5.2,2.95) {\small $\vdots$};
\draw (5,2.5) -- (11.85,2.5); %to layer k
\node at (6.5,6.5) {$\hdots$};
%layer i box
\BoxLayer{i}{8}{6}{$n$-$i$+1}{$n$+$h$-$i$}
\node at (6.9,0.75) {$\hdots$};
\draw (7.25,0.75) -- (7.75,0.75); %from layer 1
\draw (7.75,0.75) -- (7.75,6.2);
\draw[->] (7.75,6.2) -- (8,6.2);
\node at (6.9,3.25) {$\hdots$};
\draw (7.25,3.25) -- (7.5,3.25); %from layer 2
\draw (7.5,3.25) -- (7.5,6.7);
\draw[->] (7.5,6.7) -- (8,6.7);
\node at (7.75,7.2) {$\vdots$};
\draw[->] (7.6,7.5) -- (8,7.5); %from layer (i-2)
\node at (6.15,7.5) {\small From RQSS$_{i-2}$};
\draw[->] (7.6,8) -- (8,8); %from layer (i-1)
\node at (6.15,8) {\small From RQSS$_{i-1}$};
\draw[->] (10,8) -- (10.4,8); %to layer (i+1)
\node at (11.6,8.5) {\small To RQSS$_{i+1}$};
\draw[->] (10,8.5) -- (10.4,8.5); %to layer (i+2)
\node at (11.6,8) {\small To RQSS$_{i+2}$};
\node at (10.25,7.25) {$\vdots$};
\draw (10,6) -- (11.8,6);
\node at (11.5,10) {$\hdots$};
%layer h=n-k+1 box
\BoxLayer{h}{13.5}{9.5}{$k$}{$n$}
\tikzset{shift={(4.5,3.5)}}
\node at (7.75,-3.5) {$\hdots$};  %from layer 1
\draw (8.1,-3.5) -- (8.75,-3.5);
\draw (8.75,-3.5) -- (8.75,6);
\draw[->] (8.75,6) -- (9,6);
\node at (7.7,-1) {$\hdots$}; %from layer 2
\draw (8.05,-1) -- (8.5,-1);
\draw (8.5,-1) -- (8.5,6.5);
\draw[->] (8.5,6.5) -- (9,6.5);
\node at (8.8,7) {\small $\vdots$};
\node at (7.65,2.5) {$\hdots$}; %from layer i
\draw (8,2.5) -- (8.25,2.5);
\draw (8.25,2.5) -- (8.25,7.25);
\draw[->] (8.25,7.25) -- (9,7.25);
\node at (8.8,7.7) {\small $\vdots$};
\draw[->] (8.6,8) -- (9,8); %from layer (k--2)
\node at (7.1,8) {\small From RQSS$_{h-2}$};
\draw[->] (8.6,8.5) -- (9,8.5); %from layer (k-1)
\node at (7.1,8.5) {\small From RQSS$_{h-1}$};
\tikzset{shift={(-4.5,-3.5)}}
%layer output to encoding
\draw[->] (2.55,3) -- (2.55,13.5);
\draw[->] (5.55,5.5) -- (5.55,13.5);
\draw[->] (10.55,9) -- (10.55,13.5);
\draw[->] (16.05,12.5) -- (16.05,13.5);
%Encoding box
\draw (1.5,13.5) -- (1.5,18.5);
\node at (2.5,18.8) {Layer 1};
\node at (2.5,16.6) {$\vdots$};
\node at (2.5,14.85) {$\vdots$};
\draw (3.6,13.5) -- (3.6,18.5);
\node at (5,18.8) {Layer 2};
\node at (5,16.6) {$\vdots$};
\node at (5,14.85) {$\vdots$};
\draw (6.5,13.5) -- (6.5,18.5);
\node at (7.5,18.8) {$\hdots$};
\node at (7.5,16.6) {$\vdots$};
\node at (7.5,14.85) {$\vdots$};
\draw (8.55,13.5) -- (8.55,18.5);
\node at (10,18.8) {Layer $i$};
\node at (10,16.6) {$\vdots$};
\node at (10,15.65) {$S_j^{(i)}$};
\node at (10,14.85) {$\vdots$};
\draw (11.5,13.5) -- (11.5,18.5);
\node at (12.25,18.8) {$\hdots$};
\node at (12.25,16.6) {$\vdots$};
\node at (12.25,14.85) {$\vdots$};
\draw (13,13.5) -- (13,18.5);
\node at (14.65,18.8) {Layer $h${\small =}$n$-$k$+$1$};
\node at (14.75,16.6) {$\vdots$};
\node at (14.75,14.85) {$\vdots$};
\draw (16.3,13.5) -- (16.3,18.5);
\draw (1.5,18.5) -- (16.3,18.5);
\node at (1.1,18.15) {$S_1$};
\draw (1.5,17.75) -- (16.3,17.75);
\node at (1.1,17.4) {$S_2$};
\draw (1.5,17) -- (16.3,17);
\node at (1.1,16.6) {$\vdots$};
\draw (1.5,16) -- (16.3,16);
\node at (1.1,15.65) {$S_j$};
\draw (1.5,15.25) -- (16.3,15.25);
\node at (1.1,14.85) {$\vdots$};
\draw (1.5,14.25) -- (16.3,14.25);
\node at (1.1,13.9) {$S_n$};
\draw (1.5,13.5) -- (16.3,13.5);
\end{tikzpicture}
\vspace{0cm}
\captionsetup{justification=justified}
\caption{Concatenation framework for constructing $((k,n,*))$ universal CE-QTS scheme by concatenating multiple ramp QSS schemes. Here $d_i=n+1-i$ for $1\leq i\leq n-k+1$ and the $((t_i=d_i,n_i=n+d_i-k;z_i=k-1))$ ramp QSS scheme is denoted by RQSS$_i$.}
\label{fig:univ_d_rqss}
\end{center}
\end{figure}
\begin{algorithm}[ht]
\caption{Encoding for a $((k,n,*))$ universal CE-QTS scheme }
\begin{algorithmic}[1]
\REQUIRE {Secret $\ket{\phi}$}
\ENSURE {Shares of the $n$ parties, $S_j$ for $1\leq j\leq n$}
\STATE Encode the secret $\ket{\phi}$ using the RQSS$_1$ scheme.
\FOR{$i=1$ to $n-k+1$}
 \PARSTATE{Distribute the smallest $n$ shares from the RQSS$_i$ scheme $(S_1^{(i)},S_2^{(i)},\hdots,S_n^{(i)})$ to the $n$ parties. This is called the $i$th layer of the encoding.}
 \IF{$d_i>k$}
    \PARSTATE{For all $1\leq\ell\leq d_i-k$, the share $S_{n+\ell}^{(i)}$ goes as part of input to the RQSS$_{i+\ell}$ scheme.}
    \PARSTATE{The combined state of all the qudits passed from the previous $i$ ramp schemes to the RQSS$_{i+1}$ scheme is encoded using the RQSS$_{i+1}$ scheme.}
 \ENDIF
 \ENDFOR
\end{algorithmic}
\label{alg:univ_d_rqss_enc}
\end{algorithm}

\begin{algorithm}[ht]
\caption{Secret recovery for the $((k,n,*))$ universal CE-QTS scheme in Algorithm \ref{alg:univ_d_rqss_enc}.}
\begin{algorithmic}[1]
\REQUIRE {The first $i$ layers of qudits from any $d_i$ parties for any $1\leq i\leq n-k+1$}
\ENSURE {Secret $\ket{\phi}$}
\STATE Use the $i$th layer of the $d_i$ parties to recover the input state of the RQSS$_i$ scheme.
\FOR{$\ell=i-1$ to $1$ step -1}
 \PARSTATE{Consider the RQSS$_\ell$ scheme. The $\ell$th layer of the $d_i$ parties will give $d_i$ shares of this ramp scheme.}
 \PARSTATE{Collect $d_\ell-d_i=i-\ell$ more shares of this ramp scheme one each from the input states recovered from the layers $\ell+1$ to $i$.}
 \PARSTATE{Use all these $d_\ell=d_i+i-\ell$ shares to recover the input state of the RQSS$_\ell$ scheme.}
 \ENDFOR
\STATE The input state of the RQSS$_1$ scheme gives the secret $\ket{\phi}$.
\end{algorithmic}
\label{alg:univ_d_rqss_rec}
\end{algorithm}

\begin{theorem}%[Universal $d$ CE-QTS from ramp QSS]
\label{lm:ramp-univ-ceqts}
If $q$-ary $((d_i,n+d_i-k;k-1))$ ramp QSS schemes exist for $1\leq i\leq n-k+1$ where $d_i=n+1-i$, then a $q$-ary $((k,n,*))$ universal communication efficient QTS exists.
The encoding for this scheme is given in Algorithm~\ref{alg:univ_d_rqss_enc} and the recovery in 
Algorithm~\ref{alg:univ_d_rqss_rec}.
\end{theorem}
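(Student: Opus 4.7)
The plan is to adapt the three-part proof of Theorem~\ref{th:ramp-fixed-ceqts} to the multi-layer concatenated structure. I would verify in turn: (i) recoverability at each threshold $d_i$, (ii) secrecy against any $k-1$ parties, and (iii) the strict chain $\text{CC}_n(d_1)<\text{CC}_n(d_2)<\cdots<\text{CC}_n(d_{n-k+1})$ demanded by the definition of universality.

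For (i), I would argue by backward induction on the layer index $\ell$ running from $i$ down to $1$, following Algorithm~\ref{alg:univ_d_rqss_rec}. The base case uses that the layer-$i$ qudits held by the $d_i$ contacted parties form $d_i=t_i$ shares of RQSS$_i$, which is an authorized set, so the input of RQSS$_i$ is reconstructed. At each subsequent step $\ell<i$, the $d_i$ layer-$\ell$ qudits from the contacted parties together with one extracted share taken from each of the already-recovered inputs of RQSS$_{\ell+1},\ldots,$RQSS$_i$ yield $d_i+(i-\ell)=n+1-\ell=d_\ell=t_\ell$ shares of RQSS$_\ell$, again an authorized set. Descending to $\ell=1$ produces the secret $\ket{\phi}$.

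For (ii), I would use the purification trick employed in Theorem~\ref{th:ramp-fixed-ceqts}. For each $i$, introduce a purifying ancilla $E_i$ so that the shares of RQSS$_i$ together with $E_i$ form a pure-state encoding of the input of RQSS$_i$ (Lemma~\ref{lm:mixed-to-pure}). Fix any $L\subseteq[n]$ with $|L|=k-1$ and suppose Alice holds $S_{[n]\setminus L}$ along with every $E_i$. Working from the innermost RQSS$_{n-k+1}$ outward, at each layer $\ell$ the shares of $L$ in RQSS$_\ell$ form a set of size $k-1=z_\ell$, so by Lemma~\ref{lm:pu-auth} its complement together with $E_\ell$ is authorized. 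The non-party shares $S_{n+1}^{(\ell)},\ldots,S_{n+d_\ell-k}^{(\ell)}$ needed to form this complement have already been recovered in earlier iterations as pieces of the inputs of RQSS$_{\ell+1},\ldots,$RQSS$_{n-k+1}$. Iterating down to $\ell=1$, Alice recovers $\ket{\phi}$, and no-cloning then implies that $L$ has no information about the secret.

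Part (iii) is the main subtle step. Writing $w^{(\ell)}$ for the size of each share of RQSS$_\ell$ (equal across parties for symmetric constructions such as Lemma~\ref{lm:ogawa-ramp}), we have $\text{CC}_n(d_i)=d_i\sum_{\ell=1}^{i} w^{(\ell)}$, so $\text{CC}_n(d_i)<\text{CC}_n(d_{i+1})$ reduces to $(n-i)\,w^{(i+1)}>\sum_{\ell=1}^{i} w^{(\ell)}$. Since the input to RQSS$_{i+1}$ aggregates one extra share from each of RQSS$_1,\ldots,$RQSS$_i$ and hence has size $\sum_{\ell=1}^{i} w^{(\ell)}$, the average share-size bound of Lemma~\ref{lm:rqss-opt} gives $(d_{i+1}-k+1)\,w^{(i+1)}\geq\sum_{\ell=1}^{i} w^{(\ell)}$, i.e.\ $(n-i-k+1)\,w^{(i+1)}\geq\sum_{\ell=1}^{i} w^{(\ell)}$. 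Since $k\geq 2$, it follows that $(n-i)\,w^{(i+1)}>(n-i-k+1)\,w^{(i+1)}\geq\sum_{\ell=1}^{i} w^{(\ell)}$, as required. The trickiest bookkeeping will be to ensure that the underlying ramp schemes are instantiated so that each layer's contribution $w^{(\ell)}$ is genuinely common to all parties (automatic for the Ogawa-style instantiation of Lemma~\ref{lm:ogawa-ramp}, but requiring a careful allocation of unequal-sized shares to parties in the general case).
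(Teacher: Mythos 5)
Your three-part plan matches the paper's proof almost exactly: the backward induction for recoverability (with the count $d_i+(i-\ell)=d_\ell=t_\ell$), the purification-plus-no-cloning argument for secrecy working from RQSS$_{n-k+1}$ down to RQSS$_1$, and the use of Lemma~\ref{lm:rqss-opt} together with $k>1$ to get the strict chain of communication costs. Parts (i) and (ii) are complete as written.

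The one place where your proposal falls short of the stated generality is part (iii): your computation $\text{CC}_n(d_i)=d_i\sum_{\ell=1}^i w^{(\ell)}$ presumes every share of RQSS$_\ell$ has a common size $w^{(\ell)}$, whereas the theorem hypothesizes arbitrary $((d_i,n+d_i-k;k-1))$ ramp schemes, whose shares need not be equal-sized. You flag this as ``bookkeeping,'' but it is exactly where the construction has to do work: Algorithm~\ref{alg:univ_d_rqss_enc} distributes the \emph{smallest} $n$ shares of each RQSS$_i$ to the parties and routes the largest $d_i-k$ shares into the later layers. The paper's proof then fixes a maximizing set $D$ of size $d_i$, deletes one party $p$, lower-bounds $\text{CC}_n(d_i-1)$ by the cost of $D\setminus\{p\}$, applies the averaged bound of Lemma~\ref{lm:rqss-opt} to the $d_i-1$ layer-$(i{+}1)$ partial shares (viewed, after dropping shares, as a $((d_i-1,d_i-1;k-1))$ ramp scheme encoding the aggregated input), and finally uses $|S^{(\ell)}_{n+i-\ell+1}|\geq|S^{(\ell)}_p|$ --- which holds precisely because the largest shares were routed inward --- to convert the input-size sum into the cost of party $p$'s first $i$ layers. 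With that ordering convention your chain of inequalities goes through verbatim in the asymmetric case; without it, the step $(n-i)\,w^{(i+1)}>\sum_{\ell=1}^i w^{(\ell)}$ has no analogue and the argument does not close.
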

\begin{proof}
We prove this result by giving an explicit construction for a CE-QTS scheme from the given ramp QSS schemes.
The encoding of the $((k,n,*))$ universal CE-QTS is as given in Algorithm \ref{alg:univ_d_rqss_enc} while the recovery is given in 
Algorithm~\ref{alg:univ_d_rqss_rec}.
Then we prove the communication efficiency of the proposed showing that communication complexity strictly reduces with size of the authorized set. 
The $((d_i,n+d_i-k;k-1))$ ramp QSS scheme is referred to as RQSS$_{i}$ here.
Here, for any $L\subseteq[n]$, $S_L$ denotes $\{S_j\}_{j\in L}$ and $|S_j|$ gives the number of qudits in the share $S_j$.
Similar notations are used for $\{S_j^{(i)}\}$ as well.

\begin{compactenum}[(i)]
\item \textit{Recoverability}: 
The secret recovery for the $((k,n,*))$ universal CE-QTS scheme is as given in 
Algorithm~\ref{alg:univ_d_rqss_rec}.
Whenever the combiner accesses $d_i$ parties, each of those parties send the first $i$ layers to the combiner. Once this is done, the combiner has $d_i$ shares in the RQSS$_i$ scheme. Hence, RQSS$_i$ can be decoded and its input qudits recovered. However, for decoding RQSS$_\ell$ schemes for $1\leq\ell\leq i-1$, the combiner still needs $d_\ell-d_i=i-\ell$ shares. For each RQSS$_\ell$, these deficit shares can be provided by the input qudits recovered from the schemes RQSS$_{\ell+1}$, RQSS$_{\ell+2},\hdots,$ RQSS$_i$, one share from each of these $i-\ell$ schemes. This iterative decoding of RQSS$_\ell$ will finally give the secret $\ket{\phi}$ after decoding RQSS$_1$. 
\item \textit{Secrecy}: Consider the set $J\subset[n]$ such that $|J|=k-1$. By Lemma \ref{lm:mixed-to-pure}, let $E_i$ be the purifying state for the RQSS$_i$ scheme for all $1\leq i\leq n-k+1$. Assume Alice has the set of shares $\{S_{[n]\backslash J},E_1,E_2,\hdots,E_{n-k+1}\}$. For RQSS$_{n-k+1}$, now Alice has the purifying state and every share except some $k-1$ shares. This set of $k-1$ shares in RQSS$_{n-k+1}$ has no information on its qudits. Therefore, by Lemma \ref{lm:pu-auth}, Alice has an authorized set for RQSS$_{n-k+1}$, from which she recovers its input qudits. These qudits will now give one extra share to each of the schemes RQSS$_{n-k}$ till RQSS$_1$. With this extra share, RQSS$_{n-k}$ will have an authorized set and from which Alice recovers its input qudits and retrieves one extra share to each of the schemes RQSS$_{n-k-1}$ till RQSS$_{1}$. By this iterative recovery process, finally Alice can recover the secret $\ket{\phi}$ from RQSS$_1$. Thus, the secret can be recovered from the set of shares $\{S_{[n]\backslash J},E_1,E_2,\hdots,E_{n-k+1}\}$. Hence, by no-cloning theorem, $S_J$ has no information on the secret \textit{i.e.} any $k-1$ or less parties in this scheme has no information on the secret.
\item \textit{Communication efficiency}: We now prove that for any $d_i$ such that $k<d_i\leq n$, the communication cost in our scheme is less than that of $d_i-1$.
By definition, CC$_n(d_i)$ is the maximum among the communication costs of all authorized sets of size $d_i$. Let $D\subseteq[n]$ be the authorized set which has this maximum communication cost CC$_n(d_i)$. Let $p\in D$ be one of these $d_i$ parties. 
Clearly, CC$_n(d_i-1)$ should be greater than or equal to the communication cost of the authorized set given by $D\backslash\{p\}$.
\begin{eqnarray}
\text{CC}_n(d_i-1)&\geq&\sum_{j\in D\backslash\{p\}}\sum_{\ell=1}^{i+1}|S^{(\ell)}_j|\nonumber
\\&=&\sum_{j\in D\backslash\{p\}}\sum_{\ell=1}^{i}|S^{(\ell)}_j|+\sum_{j\in D\backslash\{p\}}|S^{(i+1)}_j|\nonumber
\\\label{eq:temp6}
\end{eqnarray}
The $d_i-1$ shares in $\{S_j^{(i+1)}\}_{j\in D\backslash\{p\}}$ are from the $((d_i-1,n+d_i-1-k;k-1))$ RQSS$_i$ ramp scheme. Recall from Remark \ref{re:ramp-by-dropping} that after discarding the remaining $n-k$ shares from RQSS$_i$ scheme, this set of shares alone will give a $((d_i-1,d_i-1;k-1))$ ramp scheme which encodes the same state as RQSS$_i$ scheme. By Lemma \ref{lm:rqss-opt}, the average share size of this ramp scheme is at least $\frac{1}{d_i-k}$ times the total input size \textit{i.e.}
\begin{equation*}
\frac{1}{d_i-1}\sum_{j\in D\backslash\{p\}}|S^{(i+1)}_j|\geq\frac{1}{d_i-k}\sum_{j=1}^{i}|S^{(j)}_{n+i+1-j}|.
\end{equation*}
Applying this bound in \eqref{eq:temp6}, we obtain
\begin{flalign}
&\text{CC}_n(d_i-1)&\nonumber
\\&\ \ \ \geq\sum_{j\in D\backslash\{p\}}\sum_{\ell=1}^{i}|S^{(\ell)}_j|+\frac{d_i-1}{d_i-k}\sum_{\ell=1}^{i}|S^{(\ell)}_{n+i-\ell+1}|&\nonumber
\\&\ \ \ >\sum_{j\in D\backslash\{p\}}\sum_{\ell=1}^{i}|S^{(\ell)}_j|+\sum_{\ell=1}^{i}|S^{(\ell)}_{n+i-\ell+1}|&
\label{eq:temp7}
\\&\ \ \ \geq\sum_{j\in D\backslash\{p\}}\sum_{\ell=1}^{i}|S^{(\ell)}_j|+\sum_{\ell=1}^{i}|S^{(\ell)}_p|&
\label{eq:largestdi_k}
\\&\ \ \ =\sum_{j\in D}\sum_{\ell=1}^{i}|S^{(\ell)}_j| =\text{CC}_n(d_i)\nonumber
\end{flalign}
The strict inequality in \eqref{eq:temp7} is because $k>1$.
The inequality \eqref{eq:largestdi_k} is due to the fact that the shares $S_{n+1}^{(i)},S_{n+2}^{(i)},\hdots,S_{n+d_i-k}^{(i)}$ have the largest sizes among the $n+d_i-k$ shares of the RQSS$_i$ scheme. 
Therefore, we can conclude that $\text{CC}_n(d_i-1)>\text{CC}_n(d_i)$ proving that the proposed scheme is communication efficient.
\end{compactenum}
This concludes the proof of the theorem.
\end{proof}
With the above framework, the following construction for a universal CE-QTS can be provided by using the ramp QSS scheme by Ogawa et al\cite{ogawa05}.
\begin{corollary}[Concatenated construction for universal CE-QTS]%[Universal $d$ CE-QTS from ramp QSS]
\label{co:ramp-univ-ceqts-constn}
A $q$-ary $((k,n,*))$ universal communication efficient QTS scheme can be constructed using the encoding in Algorithm \ref{alg:univ_d_rqss_enc} with the following parameters.
\begin{gather*}
q>n+k-1\text{ (prime)}\\
m=\textup{lcm}\{1,2,\hdots,n-k+1\} \\
w_1=w_2=\hdots=w_n=m\\
\textup{CC}_n(d)=\frac{dm}{d-k+1} \text{ for }d\in\{k,k+1,\hdots,n\}
\end{gather*}
\end{corollary}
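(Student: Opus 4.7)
The plan is to invoke Theorem~\ref{lm:ramp-univ-ceqts} with Ogawa's ramp scheme from Lemma~\ref{lm:ogawa-ramp} used as each RQSS$_i$. Setting $t_i = d_i = n+1-i$ and $z_i = k-1$, the required conditions $z_i < t_i$ and $n + d_i - k \leq t_i + z_i$ reduce to $k-1 < d_i$ and $n \leq 2k-1$, both of which hold by hypothesis. The dimension constraint $q > t_i + z_i$ must hold for every $i$, and the largest bound occurs at $i=1$ with $d_1 = n$, yielding $q > n + k - 1$ (prime), as claimed.

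Next, I would determine the per-layer share size $b_i := |S_j^{(i)}|$ by unwinding the concatenation. Let $a_i = d_i - k + 1 = n-k+2-i$. Since Ogawa's scheme produces each share at $1/a_i$ the size of the input to RQSS$_i$, and the input to RQSS$_1$ is the $m$-qudit secret, one gets $b_1 = m/a_1$. For $i \geq 2$, RQSS$_i$ takes one extra share from each of RQSS$_1,\ldots,\text{RQSS}_{i-1}$, so its total input size is $b_1 + \cdots + b_{i-1}$ and hence $b_i = (b_1 + \cdots + b_{i-1})/a_i$. A short induction, using $a_{i-1} - a_i = 1$, then yields the clean identity
\begin{equation*}
\sum_{\ell=1}^{i} b_\ell \;=\; \frac{m}{a_i} \qquad \text{and} \qquad b_i \;=\; \frac{m}{a_{i-1} a_i} \;\;\; (i \geq 2).
\end{equation*}

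The requirement that every $b_i$ be a non-negative integer then forces $m$ to be a multiple of each $a_i$ for $i \in \{1,\ldots,n-k+1\}$; since consecutive integers are coprime, divisibility by each pairwise product $a_{i-1}a_i$ follows automatically from divisibility by the individual $a_i$. The minimal choice is therefore $m = \text{lcm}\{1,2,\ldots,n-k+1\}$. Summing over layers gives the individual share size $w_j = \sum_{\ell=1}^{n-k+1} b_\ell = m/a_{n-k+1} = m$.

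Finally, by Algorithm~\ref{alg:univ_d_rqss_rec}, when $d_i$ parties are contacted each sends its first $i$ layers, so $\text{CC}_n(d_i) = d_i \sum_{\ell=1}^{i} b_\ell = d_i m / a_i = d_i m/(d_i-k+1)$, matching the claimed formula for all $d \in \{k,\ldots,n\}$. The only slightly delicate step is establishing the closed form $\sum_\ell b_\ell = m/a_i$; once this identity is in hand, both the minimal secret size and the communication cost formula drop out immediately, and the secrecy and recoverability properties are inherited directly from Theorem~\ref{lm:ramp-univ-ceqts}.
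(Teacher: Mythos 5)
Your proposal is correct and follows essentially the same route as the paper: invoke Theorem~\ref{lm:ramp-univ-ceqts} with Ogawa's ramp scheme as each RQSS$_i$, derive the per-layer share sizes from the concatenation recursion, read off the divisibility requirement forcing $m=\textup{lcm}\{1,\hdots,n-k+1\}$, and sum layers to get $w_j$ and $\textup{CC}_n(d)$. The only cosmetic difference is that you track the partial sums $\sum_\ell b_\ell = m/a_i$ directly where the paper tracks the input sizes $e_i$ of each RQSS$_i$ (the two are the same quantity), and your coprimality remark about consecutive $a_i$ makes the divisibility step slightly more explicit than the paper's.
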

\begin{proof}
Consider the universal CE-QTS scheme from Algorithm \ref{alg:univ_d_rqss_enc} and use the schemes from \cite{ogawa05} given in Lemma \ref{lm:ogawa-ramp} for the underlying ramp schemes.
Clearly the dimension of each qudit $q$ should be above $t_i+z_i=d_i+k-1=n+k-i$ for all $1\leq i\leq n-k+1$.
Therefore, $q>n+k-1$.

Let $e_i$ be the number of qudits in the input state of the ramp QSS scheme RQSS$_i$ corresponding to the $i$th layer. The secret is the input to the scheme RQSS$_1$. Clearly,
%\begin{equation}
$e_1=m.$
%\end{equation}
For $i>1$, the input state of the ramp QSS scheme RQSS$_i$ has one share each from the ramp QSS schemes RQSS$_1$ to RQSS$_{i-1}$.
\begin{equation}
e_i=\sum_{\ell=1}^{i-1}|S_{n+i-\ell}^{(\ell)}|
\label{eq:recursion-1}
\end{equation}
Recall that, in the $((t_i,n_i;z_i))$ ramp schemes given in 
Lemma~\ref{lm:ogawa-ramp}, the size of each share is $\frac{1}{t_i-z_i}$ times the secret size \textit{i.e.} for any $1\leq j\leq n+d_i-k$,
\begin{equation}
|S_j^{(i)}|=\frac{e_i}{d_i-k+1}.
\label{eq:recursion-2}
\end{equation}
Solving the recursion from \eqref{eq:recursion-1} and \eqref{eq:recursion-2} with the initial condition $e_1=m$, we obtain, for $2\leq i\leq n-k+1$,
\begin{eqnarray}
e_i&=&\frac{m}{d_{i-1}-k+1}.
%\label{eq:rqss-i-input-size}
\nonumber
\end{eqnarray}

Note that for each 
$1\leq i\leq n-k+1$, implementing the scheme RQSS$_i$ requires $e_i$ to be divisible by $t_i-z_i=d_i-k+1$. This can be achieved by taking
%\begin{equation}
$m=\text{lcm}\{1,2,\hdots,n-k+1\}$.
%\end{equation}

From \eqref{eq:recursion-2}, the size of the $j$th share from RQSS$_i$ is
\begin{gather*}
|S_j^{(1)}|=\frac{m}{(d_1-k+1)}
\\|S_j^{(i)}|=\frac{m}{(d_i-k+1)(d_{i-1}-k+1)}
%\label{eq:rqss-i-share-size}
\end{gather*}
for $2\leq i\leq n-k+1$.
The total communication cost during secret recovery from a set of any $d_i$ parties given by $D$ can be calculated as
\begin{equation}
\text{CC}_n(d_i)=\sum_{j\in D}\sum_{\ell=1}^{i}|S_j^{(\ell)}|
=\frac{d_i m}{d_i-k+1}
\nonumber
\end{equation}
Also, for $1\leq j\leq n$, the size of the $j$th share is given by
\begin{eqnarray}
w_j=\sum_{i=1}^{n-k+1}|S_j^{(i)}|
=\sum_{i=1}^{n-k+1}\frac{e_i}{d_i-k+1}
=m.
\nonumber
\end{eqnarray}
\end{proof}
The  above corollary gives a construction based on the concatenation framework for a universal CE-QTS scheme. In the next section, we give another construction for universal CE-QTS schemes.

\section{Universal CE-QTS schemes based on Staircase codes}\label{s:iv}
In this section, we propose an alternate construction of universal CE-QTS based on classical communication efficient secret schemes constructed using Staircase codes\cite{bitar18}.
While constructing QSS schemes based on classical secret sharing schemes, there are some important differences.
For QSS schemes, the secret recovery should recover not just the basis states but also any arbitrary superposition of the basis states.
Hence the qudits containing the secret have to be disentangled from the remaining qudits, thus making the secret recovery in QSS schemes more involved.
\subsection{Encoding}\label{ss:iv_a}
\noindent
Communication efficient quantum secret sharing schemes for particular values of $k$ and $n=2k-1$ can be designed to work for all possible values of $d$ in the range $k$ through $n$ where $k\leq d\leq n$. We introduce the following terms before discussing the scheme. For $1\leq i\leq k$,
\begin{subequations}
\label{eq:ceqts-params}
\begin{gather}
d_i=n+1-i=2k-i\\
%m_i=d_i-k+1\\
%m=\text{lcm}\{m_1,m_2,\hdots, m_k\}\\
m=\textup{lcm}\{k,k-1,\hdots,1\}\\
a_i=m/(d_i-k+1)\\
b_i=a_{i} -\ a_{i-1} \text{ for }i>1,\ b_1=a_1  
\end{gather}
\end{subequations}
Here $m$ is the total number of secret qudits shared. 
The total number of qudits with each party is also given by $m$.
This is consistent with the fact that in a perfect threshold secret sharing scheme the size of the share must be at least as large as the secret \cite{gottesman00,imai03}.

Now $a_i$ gives the number of qudits communicated from each accessible share when $d_i$ parties are accessed to recover the secret. 
This means that $a_id_i$ qudits are communicated to the combiner when $d_i$ parties are contacted. 
Pick a prime
\begin{equation*}
q\geq 2(2k-1).
\end{equation*}
Consider the basis state of the secret $\underline{s}=(s_1,s_2,\ldots,s_m)\in\mathbb{F}_q^m$ and $\underline{r}=(r_1, r_2,\ldots,r_{m(k-1)})\in \mathbb{F}_q^{m(k-1)}$. 

Entries in $\underline{s}$ are rearranged into the matrix $S$ of size $k\times (m/k)$.\vspace{-0.25cm}
\begin{eqnarray}
S=\left[\begin{array}{cccc} 
s_1&s_{k+1}& \cdots & s_{m-k+1}\\
s_2&s_{k+2}& \cdots & s_{m-k+2}\\
\vdots&\vdots& \ddots & \vdots\\
s_k&s_{2k}& \cdots & s_{m}\\
\end{array} \right]\label{eq:secret-ceqts}
\end{eqnarray}

Entries in $\underline{r}$ are rearranged into $k$ matrices \textit{i.e.} $R_1$ of size $(k-1) \times b_1$, $R_2$  of size $(k-1)\times b_2$ and so on till $R_k$ of size $(k-1)\times b_k$.
\begin{eqnarray}
R_1=
\left[\begin{array}{cccc}
r_1& r_k& \cdots & r_{(a_1-1)(k-1)+1} \\
r_2& r_{k+1}& \cdots & r_{(a_1-1)(k-1)+2}\\
\vdots& \vdots & \ddots & \vdots\\
r_{k-1}& r_{2(k-1)}& \cdots & r_{a_1(k-1)}
\end{array} \right]\nonumber
\end{eqnarray}
For $2\leq i\leq k$, $R_i$ is given by 
\begin{eqnarray}
\!\left[\!\!\begin{array}{cccc}
r_{a_{i-1}(k-1)+1}& r_{(a_{i-1}+1)(k-1)+1}& \cdots & r_{(a_i-1)(k-1)+1} \\
r_{a_{i-1}(k-1)+2}& r_{(a_{i-1}+1)(k-1)+2}& \cdots & r_{(a_i-1)(k-1)+2}\\
\vdots& \vdots & \ddots & \vdots\\
r_{(a_{i-1}+1)(k-1)}& r_{(a_{i-1}+2)(k-1)}& \cdots & r_{a_i(k-1)}
\end{array}\!\! \right].\nonumber
\end{eqnarray}
The  matrix $C$, called code matrix,  is defined as follows.
\begin{eqnarray*}
C=VY%\label{eq:codeMatrix}
%C_{(2k-1)\times m} = V_{(2k-1)\times (2k-1)}  Y_{(2k-1) \times m}\\
\end{eqnarray*}
where $Y$ is given by
\begin{eqnarray*}
Y=
\left[
\begin{tabular}{c:c:c:c:c}
\multirow{4}{*}{$\ S\ $} & {\large \ 0\ } & \multirow{2}{*}{\large 0} & \multirow{4}{*}{$\ \ddots\ $} & \multirow{3}{*}{\large 0}\\ \cdashline{2-2}
&\multirow{3}{*}{$D_1$} & &\\ \cdashline{3-3}
& & \multirow{2}{*}{$D_2$} &\\ \cdashline{5-5}
& & & & $\ \ D_{k-1}\ \ $\\
\cdashline{1-5}
\multirow{2}{*}{$R_1$} & \multirow{2}{*}{$R_2$} & \multirow{2}{*}{$R_3$} & \multirow{2}{*}{$\hdots$} & \multirow{2}{*}{$R_k$}\\
& & & &\\
\end{tabular}
\right]
\end{eqnarray*}
and $V$ is an $n\times n$ Cauchy matrix given by
\begin{eqnarray}
\label{eq:def_cauchy_mtx}
[V]_{ij}=\frac{1}{x_i-y_j}
\end{eqnarray}
where $x_1, x_2,\hdots, x_n, y_1, y_2, \hdots, y_n$ are distinct constants from $\F_q$.
Here, $D_i$ of size $(k-i)\times b_{i+1}$ is constructed by rearranging the entries in $i$th row of the matrix $[R_1\ R_2\hdots\ R_i]$. Clearly, $D_i$ contains $a_i=(k-i)b_{i+1}$ entries.

The encoding for a universal QTS is given as follows:
\begin{eqnarray}
\ket{s_1 s_2\hdots s_m}\ \mapsto\sum_{\underline{r}\in\mathbb{F}_q^{m(k-1)}}
\ \bigotimes_{i=1}^{n}\ \ket{c_{i,1} c_{i,2}\hdots c_{i,m}} \label{eq:enc_qudits_univ_d}
\end{eqnarray}
where $c_{ij}$  is the entry in $C=VY$ from $i$th row and $j$th column.
After encoding, the $i$th set of $m$ qudits is given to the $i$th party.

For example, take $k=3$. The $((k=3,n=5,*))$ scheme will have the following parameters. 
\begin{gather*}
q=11\\
m=\text{lcm}\{1,2,3\}=6\\
w_1=w_2=w_3=w_4=w_5=6\\
d_1=5,d_2=4,d_3=3\\
a_1=2,a_2=3,a_3=6\\
b_1=2,b_2=1,b_3=3
\end{gather*}
Then $C$, the coding matrix for $k=3$ is given as.
\begin{eqnarray*}
\left[
\begin{tabular}{ccccc}
9&3&4&6&1
\\2&9&3&4&6
\\8&2&9&3&4
\\7&8&2&9&3
\\5&7&8&2&9
\end{tabular}
\right]
\left[
\begin{tabular}{cc:c:ccc}
$s_1$ & $s_4$ & 0 & 0 & 0 & 0\\
$s_2$ & $s_5$ & $r_1$ & 0 & 0 & 0\\
$s_3$ & $s_6$ & $r_3$ & $r_2$ & $r_4$ & $r_6$\\\hdashline
$r_1$ & $r_3$ & $r_5$ & $r_7$ & $r_9$ & $r_{11}$\\
$r_2$ & $r_4$ & $r_6$ & $r_8$ & $r_{10}$ & $r_{12}$
\end{tabular}
\right]
\end{eqnarray*}
Here $V$ is a Cauchy matrix as defined in \eqref{eq:def_cauchy_mtx} with $y_1=0$, $y_2=1$, $y_3=2$, $y_4=3$, $y_5=4$, $x_1=5$, $x_2=6$, $x_3=7$, $x_4=8$, $x_5=9$.
The encoding for this $((3,5,*))$ scheme is then given by \eqref{eq:enc_qudits_univ_d}. Note that each entry in matrix $C$, $c_{ij}$ is a function of $\underline{s}$ and $\underline{r}$. 
However, $D_i$ are functions of $\underline{r}$ alone.
For a detailed description of this scheme, refer to the appendix in \cite{senthoor20a}.

Our encoding matrix is somewhat similar to the matrix used in \cite{bitar18}. 
However, there are some minor structural differences. 
Since we are encoding quantum states in superposition, there is no need for 
generating random bits. 
Furthermore, due to the no-cloning theorem, the total number of parties cannot exceed 
$2k-1$.

\subsection{Reconstruction of the secret}\label{ss:iv_b}
The combiner can reconstruct the secret depending upon the choice of $d$. 
Once $d=d_i$ is chosen, the combiner contacts a set of any $d_i$ parties to reconstruct the secret. 
Each of the contacted party sends $a_i=\frac{m}{d_i-k+1}$ qudits to the combiner. 
In total, the combiner has $\frac{d_im}{d_i-k+1}=a_id_i$ qudits.

With respect to the $((3,5,*))$ example in the previous section, suppose that the third party is contacted for reconstruction. 
If the party  belongs to recovery set of size $d_1=5$, then $a_1=2$ qudits are communicated to the combiner. 
Similarly, if  $d_2=4$, then $a_2=3$  and if $d_3=3$, then $a_3=6$ qudits are sent. 

The secret reconstruction happens in two stages. First, the basis states of the secret are reconstructed through suitable unitary operations. 
The classical secret sharing schemes stop the reconstruction at this point. 
But, the qudits containing the basis states of the secret can be entangled with the remaining qudits. 
So, in the second stage, the secret is extracted into a set of  qudits that are disentangled with the remaining qudits. 
\begin{lemma}[Secret recovery]\label{lm:recovery}
For a $((k,2k-1,*))$ scheme with the encoding  
given in \eqref{eq:enc_qudits_univ_d}, we can recover the secret from any 
$d=2k-i$ shares where $1\leq i\leq k$ by downloading only the first $a_i=\frac{m}{d-k+1}$ qudits from each share where 
$m$ is as given in \eqref{eq:ceqts-params}.
\end{lemma}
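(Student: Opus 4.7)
The plan is to construct an explicit decoding in two phases: (i) a classical phase that applies a sequence of Vandermonde inversions $U_K$ so the basis labels of some $m$ downloaded qudits become the entries of the secret matrix $S$, and (ii) a quantum phase that uses the generalised CNOTs $L_\alpha$ together with re-indexings of the sum over $\underline r$ to disentangle the secret register from all remaining qudits. The procedure mirrors, and extends level-by-level, the recovery worked out for the $((3,5,*))$ scheme in Section~\ref{s:example}.

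Let $T\subseteq[n]$ with $|T|=d_i$ be the contacted set; the combiner holds $\sum_{\underline r}\ket{V_T Y^{[a_i]}}$. Partition $Y^{[a_i]}$ as $[Y_1|Y_2|\cdots|Y_i]$, where $Y_j$ has width $b_j$ (columns $a_{j-1}+1,\ldots,a_j$). Inspection of the staircase of $Y$ shows that $Y_j$ is supported on rows $j,\ldots,n$ only: $Y_1$ contains $S$ in rows $1,\ldots,k$ and $R_1$ in rows $k+1,\ldots,n$, while for $j\geq 2$ the block $Y_j$ contains $D_{j-1}$ in rows $j,\ldots,k$ and $R_j$ in rows $k+1,\ldots,n$, with $D_{j-1}$ equal, up to the explicit reshape, to the $(j-1)$-st row of $[R_1\,R_2\cdots R_{j-1}]$. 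I would process the blocks in the order $j=i,i-1,\ldots,1$. At the start of iteration $j$ the previous iterations have already placed into the combiner's register the matrices $R_{j+1},\ldots,R_i$ and, via the $D$-blocks recovered earlier, the $j$-th through $(i-1)$-st rows of $R_1,\ldots,R_j$. Using these as controls, suitable $L_\alpha$'s subtract from block $Y_j$ the contributions of $Y$-rows $k+j,\ldots,k+i-1$, leaving $d_i$ unknown rows of $Y$, namely those indexed by $\{j,\ldots,n\}\setminus\{k+j,\ldots,k+i-1\}$. The associated $d_i\times d_i$ coefficient matrix $K$ is a submatrix of $V$ at $d_i$ distinct nonzero evaluation points, and is therefore invertible; applying the unitary $U_{K^{-1}}$ column-wise across the $b_j$ qudits of block $Y_j$ writes those rows into basis states, yielding $D_{j-1}$ and $R_j$. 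After iteration $j=1$ the combiner holds $\ket{S}$ on $m$ designated qudits together with the qudits now carrying $R_1,\ldots,R_i$.

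At this point the state is $\sum_{\underline r}\ket{S}_{\mathrm{sec}}\ket{R_1,\ldots,R_i}_{\mathrm{aux}}\ket{\psi(\underline s,\underline r)}_{\mathrm{rest}}$, where ``rest'' collects the undownloaded qudits still held by the parties; the $\underline s$-dependence of $\psi$ enters only through polynomial evaluations $v_u(\underline s,\underline r)$ on the unused rows of $V$. Following Section~\ref{s:example}, for each $\underline s$-carrying qudit in ``rest'' I apply a controlled $L_\alpha$ from the secret register to cancel its $\underline s$-term, and then re-index the sum via the identity $\sum_{r}\ket{r}\ket{v_u(\underline s,\ldots,r,\ldots)}=\sum_{r'}\ket{r}\ket{r'}$, mirroring the change $r_2\mapsto r'_2,\,r_4\mapsto r'_4$ made between \eqref{eq:disentangled_secret_before} and \eqref{eq:disentangled_secret}. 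Carrying this out in increasing order $j=1,2,\ldots,i$ detaches $\ket{s_1\cdots s_m}$ from a purely random superposition, completing recovery.

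The algebra of the classical phase is routine once the staircase structure is tracked carefully. The main obstacle lies in the disentanglement: at every level $j$ one must verify that every remaining $\underline s$-carrying qudit in ``rest'' has a nonzero Vandermonde coefficient allowing its $\underline s$-term to be erased, and that the subsequent re-labelling of $\underline r$ is a valid bijection of the summation index. Both reductions again appeal to the MDS property of the Vandermonde matrix $V$ at distinct nonzero points, so no new algebraic difficulty arises beyond the one already used in the classical phase.
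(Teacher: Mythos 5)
Your two-phase plan --- decode the blocks of $Y^{[a_i]}$ backwards from block $i$ to block $1$, using the previously recovered $D$-matrices to supply the missing rows, and then disentangle the secret by rebuilding the undownloaded evaluations inside the combiner's register and re-indexing the superposition --- is the same route the paper takes; the paper merely packages your ``subtract the known rows, then invert a $d_i\times d_i$ matrix'' step as multiplication by an augmented matrix $W_\ell$, whose determinant is, up to sign, exactly the minor of $V_D^{[\ell,2k-1]}$ on your column set $\{j,\ldots,n\}\setminus\{k+j,\ldots,k+i-1\}$. The genuine gap is your justification of that inversion. A square submatrix of a Vandermonde matrix obtained by selecting an \emph{arbitrary} set of columns is not invertible merely because the evaluation points are distinct and nonzero: the MDS argument requires the selected powers to be consecutive, so that (after factoring out a common power) a kernel vector would correspond to a polynomial of degree less than $d_i$ with $d_i$ distinct roots. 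For $j<i$ your exponent set is $\{j-1,\ldots,k+j-2\}\cup\{k+i-1,\ldots,2k-2\}$, which has a gap of $i-j$ missing powers in the middle, and such generalized Vandermonde minors can vanish over a finite field.

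Concretely, in the paper's own $((3,5,*))$ instance ($q=7$, $x_t=t$), take $d=4$ (so $i=2$), $D=\{2,3,4,5\}$ and block $j=1$: the unknown rows are $\{1,2,3,5\}$, the exponents are $\{0,1,2,4\}$, and since $2+3+4+5\equiv 0\pmod 7$ the polynomial $(x-2)(x-3)(x-4)(x-5)=x^4+x^2+1$ has no $x^3$ term, hence lies in the span of $\{1,x,x^2,x^4\}$ and vanishes at all four points; so $V_D^{\{1,2,3,5\}}$ is singular and your $U_{K^{-1}}$ does not exist. (The paper asserts without argument that the equivalent matrix $W_1$ is full rank, and it is not for this $D$; so you have inherited rather than created the difficulty, but the justification you offer does not close it --- one needs either a proof that these particular gapped minors are nonzero or an extra condition on $q$ and the points $x_1,\ldots,x_n$ beyond ``distinct and nonzero.'') A smaller point: the identity you invoke in the disentanglement phase, $\sum_{r}\ket{r}\ket{v_u(\underline{s},\ldots,r,\ldots)}=\sum_{r'}\ket{r}\ket{r'}$, is false as written, since a one-sided relabelling does not remove the $\underline{s}$-dependence; the re-indexing is only valid after the combiner's register has been unitarily mapped so that \emph{both} qudits of the pair carry the same affine function $v_u$ of the summation variable (this is precisely what the operations $U_{K_3}$, $U_{K_4}$ and $P_\ell$ accomplish in the paper), and that matching step needs to be stated explicitly in your argument.
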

\begin{proof} 
Each of the $d$ participants sends their first $a_i$ qudits to the combiner for reconstructing the secret. 
Let $D = \{j_1, j_2, \hdots, j_d\} \subseteq \{1,2,\hdots,2k-1\}$ be the set of $d$ shares chosen and $E=\{j_{d+1},j_{d+2},\hdots,j_{2k-1}\}$ be the complement of $D$. 
Then, \eqref{eq:enc_qudits_univ_d} can be rearranged as
\begin{eqnarray}
\sum_{\underline{r}\in\mathbb{F}_q^{m(k-1)}}
&&\textcolor{blue}{\ket{c_{j_1,1}c_{j_2,1}...c_{j_d,1}}
\ket{c_{j_1,2}c_{j_2,2}...c_{j_d,2}}}\nonumber
\\[-0.2in]&&\textcolor{blue}{\ \ \ \ \ \ \hdots\ket{c_{j_1,a}c_{j_2,a}...c_{j_d,a}}}\nonumber
\\&&\ \ \ket{c_{j_{d+1},1}c_{j_{d+2},1}...c_{j_n,1}}
\ket{c_{j_{d+1},2}c_{j_{d+2},2}...c_{j_n,2}}\nonumber
\\&&\ \ \ \ \ \ \ \ \hdots\ket{c_{j_{d+1},a}c_{j_{d+2},a}...c_{j_n,a}}\nonumber
\\&&\ \ \ \ \ket{c_{1,a+1}c_{2,a+1}...c_{n,a+1}}
\ket{c_{1,a+2}c_{2,a+2}...c_{n,a+2}}\nonumber
\\&&\ \ \ \ \ \ \ \ \ \ \ \ \hdots\ket{c_{1,m}c_{2,m}...c_{n,m}}
\label{eq:acc_qudits}
\end{eqnarray}
where we have highlighted (in blue) the qudits communicated to the combiner.
For the sake exposition we will first cover the case of $i=1$ \textit{i.e.} $d_i=2k-1$ where all the parties are contacted for their first 
$a_1$ qudits by the combiner. 
\\\\\textit{Case (i): $i=1$}%\vspace{0.2cm}
\\For $i=1$, $d=2k-1=n$. Now \eqref{eq:acc_qudits} can be rewritten as
\begin{eqnarray*}
\sum_{\underline{r}\in\mathbb{F}_q^{m(k-1)}}
\textcolor{blue}{\ket{V(S,R_1)}}&&\ket{V(0,D_1,R_2)}\ket{V(0,D_2,R_3)}\nonumber
\\[-0.5cm]&&\ \ \ \ \ \ \ \ \ \ \ \ \ \ \ \ \ \ \ \ \hdots\ket{V(0,D_{k-1},R_k)}
\end{eqnarray*}
where we slightly abused the notation. By $A(B_1,B_2,B_3)$ we actually refer to the 
matrix product $A\left[\begin{array}{ccc} B_1^t & B_2^t & B_{3}^t\end{array} \right]^t$.

Since $V$ is an $n\times n$ Cauchy matrix and, we can apply ${V}^{-1}$ to the state 
$\ket{V(S, R_1)}$, to obtain 
\begin{eqnarray*}
\textcolor{blue}{\ket{S}}\sum_{\underline{r}\in\mathbb{F}_q^{m(k-1)}}
\textcolor{blue}{\ket{R_1}}&&\ket{V(0,D_1,R_2)}\ket{V(0,D_2,R_3)}\nonumber
\\[-0.5cm]&&\ \ \ \ \ \ \ \ \ \ \ \ \ \ \ \ \ \ \ \ \ \ \ \hdots\ket{V(0,D_{k-1},R_k)}
\end{eqnarray*}
We can clearly see that the secret is disentangled with the rest of the qudits.
Therefore, we can recover arbitrary superpositions also. 

\noindent
\\\\\textit{Case (ii): $2\leq i\leq k$: } Under this case,
the state of the system is as follows. (This is the same as
\eqref{eq:acc_qudits}, only the qudits in possession of the combiner have been highlighted.)
\begin{eqnarray*}
\sum_{\underline{r}\in\mathbb{F}_q^{m(k-1)}}
&&\!\!\!\textcolor{blue}{\ket{V_D(S,R_1)}\ \ket{V_D(0,D_1,R_2)}\hdots\ket{V_D(0,D_{i-1},R_i)}}\nonumber
\\[-0.5cm]&&\ket{V_E(S,R_1)}\ \ket{V_E(0,D_1,R_2)}\hdots\ket{V_E(0,D_{i-1},R_i)}\nonumber
\\&&\ \ \ket{V(0,D_i,R_{i+1})}\hdots\ket{V(0,D_{k-1},R_k)}\nonumber
\end{eqnarray*}
We can simplify this state using the fact $V_D(0, D_j, R_{j+1}) = V_D^{[j+1,n]} (D_j, R_{j+1})$.
\begin{eqnarray*}
=\sum_{\underline{r}\in\mathbb{F}_q^{m(k-1)}}
&&\textcolor{blue}{\ket{V_D(S,R_1)}\ket{{V_D}^{[2,2k-1]}(D_1,R_2)}}\nonumber
\\[-0.5cm]&&\ \ \ \ \ \ \ \ \ \ \ \ \ \ \ \ \ \ \ \ \ \ \ \ \ \ \textcolor{blue}{\hdots\ket{{V_D}^{[i,2k-1]}(D_{i-1},R_i)}}\nonumber
\\&&\ \ \ \ \ket{V_E(S,R_1)}\ \ket{V_E(0,D_1,R_2)}\nonumber
\\&&\ \ \ \ \ \ \ \ \ \ \ \ \ \ \ \ \ \ \ \ \ \ \ \ \ \ \ \ \ \ \hdots\ket{V_E(0,D_{i-1},R_i)}\nonumber
\\&&\ \ \ \ \ \ \ \ \ket{V(0,D_i,R_{i+1})}\hdots\ket{V(0,D_{k-1},R_k)}\nonumber
\end{eqnarray*}
Since ${V_D}^{[i,2k-1]}$ is a $d\times d$ Cauchy matrix, the combiner can apply the inverse of ${V_D}^{[i,2k-1]}$ to 
$\ket{V_D^{[i,n]}(D_{i-1}, R_{i})}$ to transform the state as follows. 
\begin{eqnarray}
\sum_{\underline{r}\in\mathbb{F}_q^{m(k-1)}}\hspace{-0.5cm}
&&\textcolor{blue}{\ket{V_D(S,R_1)}\ \ket{{V_D}^{[2,2k-1]}(D_1,R_2)}}\nonumber
\\[-0.5cm]&&\ \ \ \ \ \ \ \ \textcolor{blue}{\hdots\ket{{V_D}^{[i-1,2k-1]}(D_{i-2},R_{i-1})}\ \ket{D_{i-1}}\ket{R_i}}\nonumber
\\&&\ \ \ \ \ket{V_E(S,R_1)}\ \ket{V_E(0,D_1,R_2)}\hdots\ket{V_E(0,D_{i-1},R_i)}\nonumber
\\&&\ \ \ \ \ \ \ \ \ket{V(0,D_i,R_{i+1})}\hdots\ket{V(0,D_{k-1},R_k)}\nonumber
\end{eqnarray}
Note that the matrix $D_{i-1}$ contains elements from the $(i-1)$th row of $R_{i-1}$.
Rearranging the qudits, we get
\begin{eqnarray*}
\sum_{\underline{r}\in\mathbb{F}_q^{m(k-1)}}\hspace{-0.5cm}
&&\textcolor{blue}{\ket{V_D(S,R_1)}\ket{{V_D}^{[2,2k-1]}(D_1,R_2)}}\nonumber
\\[-0.5cm]&&\ \ \ \ \ \ \ \ \ \ \ \ \ \ \ \ \ \ \ \ \ \ \ \ \ \textcolor{blue}{\hdots\ket{{V_D}^{[i-2,2k-1]}(D_{i-3},R_{i-2})}}\nonumber
\\&&\ \ \textcolor{blue}{\ket{W_{i-1}(D_{i-2},R_{i-1})}\ \ket{D_{i-1}\backslash\{R_{i-1}\}}\ket{R_i}}\nonumber
\\&&\ \ \ \ \ket{V_E(S,R_1)}\ \ket{V_E(0,D_1,R_2)}\hdots\ket{V_E(0,D_{i-1},R_i)}\nonumber
\\&&\ \ \ \ \ \ \ket{V(0,D_i,R_{i+1})}\hdots\ket{V(0,D_{k-1},R_k)}\nonumber
\end{eqnarray*}
where $D_\ell\backslash\{R_j,R_{j+1},\hdots,R_\ell\}$ indicates a vector with entries from $D_\ell$ which are not in the matrices $R_j,R_{j+1},\hdots,R_\ell$.

Here $W_\ell = [{{V_D}^{[\ell,2k-1]}}^t\ \underline{w}_{\ell,k+1}\ \underline{w}_{\ell,k+2}\hdots\underline{w}_{\ell,k+i-\ell}]^t$ for $1\leq\ell\leq i-1$ where $\underline{w}_{\ell,j}$ is a column vector of length $(2k-\ell)$ with one in the $j$th position and zeros elsewhere. $W_\ell$ is a $(2k-\ell)\times(2k-\ell)$ full-rank matrix. Clearly,
\begin{equation*}
W_\ell
\left[\begin{array}{c}
D_{\ell-1}\\R_\ell
\end{array}\right]
=\left[\begin{array}{c}
{V_D}^{[\ell,2k-1]}(D_{\ell-1},R_\ell)\\R_{\ell,[\ell,i-1]}
\end{array}\right]
\end{equation*}
Now applying $W_{i-1}^{-1}$ to the state $\ket{W_{i-1}(D_{i-2},R_{i-1})}$, we obtain
\begin{eqnarray*}
\sum_{\underline{r}\in\mathbb{F}_q^{m(k-1)}}\hspace{-0.5cm}
&&\textcolor{blue}{\ket{V_D(S,R_1)}\ket{{V_D}^{[2,2k-1]}(D_1,R_2)}}\nonumber
\\[-0.5cm]&&\ \ \ \ \ \ \ \ \ \ \ \ \ \ \ \ \ \ \ \ \ \ \ \ \ \textcolor{blue}{\hdots\ket{{V_D}^{[i-2,2k-1]}(D_{i-3},R_{i-2})}}\nonumber
\\&&\ \ \ \ \textcolor{blue}{\ket{D_{i-2}}\ket{R_{i-1}}\ \ket{D_{i-1}\backslash\{R_{i-1}\}}\ket{R_i}}\nonumber
\\&&\ \ \ \ \  \ket{V_E(S,R_1)}\ \ket{V_E(0,D_1,R_2)}\hdots\ket{V_E(0,D_{i-1},R_i)}\nonumber
\\&&\ \ \ \ \   \ket{V(0,D_i,R_{i+1})}\hdots\ket{V(0,D_{k-1},R_k)}\nonumber
\end{eqnarray*}
Rearranging the qudits, we obtain,
\begin{eqnarray*}
\sum_{\underline{r}\in\mathbb{F}_q^{m(k-1)}}\hspace{-0.5cm}
&&\textcolor{blue}{\ket{V_D(S,R_1)}\ket{{V_D}^{[2,2k-1]}(D_1,R_2)}}\nonumber
\\[-0.5cm]&&\ \ \ \ \ \ \ \ \ \ \ \ \ \ \ \ \ \ \ \ \ \ \ \ \ \textcolor{blue}{\hdots\ket{W_{i-2}(D_{i-3},R_{i-2})}}\nonumber
\\&&\ \ \textcolor{blue}{\ket{D_{i-2}\backslash R_{i-2}}\ket{R_{i-1}}\ \ket{D_{i-1}\backslash \{R_{i-1},R_{i-2}\}}\ket{R_i}}\nonumber
\\&&\ \ \ \ket{V_E(S,R_1)}\ \ket{V_E(0,D_1,R_2)}\hdots\ket{V_E(0,D_{i-1},R_i)}\nonumber
\\&&\ \ \ \ \ket{V(0,D_i,R_{i+1})}\hdots\ket{V(0,D_{k-1},R_k)}\nonumber
\end{eqnarray*}
Repeating this process for $(D_{i-3},R_{i-2})$ through $(S,R_1)$, by applying the inverses of $W_{i-2}, W_{i-3},\hdots W_{1}$ in successive steps to the suitable sets of qudits and rearranging, we obtain,
\begin{eqnarray*}
\textcolor{blue}{\ket{S}}\!\!\sum_{\substack{\underline{r}\in\\\mathbb{F}_q^{m(k-1)}}}\hspace{-0.5cm}
&&\textcolor{blue}\ \ \bl{\ket{R_1}\ket{R_2}\hdots \ket{R_i}}\nonumber
\\[-0.7cm]&&\ \ \ \ket{V_E(S,R_1)}\ket{V_E(0,D_1,R_2)}\hdots\ket{V_E(0,D_{i-1},R_i)}\nonumber
\\&&\ \ \ \ \ket{V(0,D_i,R_{i+1})}\hdots\ket{V(0,D_{k-1},R_k)}\nonumber
\end{eqnarray*}
The $m$ qudits corresponding to $\ket{S}$ is still entangled with other qudits in the system. 

Since $D_{i-1}$ is formed by entries from the $(i-1)$th row in $[R_1\ R_2\ \hdots\ R_{i-1}]$, we can rearrange the qudits to obtain
\begin{eqnarray*}
\textcolor{blue}{\ket{S}}\!\!\sum_{\substack{\underline{r}\in\\\mathbb{F}_q^{m(k-1)}}}\hspace{-0.5cm}
&&\ \bl{\ket{R_{1,J_{i-1}}}\ket{R_{2,J_{i-1}}}\hdots \ket{R_{i-1,J_{i-1}}}\ket{D_{i-1}}\ket{R_i}}\nonumber
\\[-0.7cm]&&\ \ \ \ \!\!\!\ket{V_E(S,R_1)}\ \ket{V_E(0,D_1,R_2)}\hdots\ket{V_E(0,D_{i-1},R_i)}\nonumber
\\&&\ \ \ \ \ \ \ \ \ket{V(0,D_i,R_{i+1})}\hdots\ket{V(0,D_{k-1},R_k)}\nonumber
\end{eqnarray*}
where $J_\ell=[k-1]\backslash\{\ell\}$ for $1\leq\ell\leq i-1$.\vspace{0.1cm}

Consider the $(2k-\ell)\times(2k-\ell)$ full-rank matrix
\begin{eqnarray*}
P_\ell=\left[
\renewcommand{\arraystretch}{1.5} % Default value: 1
\begin{tabular}{ccc}
$I_{k-\ell+1}$&
\multicolumn{2}{c}{\large 0}
\\\hdashline
\multicolumn{3}{c}{$V_E^{[\ell,2k-1]}$}
\\\hdashline
\multicolumn{2}{c}{\large \ 0\ }&
$I_{k-i}$
\end{tabular}
\right]
\end{eqnarray*}
where $1\leq\ell\leq i-1$. Apply $P_{i-1}$ on $\ket{D_{i-1}}\ket{R_i}$ to obtain
\begin{eqnarray*}
\textcolor{blue}{\ket{S}}\!\!\sum_{\substack{\underline{r}\in\\\mathbb{F}_q^{m(k-1)}}}\hspace{-0.5cm}
&&\ \bl{\ket{R_{1,J_{i-1}}}\ket{R_{2,J_{i-1}}}\hdots \ket{R_{i-1,J_{i-1}}}}\nonumber
\\[-0.7cm]&&\ \ \ \ \ \ \ \ \ \ \bl{\ket{D_{i-1}}\ket{V_E(0,D_{i-1},R_i)}\ket{R_{i,[i,k-1]}}}\nonumber
\\&&\ \ \ \ \!\!\!\ket{V_E(S,R_1)}\ \ket{V_E(0,D_1,R_2)}\hdots\ket{V_E(0,D_{i-1},R_i)}\nonumber
\\&&\ \ \ \ \ \ \ \ \ket{V(0,D_i,R_{i+1})}\hdots\ket{V(0,D_{k-1},R_k)}\nonumber
\end{eqnarray*}
Now, this can be rearranged to get
\begin{eqnarray*}
\textcolor{blue}{\ket{S}}\sum_{\substack{(R_1,R_2,\hdots R_{i-1},\\R_{i,[i,k-1]},\\R_{i+1}\hdots R_k)\nonumber
\\\in\mathbb{F}_q^{m(k-1)-(i-1)b_i}}}\hspace{-0.5cm}
&&\textcolor{blue}{\ket{R_1,R_2,\hdots ,R_{i-1}}\ \ket{R_{i,[i,k-1]}}}\nonumber
\\[-1.4cm]&&\ \ \ket{V_E(S,R_1)}\ \ket{V_E(0,D_1,R_2)}\nonumber
\\&&\ \ \ \ \ \ \ \ \ \ \ \ \ \ \ \ \ \ \ \ \ \ \hdots\ket{V_E(0,D_{i-2},R_{i-1})}\nonumber
\\&&\ \ \ \ \ \ \ket{V(0,D_i,R_{i+1})}\hdots\ket{V(0,D_{k-1},R_k)}\nonumber
\\\sum_{\substack{R_{i,[1,i-1]}\\\in\mathbb{F}_q^{(i-1)\times b_i}}}&&\ket{V_E(0,D_{i-1},R_i)}\textcolor{blue}{\ket{V_E(0,D_{i-1},R_i)}}
\end{eqnarray*}
\begin{eqnarray*}
=\textcolor{blue}{\ket{S}}\!\!\!\sum_{\substack{(R_1,R_2,\hdots R_{i-1},\\R_{i,[i,k-1]},\\R_{i+1}\hdots R_k)\nonumber
\\\in\mathbb{F}_q^{m(k-1)-(i-1)b_i}}}\hspace{-0.5cm}
&&\textcolor{blue}{\ket{R_1,R_2,\hdots ,R_{i-1}}\ \ket{R_{i,[i,k-1]}}}\nonumber
\\[-1.4cm]&&\ \ \ket{V_E(S,R_1)}\ \ket{V_E(0,D_1,R_2)}\nonumber
\\&&\ \ \ \ \ \ \ \ \ \ \ \ \ \ \ \ \ \ \ \ \ \ \hdots\ket{V_E(0,D_{i-2},R_{i-1})}\nonumber
\\&&\ \ \ \ \ \ \ket{V(0,D_i,R_{i+1})}\hdots\ket{V(0,D_{k-1},R_k)}\nonumber
\\&&\ \ \ \ \ \ \sum_{T_i\in\mathbb{F}_q^{(i-1)\times b_i}}\ket{T_i}\textcolor{blue}{\ket{T_i}}
\end{eqnarray*}
because the state
\begin{equation*}
\sum_{\substack{R_{i,[1,i-1]}\\\in\mathbb{F}_q^{(i-1)\times b_i}}}\ket{V_E(0,D_{i-1},R_i)}\textcolor{blue}{\ket{V_E(0,D_{i-1},R_i)}}    
\end{equation*}
is a uniform superposition of states $\ket{T_i}\ket{T_i}$ over $T_i\in\mathbb{F}_q^{(i-1)\times b_i}$ independent of the value of $D_{i-1}$ and $R_{i,[i,k-1]}$.

Repeating these operations with all $\ket{R_j}$ for $1\leq j\leq i-1$, we obtain,
\begin{eqnarray*}
\textcolor{blue}{\ket{S}}\sum_{\substack{(R_{i+1}\hdots R_k)\\\in\mathbb{F}_q^{(m-a_i)(k-1)}\\(R_{1,[i,k-1]},\hdots R_{i,[i,k-1]})\\\in\mathbb{F}_q^{(k-i)a_i}}}\hspace{-0.5cm}
&&\textcolor{blue}{\ket{R_{1,[i,k-1]},R_{2,[i,k-1]},\hdots R_{i,[i,k-1]}}}\nonumber
\\[-1.4cm]&&\ \ \ket{V(0,D_i,R_{i+1})}\hdots\ket{V(0,D_{k-1},R_k)}\nonumber
\\\nonumber\\\nonumber
\\&&\hspace{-2.4cm}\sum_{\substack{T_1\in\\\mathbb{F}_q^{(i-1)\times b_1}}}\ket{T_1}\textcolor{blue}{\ket{T_1}}\sum_{\substack{T_2\in\\\mathbb{F}_q^{(i-1)\times b_2}}}\ket{T_2}\textcolor{blue}{\ket{T_2}}\hdots\sum_{\substack{T_i\in\\\mathbb{F}_q^{(i-1)\times b_i}}}\ket{T_i}\textcolor{blue}{\ket{T_i}}
\end{eqnarray*}
At this point the secret is completely disentangled with the rest of the qudits and the recovery is complete.
\end{proof}

\subsection{Secrecy}\label{ss:iv_c}
In the scheme given by \eqref{eq:enc_qudits_univ_d}, the combiner can recover the secret by accessing $k$ parties (from case (ii) when $i=k$ in the proof of Lemma \ref{lm:recovery}). So, by no-cloning theorem, the remaining $k-1$ parties in the scheme should have no information about the secret. Thus, this scheme satisfies the secrecy property.
With these results in place we have our central contribution. 

\begin{theorem}[Staircase construction for universal CE-QTS]
The encoding given in \eqref{eq:enc_qudits_univ_d} gives a $((k,n=2k-1,*))$ universal CE-QTS scheme with the following parameters.
\label{th:staircase-constn-univ-ceqts}
\begin{gather*}
%\label{eq:sc-univ-ceqts}. 
q\geq 2(2k-1)\text{ (prime)}
\\m=\text{lcm}\{1,2,\hdots,k\}
\\w_1=w_2=\hdots=w_n=m
\\\text{CC}_n(d)=\frac{dm}{d-k+1} \text{ for }d\in\{k,k+1,\hdots,2k-1\}
\end{gather*}
\end{theorem}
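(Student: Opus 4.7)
The plan is to assemble the theorem as a corollary of Lemma \ref{lm:recovery} together with some straightforward parameter bookkeeping and a standard no-cloning argument. Four claims must be checked: (a) the listed parameters are consistent with the encoding in \eqref{eq:enc_qudits_univ_d}; (b) recoverability holds for every $d \in \{k,\ldots,2k-1\}$; (c) any $k-1$ or fewer parties have no information on the secret; (d) the communication cost for accessing $d$ parties is $dm/(d-k+1)$ and is strictly decreasing in $d$, so that the scheme is universal CE.

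First I would verify the parameter choices. The condition $q > 2k-1$ (prime) ensures that the $n \times n$ Vandermonde matrix $V$ and every square Vandermonde submatrix $V_D^{[i,2k-1]}$ that appears in the recovery procedure is invertible over $\F_q$. The choice $m = \operatorname{lcm}\{1,2,\ldots,k\}$ is forced by the requirement that $a_i = m/(d_i - k + 1) = m/(k-i+1)$ be a positive integer for each $i \in [k]$, since $a_i$ counts qudits. The share size $w_j = m$ is immediate from the encoding in \eqref{eq:enc_qudits_univ_d}: the $j$th party receives exactly the $j$th row of the $n \times m$ matrix $C$.

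Recoverability is essentially the content of Lemma \ref{lm:recovery}: for any $i \in [k]$, any set $D$ of $d_i = 2k-i$ parties can recover the secret by downloading only their first $a_i$ qudits. This gives recoverability for every $d \in \{k, k+1, \ldots, 2k-1\}$ and hence all subsets of size $\geq k$ form authorized sets. Secrecy then follows by a no-cloning argument: taking $i = k$ in Lemma \ref{lm:recovery}, any $k$ parties form an authorized set, so if some $(k-1)$-subset $B$ had any information about the secret, its complement (which has size $k$) together with an independent copy of $B$'s share would allow two parties to reconstruct the secret, contradicting no-cloning. Hence any $k-1$ or fewer parties are unauthorized.

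For the communication cost, when $d_i$ parties are contacted each sends $a_i$ qudits, so
\begin{equation*}
\operatorname{CC}_n(d_i) \;=\; d_i\, a_i \;=\; \frac{d_i\, m}{d_i - k + 1}.
\end{equation*}
Finally, to confirm the universal CE property per Definition \ref{def:ce-qts}'s analogue, I would note that
\begin{equation*}
\frac{d\, m}{d - k + 1} \;=\; m \;+\; \frac{(k-1)\,m}{d - k + 1}
\end{equation*}
is strictly decreasing in $d$ whenever $k > 1$, so $\operatorname{CC}_n(2k-1) < \operatorname{CC}_n(2k-2) < \cdots < \operatorname{CC}_n(k)$. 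Since all four claims (a)--(d) are established, the theorem follows. There is no genuine obstacle here; the real technical work was done in Lemma \ref{lm:recovery}, and the present theorem is a clean packaging of that result with parameter accounting.
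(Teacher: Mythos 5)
Your proposal is correct and follows essentially the same route as the paper: the paper also presents this theorem as a direct packaging of Lemma \ref{lm:recovery} (recoverability for every $d_i=2k-i$ via the first $a_i$ qudits), a no-cloning/pure-state argument for secrecy using the $i=k$ case of that lemma, and the parameter accounting $w_j=m$, $\text{CC}_n(d_i)=d_ia_i=\frac{d_im}{d_i-k+1}$. The only nitpick is your secrecy sentence: "any information" does not by itself permit reconstruction, so the clean statement is the standard pure-state fact that the complement of an authorized set carries \emph{no} information (information gain would disturb the state and spoil perfect recovery by the authorized complement), which is exactly the one-line argument the paper invokes.
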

We can compare this universal CE-QTS scheme with the scheme from Corollary \ref{co:ramp-univ-ceqts-constn}. (Refer Table \ref{tab:contributions}.) Both give the same values for the parameters $w_j/m$ and $CC_n(d)/m$.
However the concatenated construction could give a smaller secret size for $n<2k-1$.

\subsection{Discussion on communication complexity gains}
\label{ss:iv_d}
In the standard $((k,n))$ QTS scheme from \cite{cleve99}, the secret can be recovered when the combiner communicates with  $k$ parties. 
Here, if the secret is of size $m$ qudits, then the number of qudits communicated to  the combiner  is $km$ qudits.
The communication cost per secret qudit is $k$ qudits.

In the $((k,n,d))$ communication efficient QTS schemes from \cite{senthoor19} and concatenated construction in Corollary \ref{co:ramp-fixed-ceqts-constn}, the secret can be recovered when the combiner contacts $k$ parties and receiving $km$ qudits where $m=d-k+1$.
This leads to  a cost of $k$ qudits per secret qudit.
However, when the combiner contacts $d$ parties, where $d$ is a fixed value such that $k<d\leq n$, 
the secret can be recovered with a communication cost of $\frac{dm}{d-k+1}$ qudits.
The cost per qudit is $\frac{d}{d-k+1}$ which is strictly less than $k$.

In the $((k,n,*))$ universal QTS schemes, the secret can be recovered by the combiner by accessing any $d$ parties, where the number of parties accessed given by $k\leq d\leq n$ can also be chosen by the combiner. For the chosen value of $d$, the secret can be recovered by downloading $\frac{d m}{d-k+1}$ qudits.
The communication cost for each qudit of the secret is $\frac{d}{d-k+1}$ which is same as that of \cite{senthoor19}.
The communication cost decreases with the increasing number of parties accessed.
(Refer Fig. \ref{fig:di-vs-cc}.)
However, we are able to achieve this for all possible $d$ using the same scheme and not fixing $d$ a priori.
Refer Table \ref{tab:contributions} for a comparison of the different QTS schemes we discussed so far.
The optimality of our construction with respect to the communication complexity will be discussed in the next section.
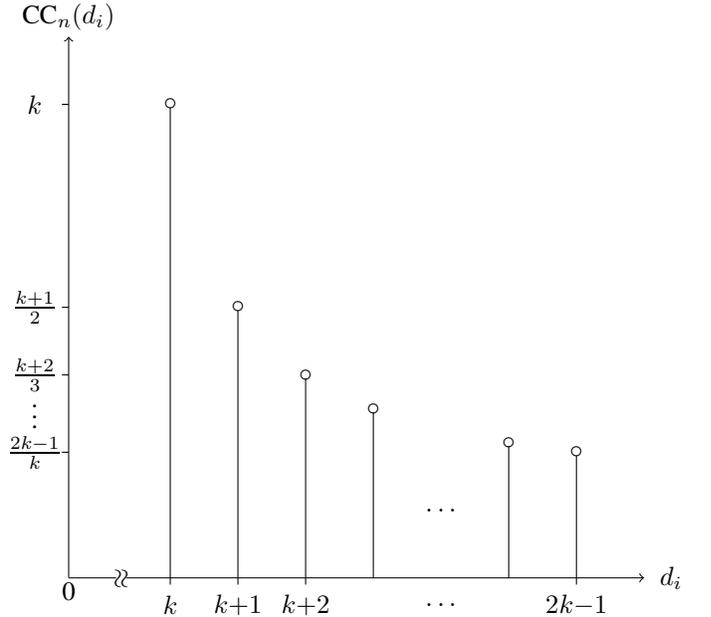
\begin{figure}[ht]
\begin{center}
\hspace{-0.5cm}
\begin{tikzpicture}[scale=0.9, every node/.style={scale=1}]
%\begin{scope}[xshift=0cm,yshift=0cm]
\draw (5.5,0) -- (6.2,0);
\node at (6.2,0) {\rotilde};
\node at (6.3,0) {\rotilde};
\draw[->] (6.35,0) -- (14,0);
\node at (14.4,0) {$d_i$};
\draw[->] (5.5,0) -- (5.5,8);
\node at (5.5,8.3) {$\text{CC}_n(d_i)$};
\node at (5.5,-0.2) {0};
%d=7
\draw (7,0) -- (7,-0.1);
\node at (7,-0.4) {$k$};
\draw (5.5,7) -- (5.4,7);
\node at (5,7) {$k$};
\node at (7,7) {$\circ$};
\draw (7,0) -- (7,6.95);
%d=8
\draw (8,0) -- (8,-0.1);
\node at (8,-0.4) {$k$$+$$1$};
\draw (5.5,4) -- (5.4,4);
\node at (5,4) {$\frac{k+1}{2}$};
\node at (8,4) {$\circ$};
\draw (8,0) -- (8,3.95);
%d=9
\draw (9,0) -- (9,-0.1);
\node at (9,-0.4) {$k$$+$$2$};
\draw (5.5,3) -- (5.4,3);
\node at (5,3) {$\frac{k+2}{3}$};
\node at (9,3) {$\circ$};
\draw (9,0) -- (9,2.95);
\node at (5,2.5) {$\vdots$};
%d=10
\node at (10,2.5) {$\circ$};
\draw (10,0) -- (10,2.45);
%d=11
%\node at (11,2.2) {$\circ$};
%\draw (11,0) -- (11,2.15);
\node at (11,-0.4) {$\hdots$};
\node at (11,1.0) {$\hdots$};
%d=12
\node at (12,2) {$\circ$};
\draw (12,0) -- (12,1.95);
%d=13
\draw (13,0) -- (13,-0.1);
\node at (13,-0.4) {$2k$$-$$1$};
\draw (5.5,1.857) -- (5.4,1.857);
\node at (5,1.857) {$\frac{2k-1}{k}$};
\node at (13,1.857) {$\circ$};
\draw (13,0) -- (13,1.807);
%\end{scope}
\end{tikzpicture}
\captionsetup{justification=justified}
\caption{Communication cost for $d\in\{k,k+1,\hdots,n\}$ in $((k,n=2k-1,*))$ universal CE-QTS schemes from Concatenated construction and Staircase construction}
\label{fig:di-vs-cc}
\end{center}
\end{figure}

\section{Optimality of CE-QTS schemes}
\label{s:v}
In this section, we derive lower bounds on the quantum communication complexity of the quantum threshold schemes. 
Our bounds are applicable for both universal and non-universal communication efficient schemes. 
Specifically, we show  that  secret recovery from a set of $d$ shares in communication efficient QTS schemes (for both fixed $d$ and universal) requires at least $\frac{d}{d-k+1}$ qudits to be transmitted to the combiner for each qudit in the secret. Then we show that our constructions satisfies these bounds on communication complexity. We also discuss the optimality of our constructions with respect to the storage cost. 

\subsection{Lower bound on communication complexity}
Bound on communication complexity for the $((k,n,d))$ CE-QTS was already shown in \cite{senthoor19} for the special case of $n=2k-1$. 
Here we generalize these bounds to both $((k,n,d))$ and $((k, n, *))$ QTS and also lift the restriction that $n=2k-1$.
We first bound the combined size of partial shares from $d-k+1$ parties. The generalization of the result from $n=2k-1$ to $n\leq 2k-1$ is mainly due to a difference in our approach to prove this bound. Then we use this to prove the bound on the communication cost \textit{i.e.} the combined size of partial shares from all $d$ parties in a way similar to \cite{senthoor19}.
Our bounds imply that the proposed CE-QTS and universal CE-QTS constructions for all $n\leq 2k-1$ are optimal with respect to the communication cost. 
First we need the following lemmas.
\begin{lemma}%[Secret replacement with authorized set]
\label{lm:sec-rep}
\cite[Theorem~5]{gottesman00}
A party having access to an authorized set of shares in a quantum secret sharing scheme can replace the secret encoded with any arbitrary state (of the same dimension as the secret) without disturbing the remaining shares. After this replacement, secret recovery from any of the authorized sets will give only the new state.
\end{lemma}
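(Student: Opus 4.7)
The plan is to reduce to the pure-state case via Lemma~\ref{lm:mixed-to-pure} and then exploit the fact that secret recovery on an authorized set can be realized by an isometry whose inverse can be inverted locally on that set alone. The key idea is that swapping the secret with a new state at the logical level is implemented by the sandwich \emph{decode $\to$ swap $\to$ encode}, where both the decode and the encode act only on the shares of the authorized set.

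First, let $A$ be the authorized set held by the party, and let $\bar A=[n]\setminus A$. By Lemma~\ref{lm:mixed-to-pure}, we may view the scheme as a pure-state QSS with one purifying share; by Lemma~\ref{lm:pu-auth}, $\bar A$ together with the purifying share is itself authorized, so the pure global state $\ket{\Phi(\psi)}$ can be written, up to a fixed isometry on $A$ with ancilla $\ket{0}$, as $\ket{\psi}_{\rm sec}\otimes\ket{\chi}_{\bar A\text{-side}}$, where $\ket{\chi}$ is independent of $\psi$. Concretely, recoverability of $A$ yields a unitary $U_A$ acting on $A$ and ancilla such that $U_A(\ket{\Phi(\psi)}\otimes\ket{0}_{\rm anc})=\ket{\psi}_{\rm sec}\otimes\ket{\chi}_{A\cup\bar A\cup \rm anc}$, with $\ket{\chi}$ independent of the secret (this is the standard consequence of the quantum error correction condition implicit in any pure-state QSS). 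Crucially, $U_A$ touches only the qudits in $A$ and local ancillas.

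Second, the party performs the following three local operations on $A$ and ancillas: (i) apply $U_A$ to isolate the secret register in the state $\ket{\psi}$; (ii) swap $\ket{\psi}$ with a freshly prepared $\ket{\psi'}$; (iii) apply $U_A^{-1}$. Because $\ket{\chi}$ did not depend on the secret, step (iii) returns the $A$ and ancilla registers to exactly the configuration they would have had if the scheme had been initialized with secret $\ket{\psi'}$ in the first place. Combined with the untouched shares in $\bar A$, the global state is therefore $\ket{\Phi(\psi')}$ (tensor the ancilla in $\ket{0}$ which is discarded). Hence any subsequent recovery from any authorized set $A'$ returns $\ket{\psi'}$ by definition of the scheme.

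The main subtlety, and what I would expect to spend most of the writing on, is justifying that the ``residue'' $\ket{\chi}$ produced by $U_A$ really is independent of $\psi$. In a pure-state QSS this follows from unauthorized-set secrecy: the reduced density operator on $\bar A$ must be the same for every secret, which by Uhlmann's theorem forces the purification on the $A$-side to factor (up to a secret-independent isometry) as $\ket{\psi}_{\rm sec}\otimes\ket{\chi}$. Once this factorization is in hand, the sandwich argument above goes through cleanly, and no additional care is needed for the complementary shares since they are never acted upon.
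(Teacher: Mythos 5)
The paper itself does not prove this lemma; it is imported by citation as Theorem~5 of Gottesman \cite{gottesman00}, and your argument is essentially the proof given there: a local decoding unitary on the authorized set extracts the secret into a register while leaving a residue independent of the secret, one swaps in the new state, and one re-encodes by the inverse unitary. The decode--swap--encode sandwich, the fact that only qudits in $A$ and local ancillas are touched, and the linearity argument extending the conclusion to arbitrary superpositions are all sound, and the crux you correctly identify is the secret-independence of the residue $\ket{\chi}$.

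One sentence in your second paragraph is wrong and should be repaired, even though the error does not propagate: you assert that, after purifying via Lemma~\ref{lm:mixed-to-pure}, the set $\bar A$ together with the purifying share is \emph{authorized} by Lemma~\ref{lm:pu-auth}. Since $A$ is authorized and $A$ and its complement (including the purifying share) partition the pure-state scheme, no-cloning forces that complement to be \emph{unauthorized}; Lemma~\ref{lm:pu-auth} runs in the opposite direction (complement of an unauthorized set is authorized) and is not the statement you need. What your proof actually relies on---and what you state correctly in the final paragraph---is exactly this unauthorized-ness: the reduced state on the complement is the same for every secret, and Uhlmann's theorem then yields the factorization $U_A(\ket{\Phi(\psi)}\otimes\ket{0})=\ket{\psi}\otimes\ket{\chi}$ with $\ket{\chi}$ fixed. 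With that one sentence corrected, the proof is complete and matches the cited source's argument.
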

\begin{lemma}%[Minimal dimension of quantum channel for transmission] 
\label{lm:lim-qc}
\cite[Lemma~5]{senthoor19}
Even in the presence of pre-existing entanglement between two parties, transmitting an arbitrary quantum state from a Hilbert space of dimension $M$ requires a channel of dimension $M$.
\end{lemma}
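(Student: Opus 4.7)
The plan is to prove this lemma by a direct information-theoretic argument using the von Neumann entropy toolkit from Section~\ref{sec:bg}. Assume, for contradiction, that Alice can transmit an arbitrary state from an $M$-dimensional Hilbert space to Bob using a quantum channel of dimension $M' < M$ together with some pre-shared entangled state $\ket{\Psi}_{E_AE_B}$. The goal is to derive a contradiction using only subadditivity \eqref{eq:sub-addi}, the Araki-Lieb inequality \eqref{eq:araki-lieb}, and the quantum data processing inequality.

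First, I would reduce the problem to a canonical instance by purification. Let Alice's input be the $A$-half of a maximally entangled pair $\ket{\Phi^+}_{RA}$ with an external reference system $R$ of dimension $M$; perfect transmission of an arbitrary input is equivalent to perfectly transmitting this one canonical state. The initial global state is $\ket{\Phi^+}_{RA}\otimes\ket{\Psi}_{E_AE_B}$, so $\mathsf{S}(R)=\log M$ and, crucially, $R$ is completely uncorrelated with Bob's share of the entanglement, giving $I(R:E_B)=0$.

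Second, I model the protocol itself. Alice applies a unitary on $AE_A$ producing a channel payload $C$ of dimension $M'$ and some residue $E_A'$, and sends $C$ to Bob; Bob then applies a unitary on $CE_B$ producing the output $A'$ together with some residue $E_B'$. Perfect transmission means $\rho_{RA'}=\ket{\Phi^+}\bra{\Phi^+}_{RA'}$, which forces $I(R:A')=2\log M$. Since $A'$ is obtained from $CE_B$ by a unitary followed by partial trace over $E_B'$, the quantum data processing inequality gives $I(R:A')\leq I(R:CE_B)$.

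The main step is then to bound how much the mutual information with the reference can grow when we adjoin the transmitted system $C$ to $E_B$. Applying \eqref{eq:sub-addi} gives $\mathsf{S}(CE_B)\leq\mathsf{S}(C)+\mathsf{S}(E_B)$, and applying \eqref{eq:araki-lieb} to $R(CE_B) = (RE_B)C$ gives $\mathsf{S}(RCE_B)\geq\mathsf{S}(RE_B)-\mathsf{S}(C)$. Subtracting and using $I(R:E_B)=0$ yields $I(R:CE_B)\leq 2\mathsf{S}(C)\leq 2\log M'$. Chaining with the previous inequality forces $2\log M\leq 2\log M'$, i.e.\ $M'\geq M$, contradicting our assumption. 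The main obstacle is precisely this last step: one cannot naively bound $I(R:CE_B)$ by $2\log M'$ because $E_B$ may itself be large and highly entangled with $E_A$; the argument must leverage $I(R:E_B)=0$ at the outset and then combine subadditivity with Araki-Lieb so that only the $C$-dependent increment survives.
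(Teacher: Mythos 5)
Your argument is correct. Note first that this paper does not actually prove Lemma~\ref{lm:lim-qc}; it is imported verbatim from \cite[Lemma~5]{senthoor19}, so there is no in-paper proof to compare against. What you have written is the standard entropic converse for entanglement-assisted quantum communication, and every step checks out with the tools the paper provides: the reduction to the maximally entangled input $\ket{\Phi^+}_{RA}$ is legitimate (a channel that is the identity on all pure states of an $M$-dimensional space is the identity channel, hence preserves entanglement with $R$); perfect transmission forces $I(R:A')=2\log M$; data processing gives $I(R:A')\leq I(R:CE_B)$; and the combination of subadditivity \eqref{eq:sub-addi} with Araki--Lieb \eqref{eq:araki-lieb} applied to the bipartition $(RE_B)|C$ yields $I(R:CE_B)\leq I(R:E_B)+2\mathsf{S}(C)\leq 2\log M'$ via \eqref{eq:max-entropy}, whence $M'\geq M$. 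Two small points are worth making explicit if you write this up: (i) the identity $I(R:E_B)=0$ must hold at the moment the bound is applied, i.e.\ \emph{after} Alice's encoding unitary --- this is immediate because her unitary acts only on $AE_A$ and therefore leaves $\rho_{RE_B}=\rho_R\otimes\rho_{E_B}$ untouched, but it should be said; and (ii) Bob may append ancillas before his decoding unitary, which changes nothing since they are in a fixed product state. Your closing remark correctly identifies the crux: one cannot bound $I(R:CE_B)$ by the size of $CE_B$, and the whole point is that adjoining a $d$-dimensional system can raise mutual information by at most $2\log d$.
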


With these two lemmas we can bound the  combined size of partial shares from $d-k+1$ parties in the secret recovery from $d$ parties.
\begin{lemma}%[Lower bound on part of communication cost]
\label{lm:bound-helps}
In any $((k,n))$ QSS scheme, which recovers a secret of dimension $M$ by accessing a set of $d$ parties, the total communication to the combiner from any $d-k+1$ parties among the $d$ parties is of dimension at least $M$.
\end{lemma}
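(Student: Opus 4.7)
The plan is to set up a two-party simulation of recovery in which transmitting an arbitrary state of dimension $M$ must pass through the $T$-channel alone, so that Lemma~\ref{lm:lim-qc} forces the dimension bound. Fix any $T \subseteq D$ of size $d-k+1$ and write $U = D \setminus T$, so $|U|=k-1$.

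First, I would lift the scheme to its pure-state extension via Lemma~\ref{lm:mixed-to-pure}, adjoining a purification share $E$. Because $|U|=k-1$, the set $U$ is unauthorized, and Lemma~\ref{lm:pu-auth} then gives that its complement $T\cup([n]\setminus D)\cup\{E\}$ is an authorized set in the pure-state scheme. I split the system between two collaborators: Alice holds the shares in $T\cup([n]\setminus D)\cup\{E\}$ (authorized) and Bob holds the shares in $U$ (unauthorized). In this simulation Alice is the one who emits the $T$-messages (she can, since these are a local function of the $T$-shares she holds), and Bob emits the $U$-messages; the combiner runs the honest recovery procedure on what it receives.

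Next, I apply Lemma~\ref{lm:sec-rep}. Since Alice's set is authorized, she can replace the originally encoded secret by any state $\ket{\psi}\in\mathbb{C}^M$ of her choice without disturbing Bob's $U$-shares. Because Bob's $U$-messages depend only on these untouched shares, they are independent of $\psi$, and I can reorder the protocol so that Bob transmits them \emph{first}, depositing into the combiner's hands a $\psi$-independent quantum register. This register, together with whatever correlated ancillas Alice retains, constitutes pre-existing entanglement between Alice and the combiner. Alice then picks $\ket{\psi}$, performs her replacement, and finally transmits the $T$-messages, whose Hilbert-space dimension is by definition the total $T$-communication dimension $M_T$. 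The combiner, running the original recovery, outputs $\ket{\psi}$.

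This is exactly the situation of Lemma~\ref{lm:lim-qc}: an arbitrary $M$-dimensional state is being transmitted from Alice to the combiner through a channel of dimension $M_T$ aided by pre-shared entanglement, which forces $M_T\geq M$. The one subtle step is justifying the temporal reordering: the CPTP map that Bob uses to produce his $U$-messages acts only on his $U$-shares, while Alice's replacement operation acts on a disjoint register, so the two commute and Bob may act first without loss of generality. I expect this commutativity, together with the care needed to handle the boundary case $n=d$ (where $[n]\setminus D=\emptyset$, and Alice's authorized set reduces to $T\cup\{E\}$, still authorized by Lemma~\ref{lm:pu-auth}), to be the only non-routine part of the argument.
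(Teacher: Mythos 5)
Your proposal is correct and follows essentially the same route as the paper's proof: purify the scheme via Lemma~\ref{lm:mixed-to-pure}, give the $k-1$ complementary shares to the receiving side (unauthorized), give the $d-k+1$ shares plus the remaining shares and the purification to Alice (authorized by Lemma~\ref{lm:pu-auth}), let Alice inject $\ket{\psi}$ via Lemma~\ref{lm:sec-rep}, and invoke Lemma~\ref{lm:lim-qc} on the channel carrying the $d-k+1$ partial shares. Your explicit reordering argument (sending the $\psi$-independent messages first so they count as pre-existing entanglement) is a slightly more careful rendering of what the paper handles by merging the combiner with the unauthorized party and noting that any qudits Bob already holds are merely shared entanglement carrying no information about $\ket{\psi}$.
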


\begin{proof}
Let $S_1, S_2,\hdots, S_n$ be the shares of the $n$ parties in the $((k,n))$ QSS scheme. By Lemma~\ref{lm:mixed-to-pure}, consider an extra share $E$ for the given $((k,n))$ scheme such that the new QSS scheme with $n+1$ parties thus obtained is a pure state QSS scheme. (This pure state QSS scheme need not be a threshold QSS scheme.) Now, we prove the lemma by means of a communication protocol between Alice and Bob based on this pure state QSS scheme. The objective of the protocol is for Alice to send an arbitrary state $\ket{\psi}$ of dimension $M$ to Bob.

First, encode the state $\ket{0}$ using the pure state QSS scheme. Consider the set of $d$ parties $D\subseteq [n]$ where each participant in $D$ can send a part of its share to the combiner to recover the secret.
Consider any subset $L\subseteq D$ with $d-k+1$ parties. Bob is given the $k-1$ shares from the parties in $D\backslash L$, which form an unauthorized set.
Alice is given the $d-k+1$ shares from $L$, the $n-d$ shares from $[n]\backslash D$ and the extra share $E$. By Lemma~\ref{lm:pu-auth}, the set of shares with Alice form an authorized set, as this set is actually a complement of the unauthorized set with Bob.

Now, Alice replaces the secret $\ket{0}$ in the scheme with $\ket{\psi}$ (by Lemma~\ref{lm:sec-rep}).
Clearly, Bob has no prior information on $\ket{\psi}$ even though he may share some entanglement with Alice due to qudits he received so far.

Now, if Alice needs to transmit $\ket{\psi}$ to Bob, she needs to transmit some of the qudits with her to Bob so that Bob can use the secret recovery of the underlying QSS scheme to recover $\ket{\psi}$.
To achieve this, Alice can transmit to Bob the necessary parts from the $d-k+1$ shares from $L$ (which along with necessary parts from the $k-1$ shares from $D\backslash L$ already with him will give complete information about the secret).
Applying Lemma~\ref{lm:lim-qc} here, it is implied that the communication from the shares in $L$ during the secret recovery from the shares in $D$ has to be at least $M$.
\end{proof}

Next we use Lemma~\ref{lm:bound-helps} to obtain a lower bound on communication complexity of $d$ partial shares. We use the same technique as in \cite{senthoor19} to achieve this.
\begin{theorem}[Lower bound on communication cost]\label{th:co-lbounds} % on communication cost
In any $((k,n))$ quantum secret sharing scheme, recovery of a secret of dimension $M$ by accessing $d$ parties requires communication of a state from a Hilbert space of dimension at least $M^{d/(d-k+1)}$ to the combiner.
\end{theorem}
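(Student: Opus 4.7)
The plan is to combine Lemma~\ref{lm:bound-helps} with a counting argument over subsets of the contacted parties. Fix an authorized set $D \subseteq [n]$ of size $d$ that realizes $\text{CC}_n(d)$, and let $q_i$ denote the dimension of the Hilbert space from which party $i \in D$ draws the state it transmits to the combiner. Then the total communication is a state in a Hilbert space of dimension $\prod_{i \in D} q_i$, so it suffices to show
\begin{equation*}
\prod_{i \in D} q_i \;\geq\; M^{d/(d-k+1)}.
\end{equation*}

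Next, I would invoke Lemma~\ref{lm:bound-helps} for every $(d-k+1)$-subset $L \subseteq D$. That lemma gives $\prod_{i \in L} q_i \geq M$ for each such $L$. Taking the product of these inequalities over all $\binom{d}{d-k+1}$ choices of $L$ yields
\begin{equation*}
\prod_{L \subseteq D,\, |L|=d-k+1}\ \prod_{i \in L} q_i \;\geq\; M^{\binom{d}{d-k+1}}.
\end{equation*}
On the left-hand side, each index $i \in D$ appears in exactly $\binom{d-1}{d-k}$ of these subsets, so the left-hand side equals $\bigl(\prod_{i \in D} q_i\bigr)^{\binom{d-1}{d-k}}$.

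Rearranging and using the identity $\binom{d}{d-k+1}/\binom{d-1}{d-k} = d/(d-k+1)$ immediately gives the claim. The only genuine obstacle has already been surmounted by Lemma~\ref{lm:bound-helps}; the remaining step is purely combinatorial, essentially the same symmetrizing trick used in the Han–type entropy inequalities and mirroring the argument in \cite{senthoor19} for $n=2k-1$. The improvement here is that Lemma~\ref{lm:bound-helps} was proved without the restriction $n=2k-1$, so the same counting argument now extends to all $n \leq 2k-1$ and, since the bound is derived for an arbitrary $d \geq k$, it applies uniformly to both fixed-$d$ CE-QTS and universal CE-QTS schemes.
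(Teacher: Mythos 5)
Your proposal is correct: Lemma~\ref{lm:bound-helps} does apply to \emph{every} $(d-k+1)$-subset of $D$, each index of $D$ lies in exactly $\binom{d-1}{d-k}$ such subsets, and the identity $\binom{d}{d-k+1}/\binom{d-1}{d-k}=d/(d-k+1)$ is right, so multiplying the subset inequalities and taking the appropriate root yields $\prod_{i\in D}q_i\geq M^{d/(d-k+1)}$. However, your route differs from the paper's proof of Theorem~\ref{th:co-lbounds}. The paper applies Lemma~\ref{lm:bound-helps} only \emph{once}: it orders the partial shares so that $\dim(H_{1,D})\geq\cdots\geq\dim(H_{d,D})$, applies the lemma to the $d-k+1$ smallest ones to get $\dim(H_{k,D})^{d-k+1}\geq M$, deduces the per-share bound $\dim(H_{j,D})\geq M^{1/(d-k+1)}$ for each of the $k-1$ largest shares, and multiplies. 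That argument has the side benefit of producing an explicit lower bound on individual partial-share sizes (used implicitly for the storage discussion), whereas your symmetrization is shorter, avoids the WLOG reordering, and treats all parties uniformly. Notably, your counting over all $\binom{d}{d-k+1}$ subsets with multiplicity $\binom{d-1}{d-k}$ is exactly the device the paper deploys in its \emph{second}, information-theoretic proof of the same bound in Section~\ref{s:vi}, only there it is applied to a sum of von Neumann entropies rather than a product of dimensions; so your proof is essentially the entropy-proof's combinatorics transplanted to the dimension-counting setting of Section~\ref{s:v}. Your closing observation is also accurate: since Lemma~\ref{lm:bound-helps} is proved for all $n\leq 2k-1$ and arbitrary $d\geq k$, the bound covers both fixed-$d$ and universal CE-QTS schemes.
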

\begin{proof}
Consider a set of $d$ parties given by $D\subseteq[n]$ accessed by the combiner for secret recovery.
For each $i\in D$, let the part of the share transmitted by the $j$th party to the combiner be denoted as $H_{j,D}$. Clearly $H_{j,D}$ is a subsystem of $S_j$.

Without loss of generality, we take the set of parties to be given by $D=\{1,2,\hdots,d\}$ such that 
\begin{eqnarray}
\dim(H_{1,D})\geq\dim(H_{2,D})\geq\hdots\geq\dim(H_{d,D}).
\label{eq:share-sizes}
\end{eqnarray}

Applying Lemma \ref{lm:bound-helps} for the partial shares $H_{k,D}, H_{k+1,D},\hdots H_{d,D}$ sent to the combiner, the overall communication from these $d-k+1$ shares is bounded as 
\begin{eqnarray}
\prod_{j=k}^d \dim(H_{j,D})\geq M.
\label{eq:bound_some_helps}
\end{eqnarray}
Then by \eqref{eq:share-sizes}, we have 
\begin{eqnarray*}
\dim(H_{k,D})^{d-k+1}\geq M\nonumber
\\\dim(H_{k,D})\geq M^{1/(d-k+1)}.
\end{eqnarray*}
This implies
\begin{eqnarray}
\dim(H_{j,D}) &\geq& M^{1/(d-k+1)} \label{eq:bound_one_help}
\end{eqnarray}
for $1\leq j\leq k$. 
From \eqref{eq:bound_some_helps}~and~\eqref{eq:bound_one_help}, the communication to the combiner from the $d$ shares in $D$ can be lower bounded as
\begin{eqnarray*}
\prod_{j=1}^d\dim(H_{j,D})&=&\prod_{j=1}^{k-1}\dim(H_{j,D})\prod_{j=k}^d\dim(H_{j,D})\nonumber
\\&\geq&\bigg(\prod_{j=1}^{k-1}M^{1/(d-k+1)}\bigg)M\nonumber
\\&=&M^{d/d-k+1)}
\end{eqnarray*}
This shows that the set of $d$ parties $D$ must communicate a state that is in a Hilbert space of dimension at least $M^{d/(d-k+1)}$.
\end{proof}
When the combiner accesses any $k$ parties, the above theorem implies that the total quantum communication to the combiner must be of at least dimension $M^k$.
Note that the standard $((k,n))$ QTS schemes provided in \cite{cleve99} achieve this lower bound.

In the next subsection, we use this bound to evaluate the performance of our constructions for CE-QTS schemes.

\subsection{Optimality of the proposed schemes}
The bound on the dimension of the communication cost in Theorem \ref{th:co-lbounds} can be used to obtain a bound on the communication cost in terms of qudits. \begin{corollary}
In a $((k,n,d))$ CE-QTS scheme sharing a secret of $m$ qudits, the communication cost is bounded as
\begin{equation*}
    \text{CC}_n(d)\geq \frac{dm}{d-k+1}.
\end{equation*}
\label{lm:ce-qts-opt}
\end{corollary}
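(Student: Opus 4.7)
The plan is to derive this corollary as an almost direct consequence of Theorem \ref{th:co-lbounds}. The key observation is simply that the secret consists of $m$ qudits each of dimension $q$, so the Hilbert space of the secret has dimension $M = q^m$, and similarly the total number of qudits communicated to the combiner is the base-$q$ logarithm of the dimension of the communicated state space.

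More concretely, I would proceed as follows. First, instantiate Theorem \ref{th:co-lbounds} with $M = q^m$ to conclude that the combined subsystem transmitted to the combiner by the $d$ accessed parties lies in a Hilbert space of dimension at least
\[
M^{d/(d-k+1)} \;=\; q^{md/(d-k+1)}.
\]
Second, observe that if $\text{CC}_n(d)$ denotes the number of qudits sent to the combiner, then the communicated state lives in a Hilbert space of dimension exactly $q^{\text{CC}_n(d)}$. Comparing these two,
\[
q^{\text{CC}_n(d)} \;\geq\; q^{md/(d-k+1)},
\]
and taking logarithms base $q$ yields $\text{CC}_n(d) \geq md/(d-k+1)$, as required.

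There is essentially no real obstacle: all the work was done in Lemma \ref{lm:bound-helps} and Theorem \ref{th:co-lbounds}, and this corollary is just a translation from Hilbert-space dimension to qudit count. The only thing to double-check is that the communication from each individual party is counted as an integer number of $q$-dimensional qudits (so that the overall communication dimension is a power of $q$), which is consistent with the CE-QTS model where each share, and hence each portion of a share, is composed of $q$-ary qudits. With this observation the bound follows immediately.
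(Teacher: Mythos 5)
Your proposal is correct and follows exactly the paper's own argument: instantiate Theorem~\ref{th:co-lbounds} with $M=q^m$ and convert the resulting dimension bound $q^{dm/(d-k+1)}$ into a qudit count by taking the base-$q$ logarithm. Your added remark about each party's communication being an integer number of $q$-ary qudits is a sensible sanity check but introduces nothing beyond what the paper does.
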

\begin{proof}
Let $q$ be the dimension of each qudit in the scheme. Clearly, the dimension of the secret $M=q^m$. By Theorem \ref{th:co-lbounds}, the communication from any $d$ shares is going to be at least $q^\frac{dm}{d-k+1}$. Thus, the $d$ parties need to send at least $\frac{dm}{d-k+1}$ qudits for recovering the secret.
\end{proof}
Recall from Lemma \ref{lm:qts-opt} that, for any QTS scheme, the share size is lower bounded by the size of the secret \textit{i.e.} for all $1\leq j\leq n$
\begin{eqnarray*}
w_j\geq m.
\end{eqnarray*}
\begin{remark}%[Optimality of Staircase CE-QTS]
$((k,n,d))$ CE-QTS scheme from \cite{senthoor19} based on Staircase codes has optimal storage cost and optimal communication cost.
\end{remark}
\begin{remark}%[Optimality of Concatenation CE-QTS]
$((k,n,d))$ CE-QTS scheme from Corollary \ref{co:ramp-fixed-ceqts-constn} based on ramp schemes from \cite{ogawa05} has optimal storage cost and optimal communication cost.
\end{remark}
Note that these bounds apply for both fixed $d$ and universal CE-QTS schemes.
\begin{corollary}
In a $((k,n,*))$ universal CE-QTS scheme sharing a secret of $m$ qudits, for any $d$ such that $k\leq d\leq n$, the communication cost is bounded as
\begin{equation*}
\text{CC}_n(d)\geq \frac{dm}{d-k+1}.
\end{equation*}
\end{corollary}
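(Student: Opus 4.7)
The plan is to observe that this corollary follows almost immediately by reusing the lower bound already established in Theorem \ref{th:co-lbounds} (and repackaged in qudit form in Corollary \ref{lm:ce-qts-opt}). The key point is that Theorem \ref{th:co-lbounds} was stated for an arbitrary $((k,n))$ QSS scheme in which the combiner recovers a secret of dimension $M$ by contacting some set of $d$ parties; the proof there did not use the fact that $d$ was fixed ahead of time, nor did it use the strict inequality in the definition of CE-QTS. So the bound applies verbatim whenever we can exhibit a specific $d$ at which the universal scheme performs recovery.

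Concretely, I would proceed as follows. Fix any $d$ with $k \leq d \leq n$. A $((k,n,*))$ universal CE-QTS scheme is, by Definition, a $((k,n))$ threshold scheme which in particular supports secret recovery from any set of $d$ parties while meeting the threshold secrecy condition. Thus we may regard the universal scheme, restricted to this choice of $d$, as an instance of the hypothesis of Theorem \ref{th:co-lbounds}: the secret has dimension $M = q^m$, and the combiner accesses exactly $d$ parties to recover it. Applying Theorem \ref{th:co-lbounds} yields that the total communicated state lies in a Hilbert space of dimension at least $M^{d/(d-k+1)} = q^{dm/(d-k+1)}$. Translating from dimension to qudit count (as in the proof of Corollary \ref{lm:ce-qts-opt}) gives
\begin{equation*}
\text{CC}_n(d) \;\geq\; \frac{dm}{d-k+1}.
\end{equation*}
Since $d$ was arbitrary in $\{k, k+1, \ldots, n\}$, the bound holds for every admissible $d$ simultaneously, which is exactly the statement.

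There is essentially no obstacle, because the heavy lifting (Lemma \ref{lm:bound-helps} and Theorem \ref{th:co-lbounds}) already applies at the level of a single choice of $d$, and the universal scheme must satisfy the lower-bound hypothesis for each $d$ independently. The only thing worth being careful about is to emphasize that the bound in Theorem \ref{th:co-lbounds} was proved for an arbitrary QSS scheme with $d$ accessible parties and does not rely on the fixed-$d$ CE-QTS structure (in particular, it does not invoke the strict inequality $\text{CC}_n(d) < \text{CC}_n(k)$ from Definition \ref{def:ce-qts}). So no extra combinatorial or information-theoretic argument is needed, and the corollary is just the per-$d$ specialization of an already-established bound.
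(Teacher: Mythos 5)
Your proposal is correct and matches the paper's own (implicit) reasoning: the paper gives no separate proof for this corollary precisely because, as you observe, Theorem~\ref{th:co-lbounds} applies to any $((k,n))$ QSS scheme recovering from $d$ parties, so one simply specializes it to each $d\in\{k,\ldots,n\}$ and converts dimensions to qudit counts as in Corollary~\ref{lm:ce-qts-opt}. No further argument is needed.
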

\begin{remark}%[Optimality of Staircase universal CE-QTS]
$((k,n,*))$ universal CE-QTS scheme from Theorem \ref{th:staircase-constn-univ-ceqts} based on Staircase codes has optimal storage cost and optimal communication cost.
\end{remark}
\begin{remark}%[Optimality of Concatenation universal CE-QTS]
$((k,n,*))$ universal CE-QTS scheme from Corollary \ref{co:ramp-univ-ceqts-constn} based on ramp schemes from \cite{ogawa05} has optimal storage cost and optimal communication cost.
\end{remark}
\par In the following section, we prove the bound on communication cost of CE-QTS schemes using a quantum information theoretic approach.

\section{Information theoretic model of CE-QTS}
\label{s:vi}

The storage cost and the communication complexity required for secret sharing schemes can also be studied with information theory. For classical threshold schemes, such results have been obtained in \cite{karnin83},\cite{wang08},\cite{huang16}. 
In this section, we will be using quantum information theory to develop a framework to get similar results for communication efficient quantum threshold schemes building upon the work by Imai et al\cite{imai03}. 
We propose a quantum information theoretic framework for CE-QTS schemes and use this to study their communication complexity.
We refer the reader to Section~\ref{sec:bg} for some of the definitions and terms. 

\subsection{Information theoretic model for quantum secret sharing}
Let $\mathcal{S}$ be the quantum secret from the Hilbert space $\mathcal{H}_\mathcal{S}$ of dimension $M$. Then the density matrix corresponding to $\mathcal{S}$ can be defined as
\begin{equation}
\rho_\mathcal{S}=\sum_{i=0}^{M-1}p_i\ketbra{\psi_i}{\psi_i}
\nonumber
\end{equation}
where $\{p_i\}$ gives the probability distribution in a measurement over some basis of orthonormal states $\{\ket{\psi_0},\ket{\psi_1},\hdots,\ket{\psi_{M-1}}\}$. Let $\mathcal{R}$ be the reference system such that the combined system $\mathcal{RS}$ is in the pure state
\begin{equation}
\ket{\Psi_\mathcal{RS}}=\sum_{i=0}^{M-1}\sqrt{p_i}\ket{\psi_i}_\mathcal{R}\ket{\psi_i}_\mathcal{S}.
%\nonumber
\end{equation}
Clearly $\mathsf{S}(\mathcal{RS})=0$. Then, by Araki-Lieb inequality, $\mathsf{S}(\mathcal{R})=\mathsf{S}(\mathcal{S})$.

Let $S_1,S_2,\hdots,S_n$ be the quantum systems corresponding to the $n$ shares defined over the Hilbert spaces $\mathcal{H}_1,\mathcal{H}_2,\hdots,\mathcal{H}_n$ respectively. Then the encoding of the secret is given by the encoding map $\mathcal{E}:\mathcal{H}_\mathcal{S}\rightarrow\mathcal{H}_1\otimes\mathcal{H}_2\otimes\cdots\otimes\mathcal{H}_n$. A subset of $\ell$ parties can be indicated by the set $L\subseteq[n]$ corresponding to their indices. The combined system of these parties are then denoted as
\begin{equation}
S_L=S_{i_1}S_{i_2}\hdots S_{i_\ell}
\nonumber
\end{equation}
where $L=\{i_1,i_2,\hdots,i_\ell\}$ with $i_1<i_2<\hdots<i_\ell$. Then the density matrix of the $i$th party for $i\in[n]$ can be written as
\begin{equation}
\rho_i=\text{tr}_{S_{[n]\backslash i}}\mathcal{E}(\rho_\mathcal{S}).
\nonumber
\end{equation}

With these notations, we can define the necessary and sufficient information theoretic conditions for a quantum secret sharing scheme.
From the quantum data processing inequality in Lemma \ref{lm:qdpi}, for any subset of parties given by $A\subseteq[n]$, we can say
\begin{gather*}
\mathsf{S}(\mathcal{S})\geq\mathsf{S}(S_A)-\mathsf{S}(\mathcal{R}S_A).
\end{gather*}
If $A$ is assumed to be an authorized set, then equality holds in the above equation and using the fact that 
$\mathsf{S}(\mathcal{R}\mathcal{S})=0$, we obtain the following. 
\begin{gather*}
%\mathsf{S}(\mathcal{S})=\mathsf{S}(S_A)-\mathsf{S}(\mathcal{R}S_A)\\
%\mathsf{S}(\mathcal{R})+\mathsf{S}(\mathcal{S})=\mathsf{S}(\mathcal{R})+\mathsf{S}(S_A)-\mathsf{S}(\mathcal{R}S_A)\\
I(\mathcal{R}:S_A)=I(\mathcal{R}:\mathcal{S}).
\end{gather*}

By Lemma \ref{lm:mixed-to-pure}, any QSS scheme with $n$ shares given by $S_1,S_2,\hdots,S_n$ can be used to construct a pure state QSS scheme by adding an extra share, say $S_{n+1}$. Any unauthorized set $B\subseteq[n]$ in the original QSS scheme is also an unauthorized set in the pure state scheme. Therefore, by Lemma \ref{lm:pu-auth}, $[n+1]\backslash B$ gives an authorized set in the pure state scheme. 
Hence, by the quantum data processing inequality in Lemma \ref{lm:qdpi}, the secrecy condition in Definition \ref{de:qss} is equivalent to
\begin{gather}
\mathsf{S}(\mathcal{S})=\mathsf{S}(S_{[n+1]\backslash B})-\mathsf{S}(\mathcal{R}S_{[n+1]\backslash B}).
\label{eq:qit-secrecy}
\end{gather}
It can be seen that after encoding of the secret by the pure state QSS scheme, system $\mathcal{R}S_{[n+1]}$ is in the pure state
\begin{equation*}
\sum_{i=0}^{M-1}\sqrt{p_i}\ket{\psi_i}_\mathcal{R}\ket{\overline{\psi_i}}_{S_{[n+1]}}.
\end{equation*}
Here $\ket{\overline{\psi_i}}$ indicates the encoded pure state of the system $S_{[n+1]}$ when the secret $\mathcal{S}$ is in the pure state $\ket{\psi_i}$. If the secret $\mathcal{S}$ is in mixed state, after encoding, $S_{[n+1]}$ will also be in mixed state but $\mathcal{R}S_{[n+1]}$ will be still in pure state.
Clearly $\mathsf{S}(\mathcal{R}S_{[n+1]})=0$. Therefore, by Araki-Lieb inequality,
\begin{gather*}
\mathsf{S}(S_{[n+1\backslash B})=\mathsf{S}(\mathcal{R}S_B),\\ \mathsf{S}(\mathcal{R}S_{[n+1]\backslash B})=\mathsf{S}(S_B).
\end{gather*}
Also $\mathsf{S}(\mathcal{S})=\mathsf{S}(\mathcal{R})$. Substituting these in \eqref{eq:qit-secrecy}, we obtain
\begin{gather*}
%\mathsf{S}(\mathcal{\mathcal{R}})=\mathsf{S}(\mathcal{R}S_B)-\mathsf{S}(S_B)\\
I(\mathcal{R}:S_B)=0.
\end{gather*}
With these conditions, the quantum information theoretic definition for a QSS scheme is given below.
\begin{definition}
\label{de:info-model-qts}
A quantum secret sharing scheme for an access structure $\Gamma$ is a quantum operation which encodes the quantum secret $\mathcal{S}$ into shares $S_1, S_2, \hdots, S_n$ where
\begin{itemize}
\item \textit{(Recoverability)} For every authorized set $A\in \Gamma$,
\begin{equation}
I(\mathcal{R}:S_A)=I(\mathcal{R}:\mathcal{S}),
\label{eq:rec-constraint}
\end{equation}
\item \textit{(Secrecy)} For every unauthorized set $B\notin\Gamma$,
\begin{equation}
I(\mathcal{R}:S_B)=0.
\label{eq:sec-constraint}
\end{equation}
\end{itemize}
\end{definition}
The same definition expands to QTS schemes where the authorized set is given by any $A\subseteq[n]$ such that $|A|\geq k$ and the unauthorized set is given by any $B\subset[n]$ such that $|B|\leq k-1$. The following result from \cite{imai03} gives a bound on the entropy of each share.

\begin{lemma}\label{th:entropy-unauth}%[Lower bound on share size]
%\cite[Theorem 6]{imai03}
In any quantum secret sharing scheme realizing an access structure $\Gamma$ for any subsets of parties $A$ and $B$ such that $A,B\notin\Gamma$ but $A\cup B\in\Gamma$ it holds that $\mathsf{S}(S_A)\geq\mathsf{S}(\mathcal{S})$ where $\mathcal{S}$ is the secret being shared.
\end{lemma}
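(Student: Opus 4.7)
The plan is to convert the three set-membership hypotheses on $A$, $B$, and $A\cup B$ into three entropic identities via Definition~\ref{de:info-model-qts}, and then squeeze out the desired bound using only the Araki--Lieb inequality~\eqref{eq:araki-lieb} and subadditivity~\eqref{eq:sub-addi}. I may assume without loss of generality that $A$ and $B$ are disjoint: if they overlap, replace $B$ by $B\setminus A$, which is still unauthorized (quantum data processing says that tracing out shares cannot increase $I(\mathcal{R}:S_B)$ from $0$), and $A\cup(B\setminus A) = A\cup B$ remains authorized.

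Next, I would unpack the three hypotheses entropically. Secrecy of $A$ and $B$ give, respectively,
\begin{eqnarray*}
\mathsf{S}(\mathcal{R}S_A) &=& \mathsf{S}(\mathcal{R}) + \mathsf{S}(S_A),\\
\mathsf{S}(\mathcal{R}S_B) &=& \mathsf{S}(\mathcal{R}) + \mathsf{S}(S_B),
\end{eqnarray*}
from \eqref{eq:sec-constraint}. For $A\cup B$, recoverability \eqref{eq:rec-constraint} combined with the purity of $\mathcal{S}\mathcal{R}$ (which gives $\mathsf{S}(\mathcal{S})=\mathsf{S}(\mathcal{R})$ and $I(\mathcal{R}:\mathcal{S})=2\mathsf{S}(\mathcal{R})$) yields
\begin{eqnarray*}
\mathsf{S}(\mathcal{R}S_{A\cup B}) &=& \mathsf{S}(S_{A\cup B}) - \mathsf{S}(\mathcal{R}).
\end{eqnarray*}

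The key step is then to apply the Araki--Lieb inequality to the bipartition of $\mathcal{R}S_{A\cup B}$ into $\mathcal{R}S_B$ and $S_A$:
\begin{eqnarray*}
\mathsf{S}(\mathcal{R}S_{A\cup B}) &\geq& \mathsf{S}(\mathcal{R}S_B) - \mathsf{S}(S_A).
\end{eqnarray*}
Substituting the three entropic identities above gives
\begin{eqnarray*}
\mathsf{S}(S_{A\cup B}) - \mathsf{S}(\mathcal{R}) &\geq& \mathsf{S}(\mathcal{R}) + \mathsf{S}(S_B) - \mathsf{S}(S_A),
\end{eqnarray*}
and finally using subadditivity $\mathsf{S}(S_{A\cup B}) \leq \mathsf{S}(S_A) + \mathsf{S}(S_B)$ on the left-hand side and cancelling $\mathsf{S}(S_B)$ yields $2\mathsf{S}(S_A) \geq 2\mathsf{S}(\mathcal{R}) = 2\mathsf{S}(\mathcal{S})$, which is the desired inequality.

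The proof is short once the three entropic identities are in place, so there is no single ``hard part''; the only real subtlety is the choice of how to pair up Araki--Lieb and subadditivity. The natural-looking application of strong subadditivity with conditioning system $\mathcal{R}$ gives only a trivial bound, which is why I split the combined system as $\mathcal{R}S_B$ versus $S_A$ rather than three-way; a symmetric application with $\mathcal{R}S_A$ versus $S_B$ would instead yield $\mathsf{S}(S_B)\geq \mathsf{S}(\mathcal{S})$, consistent with the symmetric roles of $A$ and $B$ in the hypothesis.
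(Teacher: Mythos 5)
Your proof is correct; the paper gives no proof of this lemma (it is imported from Imai \emph{et al.}), but your argument --- recoverability of $A\cup B$, Araki--Lieb applied to the split of $\mathcal{R}S_{A\cup B}$ into $\mathcal{R}S_B$ versus $S_A$, secrecy of $B$, then subadditivity --- is precisely the chain of inequalities the paper itself deploys in Section~\ref{s:vi} to establish $\mathsf{S}(H_F)\geq\mathsf{S}(\mathcal{S})$ in the proof of the communication-cost lower bound, so it matches the paper's own technique. Your explicit reduction to disjoint $A$ and $B$ is a point the paper's instance gets for free (there $F$ and $D\setminus F$ are disjoint by construction), and, as in the paper's version, the hypothesis $A\notin\Gamma$ is never actually needed.
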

\begin{corollary}
In any $((k,n))$ QTS scheme, the entropy of any share $S_j$ is bounded as
\begin{equation*}
\mathsf{S}(S_j)\geq\mathsf{S}(\mathcal{S})
\end{equation*}
where $S$ is the secret being shared.
\end{corollary}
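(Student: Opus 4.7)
The plan is to derive this corollary as an immediate application of the preceding Lemma (the one labeled \texttt{th:entropy-unauth}). That lemma says that for any two subsets $A, B$ which are individually unauthorized but whose union is authorized, $\mathsf{S}(S_A) \geq \mathsf{S}(\mathcal{S})$. So the task is merely to exhibit such a pair in the threshold setting with $A = \{j\}$.

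First I would fix an arbitrary share index $j \in [n]$ and set $A = \{j\}$. Since $k > 1$ in a $((k,n))$ QTS scheme, $|A| = 1 \leq k-1$, so $A$ is an unauthorized set. Next, since $n \geq k$, we may pick any $B \subseteq [n] \setminus \{j\}$ with $|B| = k-1$; such a $B$ exists because $n - 1 \geq k - 1$. Then $|B| = k-1 < k$, so $B$ is unauthorized, while $A \cap B = \emptyset$ gives $|A \cup B| = k$, so $A \cup B$ is an authorized set in the threshold access structure.

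Applying the Lemma to this pair yields $\mathsf{S}(S_A) \geq \mathsf{S}(\mathcal{S})$, i.e.\ $\mathsf{S}(S_j) \geq \mathsf{S}(\mathcal{S})$, and since $j$ was arbitrary the bound holds for every share.

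There is no real obstacle here; the only thing to be careful about is verifying that both $A$ and $B$ can simultaneously be chosen unauthorized with authorized union, which uses $1 \leq k-1$ (hence $k \geq 2$, which is part of the definition of a threshold scheme) and $n \geq k$. This corollary can be viewed as the information-theoretic counterpart of Lemma~\ref{lm:qts-opt}: the earlier statement bounded the number of qudits in each share, whereas here we obtain the analogous bound on the von Neumann entropy.
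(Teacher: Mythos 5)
Your proposal is correct and is essentially identical to the paper's own proof, which likewise instantiates the lemma with $A=\{j\}$ and some $B\subseteq[n]\setminus\{j\}$ of size $k-1$; you have merely spelled out the routine checks that $A$ and $B$ are unauthorized while $A\cup B$ is authorized, using $k\geq 2$ and $n\geq k$.
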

\begin{proof}
Take $A=\{j\}$ and some $B\subseteq[n]\backslash\{j\}$ such that $|B|=k-1$ in Lemma~\ref{th:entropy-unauth}.
\end{proof}
Definition \ref{de:info-model-qts} is same as the definition in \cite{imai03} for quantum threshold secret sharing schemes. However for communication efficient quantum threshold schemes, more conditions have to be defined for when the combiner recovers the secret from partial shares from $d>k$ parties.

\subsection{Extension of information theoretic model to CE-QTS}
Let $D\subseteq[n]$, where $|D|=d$, give the indices of some $d$ parties being accessed by the combiner for communication efficient recovery. For each $j\in D$, consider a superoperator $\pi_{j,D}$ acting on $S_j$ such that the resultant state $H_{j,D}$ is then transmitted to the combiner. The density matrix for $H_{j,D}$ can be written as
\begin{equation*}
\sigma_{j,D}=\pi_{j,D}(\rho_j).
\end{equation*}
Here $\pi_{j,D}$ is the operator acting on $S_j$. Consider $E\subseteq D$ corresponding to some $e$ of these $d$ parties. The combined system of the partial shares sent to the combiner by these $e$ parties is denoted as
\begin{equation*}
H_{E,D}=H_{j_1,D}H_{j_2,D}\hdots H_{j_e,D}
\end{equation*}
where $E=\{j_1,j_2,\hdots,j_e\}$ with $j_1<j_2<\hdots<j_e$.

Clearly, the number of qudits in $H_{j,D}$ is $\log_q\text{dim}(H_{j,D})$.
Now, CC$_n(d)$ can be written as
\begin{eqnarray}
\text{CC}_n(d)&=&\max_{\substack{D\subseteq[n]\\\text{s.t. }|D|=d}}\ \sum_{j\in D}\ \log_q\text{dim}(H_{j,D})\nonumber
\\\text{CC}_n(d)&\geq&\frac{1}{\log q}\ \max_{\substack{D\subseteq[n]\\\text{s.t. }|D|=d}}\ \sum_{j\in D}\mathsf{S}(H_{j,D}).
\label{eq:cc-d-bound}
\end{eqnarray}
The inequality in \eqref{eq:cc-d-bound} is from the bound on entropy given by \eqref{eq:max-entropy}.

Similarly, the communication cost for secret recovery in a standard $((k,n))$ threshold scheme can be bounded as,
\begin{equation}
\text{CC}_n(k)\geq\frac{1}{\log q}\max_{\substack{A\subseteq[n]\\\text{s.t. }|A|=k}}\ \sum_{i\in A}\mathsf{S}(S_i).
\end{equation}
Now, the following set of constraints can be included to define the model for a communication efficient quantum threshold scheme.
\begin{definition}
A $((k,n,d))$ CE-QTS scheme is a quantum operation which encodes the quantum secret $\mathcal{S}$ into shares $S_1, S_2, \hdots, S_n$ where
\begin{itemize}
\item \textit{(Recoverability from $k$ shares)} For every $A\subseteq[n]$ such that $|A|\geq k$,
\begin{equation}
I(\mathcal{R}:S_A)=I(\mathcal{R}:\mathcal{S}).
\label{eq:ce-rec-constraint}
\end{equation}
\item \textit{(Recoverability from $d$ partial shares)} For every $D\subseteq[n]$ such that $|D|=d$,
\begin{equation}
I(\mathcal{R}:H_{D,D})=I(\mathcal{R}:\mathcal{S})
\label{eq:ce-d-rec-constraint}
\end{equation}
\item \textit{(Secrecy)} For every $B\subset [n]$ such that $|B|<k$,
\begin{equation}
I(\mathcal{R}:S_B)=0.
\label{eq:ce-sec-constraint}
\end{equation}
\item \textit{(Communication efficiency)}
\begin{equation}
\text{CC}_n(d)<\text{CC}_n(k).
\label{eq:ce-constraint}
\end{equation}
\end{itemize}
\end{definition}
\begin{definition}
A $((k,n,*))$ universal CE-QTS scheme is a quantum operation which encodes the quantum secret $\mathcal{S}$ into shares $S_1, S_2, \hdots, S_n$ where
\begin{itemize}
\item \textit{(Recoverability)} For every $D\subseteq[n]$ such that $k\leq|D|\leq n$,
\begin{equation}
I(\mathcal{R}:H_{D,D})=I(\mathcal{R}:\mathcal{S})
\label{eq:uce-rec-constraint}
\end{equation}
\item \textit{(Secrecy)} For every $B\subset[n]$ such that $|B|<k$,
\begin{equation}
I(\mathcal{R}:S_B)=0.
\label{eq:uce-sec-constraint}
\end{equation}
\item \textit{(Universal communication efficiency)}
\begin{equation}
\text{CC}_n(n)<\text{CC}_n(n-1)<\hdots<\text{CC}_n(k+1)<\text{CC}_n(k).
\label{eq:uce-constraint}
\end{equation}
\end{itemize}
\end{definition}
In the above definition for universal CE-QTS, a separate condition for the threshold of $k$ shares is not needed as $d$ can be assumed to take any value from $k$ to $n$. With these definitions, we can bound the communication cost of CE-QTS schemes (both fixed $d$ and universal) using quantum information theoretic inequalities.
In the following theorem, a similar bound on the entropy of partial shares sent to the combiner has been derived. This result is then used to obtain a bound on the communication cost.
%\addtocounter{lemma}{-2}
\begin{theorem}%[Lower bound on part of communication cost]
%\label{lm:bound-helps}
In any $((k,n))$ quantum secret sharing scheme, recovery of a secret of dimension $M$ by accessing $d$ parties requires communication of a state from a Hilbert space of dimension at least $M^{d/(d-k+1)}$ to the combiner.
\end{theorem}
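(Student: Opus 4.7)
The plan is to re-prove this communication-cost lower bound using the information-theoretic framework of Definitions set up in Section~\ref{s:vi}, rather than the operational protocol-based argument of Theorem~\ref{th:co-lbounds}. The central step will be an information-theoretic analogue of Lemma~\ref{lm:bound-helps}: for any $D\subseteq[n]$ with $|D|=d$ and any $E\subseteq D$ with $|E|=d-k+1$, I would show that $\mathsf{S}(H_{E,D})\geq \mathsf{S}(\mathcal{S})$. Once this is in hand, everything else mirrors the sorting step of Theorem~\ref{th:co-lbounds}.

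To prove the entropy bound, fix such $D$ and $E$, and let $F:=D\setminus E$, so $|F|=k-1$. By the secrecy axiom \eqref{eq:ce-sec-constraint}, $I(\mathcal{R}:S_F)=0$; since $H_{F,D}$ is produced from $S_F$ by the local superoperator $\bigotimes_{j\in F}\pi_{j,D}$, quantum data processing yields $I(\mathcal{R}:H_{F,D})=0$. The recoverability condition \eqref{eq:ce-d-rec-constraint} gives $I(\mathcal{R}:H_{D,D})=I(\mathcal{R}:\mathcal{S})$, and since $\mathcal{SR}$ is pure the latter equals $2\mathsf{S}(\mathcal{S})$. Writing $H_{D,D}=H_{E,D}H_{F,D}$ and using the chain rule
\begin{equation*}
I(\mathcal{R}:H_{E,D}H_{F,D}) = I(\mathcal{R}:H_{F,D}) + I(\mathcal{R}:H_{E,D}\,|\,H_{F,D}),
\end{equation*}
we deduce $I(\mathcal{R}:H_{E,D}\,|\,H_{F,D})=2\mathsf{S}(\mathcal{S})$. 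A short argument combining subadditivity \eqref{eq:sub-addi} and the Araki-Lieb inequality \eqref{eq:araki-lieb} shows the general bound $I(A:B\,|\,C)\leq 2\mathsf{S}(B)$, which applied here yields $2\mathsf{S}(H_{E,D})\geq 2\mathsf{S}(\mathcal{S})$, hence $\mathsf{S}(H_{E,D})\geq\mathsf{S}(\mathcal{S})$.

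To finish, I would specialize to the maximally mixed secret, for which $\mathsf{S}(\mathcal{S})=\log M$; combined with the general upper bound $\mathsf{S}(H_{E,D})\leq \log\dim(H_{E,D})$ from \eqref{eq:max-entropy}, this gives $\dim(H_{E,D})\geq M$. Then, exactly as in Theorem~\ref{th:co-lbounds}, order the parties in $D$ so that $\dim(H_{1,D})\geq\dim(H_{2,D})\geq\cdots\geq\dim(H_{d,D})$ and apply the bound to $E=\{k,k+1,\ldots,d\}$: we obtain $\prod_{j=k}^{d}\dim(H_{j,D})\geq M$, which together with the ordering forces $\dim(H_{k,D})\geq M^{1/(d-k+1)}$ and hence $\dim(H_{j,D})\geq M^{1/(d-k+1)}$ for every $j\leq k$. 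Multiplying gives $\prod_{j\in D}\dim(H_{j,D})\geq M^{d/(d-k+1)}$, as claimed.

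The main obstacle I anticipate is the clean derivation of $I(A:B\,|\,C)\leq 2\mathsf{S}(B)$ in the quantum setting from only the primitives listed in Section~\ref{sec:bg}; the conditional version does not follow mechanically from the familiar unconditional $I(A:B)\leq 2\mathsf{S}(B)$, and one must carefully chain subadditivity ($\mathsf{S}(BC)-\mathsf{S}(C)\leq\mathsf{S}(B)$) with Araki-Lieb ($\mathsf{S}(AC)-\mathsf{S}(ABC)\leq\mathsf{S}(B)$). A secondary subtlety is legitimizing the reduction to a maximally mixed secret: since the encoding map, the local operations $\pi_{j,D}$, and the very subsystem $H_{E,D}$ are all fixed independently of which input state $\rho_{\mathcal{S}}$ is used, the entropy bound must hold for every input, and choosing the maximally mixed input merely realizes the worst case $\mathsf{S}(\mathcal{S})=\log M$ needed to extract the dimension inequality.
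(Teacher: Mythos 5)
Your proposal is correct, and the key entropy inequality $\mathsf{S}(H_{E,D})\geq\mathsf{S}(\mathcal{S})$ for $|E|=d-k+1$ is established by what is, after unpacking, the same argument the paper uses: your route via $I(\mathcal{R}:H_{E,D}\,|\,H_{F,D})=2\mathsf{S}(\mathcal{S})$ together with $I(A:B\,|\,C)\leq 2\mathsf{S}(B)$ invokes subadditivity \eqref{eq:sub-addi} as $\mathsf{S}(BC)-\mathsf{S}(C)\leq\mathsf{S}(B)$ and Araki--Lieb \eqref{eq:araki-lieb} as $\mathsf{S}(AC)-\mathsf{S}(ABC)\leq\mathsf{S}(B)$, which are precisely the two inequalities the paper applies directly to $\mathsf{S}(\mathcal{R}H_{D\backslash F}H_F)$ and $\mathsf{S}(H_{D\backslash F}H_F)$; the conditional-mutual-information packaging is cosmetic but clean, and your worry about deriving the conditional bound is unfounded since the chain you describe does go through. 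Where you genuinely diverge is in the finish. The paper sums the per-subset inequality $\sum_{j\in F}\mathsf{S}(H_{j,D})\geq\mathsf{S}(\mathcal{S})$ over all $\binom{d}{d-k+1}$ subsets $F$ and divides by the multiplicity $\binom{d-1}{d-k}$, obtaining the intermediate statement $\sum_{j\in D}\mathsf{S}(H_{j,D})\geq\frac{d}{d-k+1}\mathsf{S}(\mathcal{S})$, which holds for an arbitrary secret state and is only converted to dimensions at the very end. You instead convert to dimensions first (via the maximally mixed secret and \eqref{eq:max-entropy}) and then reuse the sorting argument of Theorem~\ref{th:co-lbounds}: order the $\dim(H_{j,D})$, apply the bound to the $d-k+1$ smallest, and lift the resulting floor to the remaining $k-1$ parties. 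Both finishes are valid and yield the identical exponent; the averaging buys a purely entropic statement valid for every input state (arguably the more natural deliverable of the information-theoretic model), while your sorting avoids the counting argument at the cost of committing to the maximally mixed input earlier. Your closing remark on why that commitment is legitimate --- the subsystems $H_{j,D}$ and their dimensions are fixed independently of $\rho_{\mathcal{S}}$ --- is exactly the justification the paper itself gives.
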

\begin{proof}
Let $D$ represent the indices of the $d$ parties from which partial shares are sent to the combiner. Clearly, $D$ is an authorized set. For simplicity, we will drop the second subscript $D$ in $H_{E,D}$ for any $E\subseteq D$ and write it simply as $H_E$.

Choose some $F\subseteq D$ such that   $|F|=d-k+1$.
By considering $H_D$ as the bipartite quantum system $H_{D\backslash F}H_F$, \eqref{eq:ce-rec-constraint} gives
\begin{eqnarray}
I(\mathcal{R}:H_{D\backslash F}H_F)&=&I(\mathcal{R}:S)\nonumber 
\\\mathsf{S}(H_{D\backslash F}H_F)-\mathsf{S}(\mathcal{R}H_{D\backslash F}H_F)&=&\mathsf{S}(\mathcal{S})-\mathsf{S}(\mathcal{R}\mathcal{S})\ \ \ \ \nonumber
\\\mathsf{S}(H_{D\backslash F}H_F)-\mathsf{S}(\mathcal{R}H_{D\backslash F}H_F)&=&\mathsf{S}(\mathcal{S})
\label{eq:al-to-be-applied}
\end{eqnarray}
Applying the Araki-Lieb inequality to $\mathsf{S}(\mathcal{R}H_{D\backslash F}H_F)$ gives
\begin{equation*}
\mathsf{S}(\mathcal{R}H_{D\backslash F}H_F)\geq\mathsf{S}(RH_{D\backslash F})-\mathsf{S}(H_F).
%\label{eq:al-applied}
\end{equation*}
Applying this in \eqref{eq:al-to-be-applied}, we obtain
\begin{equation}
\mathsf{S}(H_{D\backslash F}H_F)-\mathsf{S}(\mathcal{R}H_{D\backslash F})+\mathsf{S}(H_F)\geq\mathsf{S}(\mathcal{S}).
\label{eq:temp1}
\end{equation}
Since any set of $k-1$ or lesser shares have no information about the secret, any set of partial shares from $k-1$ or lesser parties have no information about the secret as well. Since  $|D\backslash F|=k-1$, it follows $I(\mathcal{R}:H_{D\backslash F})=0$. This implies 
\begin{eqnarray*}
%I(\mathcal{R}:H_{D\backslash F})&=&0\\
\mathsf{S}(\mathcal{R}H_{D\backslash F})&=&\mathsf{S}(\mathcal{R})+\mathsf{S}(H_{D\backslash F})
\end{eqnarray*}
Substituting this in \eqref{eq:temp1} and because $\mathsf{S}(\mathcal{R})=\mathsf{S}(\mathcal{S})$, we obtain
\begin{eqnarray}
&&\mathsf{S}(H_{D\backslash F}H_F)+\mathsf{S}(H_F)-\mathsf{S}(H_{D\backslash F})\geq2\ \mathsf{S}(\mathcal{S}).
\end{eqnarray}
By subadditivity property, $\mathsf{S}(H_{D\backslash F}H_F)\leq\mathsf{S}(H_{D\backslash F})+\mathsf{S}(H_F)$. Therefore,
\begin{eqnarray}
2\mathsf{S}(H_F)\geq2\mathsf{S}(\mathcal{S})\nonumber
\\\mathsf{S}(H_F)\geq\mathsf{S}(\mathcal{S}).
\label{eq:temp2}
\end{eqnarray}
By subadditivity property,
\begin{equation*}
\mathsf{S}(H_F)\leq\sum_{j\in F}\mathsf{S}(H_{j,D})
\end{equation*}
Hence, from \eqref{eq:temp2}, we get
\begin{equation}
\sum_{j\in F}\mathsf{S}(H_{j,D})\geq\mathsf{S}(S)
\label{eq:temp3}
\end{equation}
This inequality holds for any of the $\binom{d}{d-k+1}$ possible choices for $F\subset D$. Now, sum the inequality \eqref{eq:temp3} over all these $F$ to get
\begin{eqnarray}
\sum_{\substack{F\subset D\\\text{s.t. }|F|=d-k+1}}\sum_{j\in F}\mathsf{S}(H_{j,D})\ &\geq&\sum_{\substack{F\subset D\\\text{s.t. }|F|=d-k+1}}\mathsf{S}(S)\nonumber
\\\binom{d-1}{d-k}\ \sum_{j\in D}\ \mathsf{S}(H_{j,D})\ &\geq&\binom{d}{d-k+1}\ \mathsf{S}(S)\nonumber
\\\sum_{j\in D}\ \mathsf{S}(H_{j,D})\ &\geq&\ \frac{d}{d-k+1}\ \mathsf{S}(S)
\end{eqnarray}
This inequality gives a bound on sum of entropies of the partial shares from $d$ shares to the combiner in terms of the entropy of the secret. This can be extended to a bound on dimensions of these systems as follows.
We know that the maximum value for entropy of a system is related to its dimension by \eqref{eq:max-entropy}. Thus, we obtain
\begin{eqnarray}
\sum_{j\in D}\log\text{dim}(H_{j,D})&\geq&\ \frac{d}{d-k+1}\ \mathsf{S}(S)\nonumber
\\\log\prod_{j\in D}\text{dim}(H_{j,D})&\geq&\ \frac{d}{d-k+1}\ \mathsf{S}(S).
\label{eq:temp5}
\end{eqnarray}
The state of $H_{j,D}$ lies in the same Hilbert space for any state of the secret $S$. Thus dim$(H_{j,D})$ remains the same for any arbitrary secret state and the bound \eqref{eq:temp5} is valid for all possible states of the secret. Consider the secret state with the density matrix
\begin{equation*}
\rho_S=\sum_{\ell=0}^{M-1}\frac{1}{M}\ketbra{\phi_\ell}{\phi_\ell}.
\end{equation*}
For this state, $\mathsf{S}(S)=\log M$. Hence, \eqref{eq:temp5} gives
\begin{eqnarray}
\sum_{j\in D}\log\text{dim}(H_{j,D})&\geq&\ \frac{d}{d-k+1}\ \log M\nonumber
\\\prod_{j\in D}\text{dim}(H_{j,D})&\geq& M^{d/(d-k+1)}.
\end{eqnarray}
This concludes the proof.
\end{proof}
%\addtocounter{lemma}{1}
The above result derived using the quantum information theoretic framework is same as Theorem \ref{th:co-lbounds}. This framework can be potentially generalized to bound communication costs and share sizes for quantum secret sharing schemes with non-threshold access structures as well.

\section{Conclusion}\label{sec:conc}
In this paper, we proposed new constructions for CE-QTS schemes.
We introduced the universal CE-QTS schemes and provided optimal constructions for CE-QTS and universal CE-QTS schemes using concatenation of ramp QSS schemes.
We also proposed another optimal construction for universal CE-QTS schemes based on Staircase codes.
We proved the bounds on communication cost during secret recovery in CE-QTS schemes.
Finally we developed a quantum information theoretic model to study CE-QTS schemes.
A natural direction for further study would be to extend these ideas to non-threshold access structures.
In the recent years there has been tremendous progress in experimental realization of quantum secret sharing schemes.
Hence, it would be also interesting to see if the dimension of the secret can be reduced while constructing CE-QTS schemes particularly for small number of parties. 

\appendices
\section{\titlemath{((k=3,n=5,d=5))} CE-QTS scheme based on Staircase codes}
\label{ap:ceqts-full-eg}
Consider the $((3,5,5))$ CE-QTS scheme from the construction based on Staircase codes given in \cite{senthoor19}. This scheme has the following parameters.
\begin{subequations}
\begin{gather}
q=7\\
m=3\\
w_1=w_2=\hdots=w_5=3\\
\text{CC}_n(3)=9,\ \text{CC}_n(5)=5.
\end{gather}
\end{subequations}
The encoding for the scheme is given by the mapping
\begin{eqnarray}
\label{eq:enc_qudits_3_5_ap}
\ket{\underline{s}}\mapsto\sum_{\underline{r}\in\F_7^6}\ket{c_{11}c_{12}c_{13}}&&\!\!\ket{c_{21}c_{22}c_{23}}\ket{c_{31}c_{32}c_{33}}
\\[-0.5cm]&&\ \ \ \ \ket{c_{41}c_{42}c_{43}}\ket{c_{51}c_{52}c_{53}}\nonumber
\end{eqnarray}
where $\underline{s}=(s_1,s_2,s_3)$ indicates a basis state of the quantum secret, $\underline{r}=(r_1,r_2,\hdots,r_6)$
and 
$c_{ij} $ is the $(i,j)$th entry of the matrix
\begin{equation}
C=VY.\nonumber
\end{equation}
Here the matrices $V$ and $Y$ are given by
\begin{equation*}
V=
\begin{bmatrix}
1&1&1&1&1\\1&2&4&1&2\\1&3&2&6&4\\1&4&2&1&4\\1&5&4&6&2
\end{bmatrix}
\text{and\ }
Y=
\left[
\begin{tabular}{ccc}
$s_1$&0&0\\$s_2$&0&0\\$s_3$&$r_1$&$r_2$\\$r_1$&$r_3$&$r_5$\\$r_2$&$r_4$&$r_6$
\end{tabular}
\right].
\end{equation*}
The encoded state in \eqref{eq:enc_qudits_3_5_ap} can also be written as,
\begin{align*}
\sum_{\underline{r}\in\mathbb{F}_7^6}
\begin{array}{l}
\ket{v_1(\underline{s},r_1,r_2)}\ket{v_1(0,0,r_1,r_3,r_4)}
\ket{v_1(0,0,r_2,r_5,r_6)}
\\\ket{v_2(\underline{s},r_1,r_2)}\ket{v_2(0,0,r_1,r_3,r_4)}
\ket{v_2(0,0,r_2,r_5,r_6)}
\\\ket{v_3(\underline{s},r_1,r_2)}\ket{v_3(0,0,r_1,r_3,r_4)}
\ket{v_3(0,0,r_2,r_5,r_6)}
\\\ket{v_4(\underline{s},r_1,r_2)}\ket{v_4(0,0,r_1,r_3,r_4)}
\ket{v_4(0,0,r_2,r_5,r_6)}
\\\ket{v_5(\underline{s},r_1,r_2)}\ket{v_5(0,0,r_1,r_3,r_4)}
\ket{v_5(0,0,r_2,r_5,r_6)}.
\end{array}
\end{align*}
$v_i()$ indicates the polynomial evaluation given by
\begin{eqnarray*}
v_i(f_1,f_2,f_3,f_4,f_5)&=&f_1+f_2.x_i+f_3.x_i^2+f_4.x_i^3+f_5.x_i^4
\end{eqnarray*}
and the expression $v_i(\underline{s},r_1,r_2)$ denotes $v_i(s_1,s_2,s_3,r_1,r_2)$. Here we have taken $x_i=i$ for $1\leq i\leq 5$.

When combiner requests $k=3$ parties, each party sends its complete share. 
When $d=5$, the combiner downloads the first qudit of each share from all the five parties.

\subsection{Secret recovery for \titlemath{d=5}}
When the combiner accesses all of the five parties, each party sends its first qudit. Thus CC$_n(5)=5$. The qudits with the combiner are given as
\begin{align*}
\sum_{\underline{r}\in\mathbb{F}_7^6}
\begin{array}{l}
\bl{\ket{v_1(\underline{s},r_1,r_2)}}\ket{v_1(0,r_1,r_2,r_3,r_4)}
\ket{v_1(0,0,r_3,r_5,r_6)}
\\\bl{\ket{v_2(\underline{s},r_1,r_2)}}\ket{v_2(0,r_1,r_2,r_3,r_4)}
\ket{v_2(0,0,r_3,r_5,r_6)}
\\\bl{\ket{v_3(\underline{s},r_1,r_2)}}\ket{v_3(0,r_1,r_2,r_3,r_4)}
\ket{v_3(0,0,r_3,r_5,r_6)}
\\\bl{\ket{v_4(\underline{s},r_1,r_2)}}\ket{v_4(0,r_1,r_2,r_3,r_4)}
\ket{v_4(0,0,r_3,r_5,r_6)}
\\\bl{\ket{v_5(\underline{s},r_1,r_2)}}\ket{v_5(0,r_1,r_2,r_3,r_4)}
\ket{v_5(0,0,r_3,r_5,r_6)}
\end{array}
\end{align*}
Applying the operation $U_{V^{-1}}$ on these five qudits, we obtain
\begin{align*}
\bl{\ket{\underline{s}}}\sum_{\underline{r}\in\mathbb{F}_7^6}
\begin{array}{l}
\ket{v_1(0,0,r_1,r_3,r_4)}
\ket{v_1(0,0,r_2,r_5,r_6)}
\\\ket{v_2(0,0,r_1,r_3,r_4)}
\ket{v_2(0,0,r_2,r_5,r_6)}
\\\ket{v_3(0,0,r_1,r_3,r_4)}
\ket{v_3(0,0,r_2,r_5,r_6)}
\\\bl{\ket{r_1}}\ket{v_4(0,0,r_1,r_3,r_4)}
\ket{v_4(0,0,r_2,r_5,r_6)}
\\\bl{\ket{r_2}}\ket{v_5(0,0,r_1,r_3,r_4)}
\ket{v_5(0,0,r_2,r_5,r_6)}
\end{array}
\end{align*}
Here, the three qudits from the first three parties contain the basis state of the secret. Also, these qudits are not entangled with any of the other qudits. Thus, any arbitrary superposition of the basis states can be recovered with the above step.
\subsection{Secret recovery for \titlemath{k=3}}
When the combiner accesses any three parties, the all three qudits from each of the three parties are transmitted to the combiner. Thus CC$_n(3)=9$. Assume that the combiner accesses the first three parties. Then the qudits with the combiner are given as
\begin{align*}
\sum_{\underline{r}\in\mathbb{F}_7^6}
\begin{array}{l}
\bl{\ket{v_1(\underline{s},r_1,r_2)}\ket{v_1(0,0,r_1,r_3,r_4)}
\ket{v_1(0,0,r_2,r_5,r_6)}}
\\\bl{\ket{v_2(\underline{s},r_1,r_2)}\ket{v_2(0,0,r_1,r_3,r_4)}
\ket{v_2(0,0,r_2,r_5,r_6)}}
\\\bl{\ket{v_3(\underline{s},r_1,r_2)}\ket{v_3(0,0,r_1,r_3,r_4)}
\ket{v_3(0,0,r_2,r_5,r_6)}}
\\\ket{v_4(\underline{s},r_1,r_2)}\ket{v_4(0,0,r_1,r_3,r_4)}
\ket{v_4(0,0,r_2,r_5,r_6)}
\\\ket{v_5(\underline{s},r_1,r_2)}\ket{v_5(0,0,r_1,r_3,r_4)}
\ket{v_5(0,0,r_2,r_5,r_6)}
\end{array}
\end{align*}
\begin{enumerate}
\item Apply the operation $U_{K_5}$ on the set of three second qudits and then applying $U_{K_5}$ on the set of third qudits where $K_5$ is the inverse of $V_{[3]}^{[3,5]}$, to obtain
\begin{eqnarray}
\hspace{-0.5cm}
\sum_{\underline{r}\in\mathbb{F}_7^6}
\begin{array}{l}
\bl{\ket{v_1(\underline{s},r_1,r_2)}\ket{r_1}
\ket{r_2}}
\\\bl{\ket{v_2(\underline{s},r_1,r_2)}\ket{r_3}
\ket{r_5}}
\\\bl{\ket{v_3(\underline{s},r_1,r_2)}\ket{r_4}
\ket{r_6}}
\\\ket{v_4(\underline{s},r_1,r_2)}\ket{v_4(0,0,r_1,r_3,r_4)}
\ket{v_4(0,0,r_2,r_5,r_6)}
\\\ket{v_5(\underline{s},r_1,r_2)}\ket{v_5(0,0,r_1,r_3,r_4)}
\ket{v_5(0,0,r_2,r_5,r_6)}
\end{array}\nonumber
\end{eqnarray}
\item Then, apply the following operators.
\begin{enumerate}
\item $L_6\ket{r_2}\ket{v_1(\underline{s},r_1,r_2)}$ to get $\ket{r_2}\ket{v_1(\underline{s},r_1,0)}$
\item $L_5\ket{r_2}\ket{v_2(\underline{s},r_1,r_2)}$ to get $\ket{r_2}\ket{v_2(\underline{s},r_1,0)}$
\item $L_3\ket{r_2}\ket{v_3(\underline{s},r_1,r_2)}$ to get $\ket{r_2}\ket{v_3(\underline{s},r_1,0)}$
\item $L_6\ket{r_1}\ket{v_1(\underline{s},r_1,0)}$ to get $\ket{r_1}\ket{v_1(\underline{s},0,0)}$
\item $L_6\ket{r_1}\ket{v_2(\underline{s},r_1,0)}$ to get $\ket{r_1}\ket{v_2(\underline{s},0,0)}$
\item $L_1\ket{r_1}\ket{v_3(\underline{s},r_1,0)}$ to get $\ket{r_1}\ket{v_3(\underline{s},0,0)}$
\end{enumerate}

Now, we obtain
\begin{eqnarray}
\hspace{-0.5cm}
\sum_{\underline{r}\in\mathbb{F}_7^6}
\begin{array}{l}
\bl{\ket{v_1(\underline{s},0,0)}\ket{r_1}
\ket{r_2}}
\\\bl{\ket{v_2(\underline{s},0,0)}\ket{r_3}
\ket{r_5}}
\\\bl{\ket{v_3(\underline{s},0,0)}\ket{r_4}
\ket{r_6}}
\\\ket{v_4(\underline{s},r_1,r_2)}\ket{v_4(0,0,r_1,r_3,r_4)}
\ket{v_4(0,0,r_2,r_5,r_6)}
\\\ket{v_5(\underline{s},r_1,r_2)}\ket{v_5(0,0,r_1,r_3,r_4)}
\ket{v_5(0,0,r_2,r_5,r_6)}
\end{array}\nonumber
\end{eqnarray}
\item Apply the operation $U_{K_6}$ on the set of three first qudits, where $K_6$ is the inverse of $V_{[3]}^{[3]}$ to obtain
\begin{eqnarray}
\hspace{-0.6cm}
\bl{\ket{\underline{s}}}
\sum_{\underline{r}\in\mathbb{F}_7^6}
\begin{array}{l}
\bl{\ket{r_1}\ket{r_2}}
\\\bl{\ket{r_3}\ket{r_5}}
\\\bl{\ket{r_4}\ket{r_6}}
\\\ket{v_4(\underline{s},r_1,r_2)}\ket{v_4(0,0,r_1,r_3,r_4)}
\ket{v_4(0,0,r_2,r_5,r_6)}
\\\ket{v_5(\underline{s},r_1,r_2)}\ket{v_5(0,0,r_1,r_3,r_4)}
\ket{v_5(0,0,r_2,r_5,r_6)}
\end{array}\nonumber
\end{eqnarray}
%\begin{eqnarray}
%\label{eq:entangled_secret_ap}
%\end{eqnarray}
Here, the three first qudits from the first three parties contain the basis state of the secret. For an equivalent classical secret sharing scheme, the secret recovery would have been complete at this stage. However these three qudits are still entangled with the first qudits from fourth and fifth parties. Thus, any arbitrary superposition of the basis states cannot be recovered at this stage for a quantum secret.
\item Apply the following operators to disentangle the basis state from the rest of the qudits.
\begin{enumerate}
\item $U_{K_7}$ on $\ket{r_2}\ket{r_5}\ket{r_6}$ to get
\\$\ket{r_2}\ket{v_4(0,0,r_2,r_5,r_6)}\ket{v_5(0,0,r_2,r_5,r_6)}$ where
\begin{equation}
K_7=\left[
\begin{tabular}{c}
1 0 0\\\hline
$V_{[4,5]}^{[3,5]}$
\end{tabular}
\right]\nonumber
\end{equation}
\item $U_{K_8}$ on $\ket{r_1}\ket{r_3}\ket{r_4}$ to get
\\$\ket{r_1}\ket{v_4(0,0,r_1,r_3,r_4)}\ket{v_5(0,0,r_1,r_3,r_4)}$ where
\begin{equation}
K_8=\left[
\begin{tabular}{c}
1 0 0\\\hline
$V_{[4,5]}^{[3,5]}$
\end{tabular}
\right]\nonumber
\end{equation}
\item $U_{K_9}$ on $\ket{s_1}\ket{s_2}\ket{s_3}\ket{r_1}\ket{r_2}$ to get
\\$\ket{s_1}\ket{s_2}\ket{s_3}\ket{v_4(\underline{s},r_1,r_2)}\ket{v_5(\underline{s},r_1,r_2)}$ where
\begin{equation}
K_9=\left[
\begin{tabular}{c}
1 0 0 0 0\\
0 1 0 0 0\\
0 0 1 0 0\\\hline
$V_{[4,5]}$
\end{tabular}
\right]\nonumber
\end{equation}
\end{enumerate}
Now, we obtain
\begin{eqnarray}
&&\hspace{-1cm}
\bl{\ket{\underline{s}}}
\sum_{\underline{r}\in\mathbb{F}_7^6}
\begin{array}{l}
\bl{\ket{v_4(\underline{s},r_1,r_2)}\ket{v_5(\underline{s},r_1,r_2)}}
\\\bl{\ket{v_4(0,0,r_1,r_3,r_4)}\ket{v_4(0,0,r_2,r_5,r_6)}}
\\\bl{\ket{v_5(0,0,r_1,r_3,r_4)}\ket{v_5(0,0,r_2,r_5,r_6)}}
\\\ket{v_4(\underline{s},r_1,r_2)}\ket{v_4(0,0,r_1,r_3,r_4)}\ket{v_4(0,0,r_2,r_5,r_6)}
\\\ket{v_5(\underline{s},r_1,r_2)}\ket{v_5(0,0,r_1,r_3,r_4)}\ket{v_5(0,0,r_2,r_5,r_6)}
\end{array}\nonumber
\\&&\hspace{-1cm}
=\bl{\ket{\underline{s}}}
\sum_{\substack{(r_1,r_2,r_3,r_4,\\r_5',r_6')\in\mathbb{F}_7^6}}
\begin{array}{l}
\bl{\ket{v_4(\underline{s},r_1,r_2)}\ket{v_5(\underline{s},r_1,r_2)}}
\\\bl{\ket{v_4(0,0,r_1,r_3,r_4)}\ket{r_5'}}
\\\bl{\ket{v_5(0,0,r_1,r_3,r_4)}\ket{r_6'}}
\\\ket{v_4(\underline{s},r_1,r_2)}\ket{v_4(0,0,r_1,r_3,r_4)}\ket{r_5'}
\\\ket{v_5(\underline{s},r_1,r_2)}\ket{v_5(0,0,r_1,r_3,r_4)}\ket{r_6'}
\end{array}
\label{eq:first-var-replaced}
\end{eqnarray}
\begin{eqnarray}
&&\hspace{-1cm}
=\bl{\ket{\underline{s}}}
\sum_{\substack{(r_1,r_2,r_3',r_4',\\r_5',r_6')\in\mathbb{F}_7^6}}
\begin{array}{l}
\bl{\ket{v_4(\underline{s},r_1,r_2)}\ket{v_5(\underline{s},r_1,r_2)}}
\\\bl{\ket{r_3'}\ket{r_5'}}
\\\bl{\ket{r_4'}\ket{r_6'}}
\\\ket{v_4(\underline{s},r_1,r_2)}\ket{r_3'}\ket{r_5'}
\\\ket{v_5(\underline{s},r_1,r_2)}\ket{r_4'}\ket{r_6'}
\end{array}\nonumber
\\&&\hspace{-1cm}
=\bl{\ket{\underline{s}}}
\sum_{\substack{(r_1',r_2',r_3',r_4',\\r_5',r_6')\in\mathbb{F}_7^6}}
\begin{array}{l}
\bl{\ket{r_1'}\ket{r_2'}}
\\\bl{\ket{r_3'}\ket{r_5'}}
\\\bl{\ket{r_4'}\ket{r_6'}}
\\\ket{r_1'}\ket{r_3'}\ket{r_5'}
\\\ket{r_2'}\ket{r_4'}\ket{r_6'}
\end{array}
%\label{eq:disentangled_secret_ap}
\nonumber
\end{eqnarray}
The variable change in \eqref{eq:first-var-replaced} is possible because independent of $r_1,r_2,r_3,r_4$, the subsystem
\begin{eqnarray}
\sum_{(r_5,r_6)\in\mathbb{F}_7^2}
\begin{array}{l}
\ket{v_4(0,0,r_2,r_5,r_6)}\ket{v_4(0,0,r_2,r_5,r_6)}
\\\ \ \ket{v_5(0,0,r_2,r_5,r_6)}\ket{v_5(0,0,r_2,r_5,r_6)}
\end{array}\nonumber
\end{eqnarray}
gives the uniform superposition
\begin{eqnarray}
\sum_{(r_5',r_6')\in\mathbb{F}_7^2}
\ket{r_5'}\ket{r_5'}\ket{r_6'}\ket{r_6'}.\nonumber
\end{eqnarray}
The succeeding expressions are derived similarly. Now, the secret is disentangled with the rest of the qudits. Thus, any arbitrary superposition of the basis states can be recovered with above steps for $d=3$.
\end{enumerate}

\section{Secret recovery for \titlemath{d=3} in the ((3,5,*)) universal CE-QTS scheme from Section \ref{s:example}}
\label{ap:univ-ceqts-d-3}
Consider the example of $((k=3,n=5,*))$ universal CE-QTS scheme in Section \ref{s:example} with the following parameters.
\begin{subequations}
\begin{gather}
q=11
\\m=3
\\w_1=w_2=\hdots=w_5=3
\\\text{CC}_5(3)=9,\ \text{CC}_5(4)=8,\ \text{CC}_5(5)=5.
\end{gather}
\end{subequations}
The encoding for the scheme is given by the following mapping
\begin{eqnarray}
\label{eq:enc_qudits_3_5_s_ap}
\ket{\underline{s}}\mapsto\sum_{\underline{r}\in\F_{11}^6}\ket{c_{11}c_{12}c_{13}}&&\!\!\ket{c_{21}c_{22}c_{23}}\ket{c_{31}c_{32}c_{33}}
\\[-0.5cm]&&\ \ \ \ \ket{c_{41}c_{42}c_{43}}\ket{c_{51}c_{52}c_{53}}\nonumber
\end{eqnarray}
where $\underline{s}=(s_1,s_2,s_3)$ indicates a basis state of the quantum secret, $\underline{r}=(r_1,r_2,\hdots,r_6)$
and 
$c_{ij} $ is the $(i,j)$th entry of the matrix
\begin{equation}
C=VY.\nonumber
\end{equation}
Here the matrices $V$ and $Y$ are defined as follows. 
\begin{eqnarray*}
V=
\begin{bmatrix}
9&3&4&6&1\\2&9&3&4&6\\8&2&9&3&4\\7&8&2&9&3\\5&7&8&2&9
\end{bmatrix}
\text{\ \ and\ \ \ }
Y=
\left[
\begin{tabular}{ccc}
$s_1$&0&0\\$s_2$&$r_1$&0\\$s_3$&$r_2$&$r_3$\\$r_1$&$r_3$&$r_5$\\$r_2$&$r_4$&$r_6$
\end{tabular}
\right].
\end{eqnarray*}
The encoded state in \eqref{eq:enc_qudits_3_5_s_ap} can also be written as,
\begin{align*}
\sum_{\underline{r}\in\mathbb{F}_{11}^6}
\begin{array}{l}
\ket{v_1(\underline{s},r_1,r_2)}\ket{v_1(0,r_1,r_2,r_3,r_4)}
\ket{v_1(0,0,r_3,r_5,r_6)}
\\\ket{v_2(\underline{s},r_1,r_2)}\ket{v_2(0,r_1,r_2,r_3,r_4)}
\ket{v_2(0,0,r_3,r_5,r_6)}
\\\ket{v_3(\underline{s},r_1,r_2)}\ket{v_3(0,r_1,r_2,r_3,r_4)}
\ket{v_3(0,0,r_3,r_5,r_6)}
\\\ket{v_4(\underline{s},r_1,r_2)}\ket{v_4(0,r_1,r_2,r_3,r_4)}
\ket{v_4(0,0,r_3,r_5,r_6)}
\\\ket{v_5(\underline{s},r_1,r_2)}\ket{v_5(0,r_1,r_2,r_3,r_4)}
\ket{v_5(0,0,r_3,r_5,r_6)}.
\end{array}
\end{align*}
$v_i()$ indicates the expression
\begin{eqnarray*}
v_i(f_1,f_2,f_3,f_4,f_5)=v_{i1}f_1+v_{i2}f_2+v_{i3}f_3+v_{i4}f_4+v_{i5}f_5
\end{eqnarray*}
where $v_{ij}=[V]_{ij}$ and the expression $v_i(\underline{s},r_1,r_2)$ denotes $v_i(s_1,s_2,s_3,r_1,r_2)$. The matrix $V$ is a Cauchy matrix.

When combiner requests $d=5$ parties, they send the first qudit from each of their shares. 
When $d=4$, the combiner downloads the first two qudits of each share of the four parties contacted. 
When $d=3$, the combiner downloads all three qudits of the share of the three parties contacted.
(For clarity, the qudits accessible to the combiner have been highlighted in blue in the description below.)

In the case when $d=3$, each of the three contacted parties sends all three qudits in its share.
\begin{align*}
\sum_{\underline{r}\in\mathbb{F}_{11}^6}
\begin{array}{l}
\bl{\ket{v_1(\underline{s},r_1,r_2)}\ket{v_1(0,r_1,r_2,r_3,r_4)}
\ket{v_1(0,0,r_3,r_5,r_6)}}
\\\bl{\ket{v_2(\underline{s},r_1,r_2)}\ket{v_2(0,r_1,r_2,r_3,r_4)}
\ket{v_2(0,0,r_3,r_5,r_6)}}
\\\bl{\ket{v_3(\underline{s},r_1,r_2)}\ket{v_3(0,r_1,r_2,r_3,r_4)}
\ket{v_3(0,0,r_3,r_5,r_6)}}
\\\ket{v_4(\underline{s},r_1,r_2)}\ket{v_4(0,r_1,r_2,r_3,r_4)}
\ket{v_4(0,0,r_3,r_5,r_6)}
\\\ket{v_5(\underline{s},r_1,r_2)}\ket{v_5(0,r_1,r_2,r_3,r_4)}
\ket{v_5(0,0,r_3,r_5,r_6)}
\end{array}
\end{align*}
\begin{enumerate}
\item Applying the operation $U_{K_5}$ on the set of three third qudits, where $K_5$ is the inverse of $V_{[3]}^{[3,5]}$, we obtain
\begin{eqnarray*}
\hspace{-0.5cm}\sum_{\underline{r}\in\mathbb{F}_{11}^6}
\begin{array}{l}
\bl{\ket{v_1(\underline{s},r_1,r_2)}\ket{v_1(0,r_1,r_2,r_3,r_4)}
\ket{r_3}}
\\\bl{\ket{v_2(\underline{s},r_1,r_2)}\ket{v_2(0,r_1,r_2,r_3,r_4)}
\ket{r_5}}
\\\bl{\ket{v_3(\underline{s},r_1,r_2)}\ket{v_3(0,r_1,r_2,r_3,r_4)}
\ket{r_6}}
\\\ket{v_4(\underline{s},r_1,r_2)}\ket{v_4(0,r_1,r_2,r_3,r_4)}
\ket{v_4(0,0,r_3,r_5,r_6)}
\\\ket{v_5(\underline{s},r_1,r_2)}\ket{v_5(0,r_1,r_2,r_3,r_4)}
\ket{v_5(0,0,r_3,r_5,r_6)}
\end{array}
\end{eqnarray*}
\item Then, on applying the operators $L_5\ket{r_3}\ket{v_1(0,r_1,r_2,r_3,r_4)}$, $L_7\ket{r_3}\ket{v_2(0,r_1,r_2,r_3,r_4)}$ and $L_8\ket{r_3}\ket{v_3(0,r_1,r_2,r_3,r_4)}$,
we obtain
\begin{eqnarray*}
\hspace{-0.5cm}
\sum_{\underline{r}\in\mathbb{F}_{11}^6}
\begin{array}{l}
\bl{\ket{v_1(\underline{s},r_1,r_2)}\ket{v_1(0,r_1,r_2,0,r_4)}
\ket{r_3}}
\\\bl{\ket{v_2(\underline{s},r_1,r_2)}\ket{v_2(0,r_1,r_2,0,r_4)}
\ket{r_5}}
\\\bl{\ket{v_3(\underline{s},r_1,r_2)}\ket{v_3(0,r_1,r_2,0,r_4)}
\ket{r_6}}
\\\ket{v_4(\underline{s},r_1,r_2)}\ket{v_4(0,r_1,r_2,r_3,r_4)}
\ket{v_4(0,0,r_3,r_5,r_6)}
\\\ket{v_5(\underline{s},r_1,r_2)}\ket{v_5(0,r_1,r_2,r_3,r_4)}
\ket{v_5(0,0,r_3,r_5,r_6)}
\end{array}
\end{eqnarray*}
\item Applying the operation $U_{K_6}$ on the set of three second qudits, where $K_6$ is the inverse of $V_{[3]}^{\{2,3,5\}}$, we obtain
\begin{eqnarray*}
\hspace{-0.5cm}\sum_{\underline{r}\in\mathbb{F}_{11}^6}
\begin{array}{l}
\bl{\ket{v_1(\underline{s},r_1,r_2)}\ket{r_1}
\ket{r_3}}
\\\bl{\ket{v_2(\underline{s},r_1,r_2)}\ket{r_2}
\ket{r_5}}
\\\bl{\ket{v_3(\underline{s},r_1,r_2)}\ket{r_4}
\ket{r_6}}
\\\ket{v_4(\underline{s},r_1,r_2)}\ket{v_4(0,r_1,r_2,r_3,r_4)}
\ket{v_4(0,0,r_3,r_5,r_6)}
\\\ket{v_5(\underline{s},r_1,r_2)}\ket{v_5(0,r_1,r_2,r_3,r_4)}
\ket{v_5(0,0,r_3,r_5,r_6)}
\end{array}
\end{eqnarray*}
\item Applying operation $U_{K_7}$ on the qudits $\ket{v_1(\underline{s},r_1,r_2)}$ $\ket{v_2(\underline{s},r_1,r_2)}\ket{v_3(\underline{s},r_1,r_2)}\ket{r_1}\ket{r_2}$ where
\begin{equation*}
K_7=\left[
\begin{tabular}{c}
$V_{[3]}$\\\hline
0 0 0 1 0\\
0 0 0 0 1
\end{tabular}
\right]^{-1}
\end{equation*}
we obtain
\begin{eqnarray*}
\hspace{-0.6cm}\bl{\ket{\underline{s}}}
\sum_{\underline{r}\in\mathbb{F}_{11}^6}
\begin{array}{l}
\bl{\ket{r_1}
\ket{r_3}}
\\\bl{\ket{r_2}
\ket{r_5}}
\\\bl{\ket{r_4}
\ket{r_6}}
\\\ket{v_4(\underline{s},r_1,r_2)}\ket{v_4(0,r_1,r_2,r_3,r_4)}
\ket{v_4(0,0,r_3,r_5,r_6)}
\\\ket{v_5(\underline{s},r_1,r_2)}\ket{v_5(0,r_1,r_2,r_3,r_4)}
\ket{v_5(0,0,r_3,r_5,r_6)}
\end{array}
\end{eqnarray*}
\item After recovering the basis state of the secret, we disentangle it from the rest of qudits by applying suitable operators as follows.
\begin{enumerate}
\item Apply $U_{K_8}$ on $\ket{r_3}\ket{r_5}\ket{r_6}$ to get
\\$\ket{r_3}$ $\ket{v_4(0,0,r_3,r_5,r_6)}\ket{v_5(0,0,r_3,r_5,r_6)}$ where
\begin{equation*}
K_8=\left[
\begin{tabular}{c}
1 0 0\\\hline
$V_{[4,5]}^{[3,5]}$
\end{tabular}
\right].
\end{equation*}
\item Apply $U_{K_9}$ on $\ket{r_1}\ket{r_2}\ket{r_3}\ket{r_4}$ to get \\$\ket{r_1}\ket{r_2}\ket{v_4(0,r_1,r_2,r_3,r_4)}\ket{v_5(0,r_1,r_2,r_3,r_4)}$ where
\begin{equation*}
K_9=\left[
\begin{tabular}{c}
1 0 0 0\\
0 1 0 0\\\hline
$V_{[4,5]}^{[2,5]}$
\end{tabular}
\right].
\end{equation*}
\item Apply $U_{K_{10}}$ on $\ket{s_1}\ket{s_2}\ket{s_3}\ket{r_1}\ket{r_2}$ to get
\\$\ket{s_1}\ket{s_2}\ket{s_3}\ket{v_4(\underline{s},r_1,r_2)}\ket{v_5(\underline{s},r_1,r_2)}$ where
\begin{equation*}
K_{10}=\left[
\begin{tabular}{c}
1 0 0 0 0\\
0 1 0 0 0\\
0 0 1 0 0\\\hline
$V_{[4,5]}$
\end{tabular}
\right].
\end{equation*}
\end{enumerate}
Now, we obtain
\begin{eqnarray*}
&&\hspace{-1cm}
\bl{\ket{\underline{s}}}
\sum_{\underline{r}\in\mathbb{F}_{11}^6}
\begin{array}{l}
\bl{\ket{v_4(\underline{s},r_1,r_2)}\ket{v_4(0,r_1,r_2,r_3,r_4)}}
\\\bl{\ket{v_5(\underline{s},r_1,r_2)}\ket{v_4(0,0,r_3,r_5,r_6)}}
\\\bl{\ket{v_5(0,r_1,r_2,r_3,r_4)}\ket{v_5(0,0,r_3,r_5,r_6)}}
\\\ket{v_4(\underline{s},r_1,r_2)}\ket{v_4(0,r_1,r_2,r_3,r_4)}
\ket{v_4(0,0,r_3,r_5,r_6)}
\\\ket{v_5(\underline{s},r_1,r_2)}\ket{v_5(0,r_1,r_2,r_3,r_4)}
\ket{v_5(0,0,r_3,r_5,r_6)}
\end{array}\nonumber
\\&&\hspace{-1cm}
=\bl{\ket{\underline{s}}}
\sum_{\underline{r}''\in\mathbb{F}_{11}^6}
\begin{array}{l}
\bl{\ket{r_1''}\ket{r_3''}}
\\\bl{\ket{r_2''}\ket{r_5''}}
\\\bl{\ket{r_4'')}\ket{r_6''}}
\\\ket{r_1''}\ket{r_3''}\ket{r_5''}
\\\ket{r_2''}\ket{r_4''}\ket{r_6''}
\end{array}
\end{eqnarray*}
where $\underline{r}''=(r_1'',r_2'',r_3'',r_4'',r_5'',r_6'')$. Now, the secret is disentangled with the rest of the qudits.
\end{enumerate}
Thus, any arbitrary superposition of the basis states can be recovered with above steps for $d=3$.

\medskip
\noindent 
{\em Acknowledgment.} This research was supported by the Department of Science and Technology, Govt. of India, under grant number DST/ICPS/QuST/Theme-3/2019/Q59.

\bibliographystyle{IEEEtran}
%\bibliography{refs-ceqts-trxns}

\end{document}